\documentclass[a4paper,11pt]{article}
\overfullrule3em

\usepackage{lmodern}
\usepackage[utf8]{inputenc}
\usepackage[T1]{fontenc}
\usepackage[USenglish]{babel}
\usepackage{amsmath,amssymb,amsthm}
\usepackage{stmaryrd}

\usepackage{fullpage}

\usepackage{graphicx}
\usepackage{xspace}
\usepackage{enumerate}
\usepackage[labelsep=period,labelfont={sc},textfont=it,margin=\parindent,font=small,format=plain]{caption}
\usepackage[numbers,sort&compress]{natbib}
\usepackage[%
  pdftitle={Minimal and minimum unit circular-arc models},%
  pdfauthor={Francisco J.\ Soulignac and Pablo Terlisky},%
  pdfcreator={},%
  pdfsubject={},%
  pdfkeywords={},%
  colorlinks=true,%
  linkcolor=blue,%
  citecolor=blue,%
  urlcolor=blue]{hyperref}  
\usepackage{doi}
  
\newtheorem{theorem}{Theorem}
\newtheorem{lemma}[theorem]{Lemma}
\newtheorem{observation}[theorem]{Observation}
\newtheorem{corollary}[theorem]{Corollary}

\newcommand{\Figure}{Fig.}
\newcommand{\Figures}{Figs.}

\newcommand{\NP}{\ensuremath{\text{\bf NP}}}
\newcommand{\A}{\ensuremath{\mathcal{A}}}

\newcommand{\C}{\ensuremath{\mathcal{C}}}

\renewcommand{\L}{\ensuremath{\mathcal{L}}}
\newcommand{\M}{\ensuremath{\mathcal{M}}}

\renewcommand{\S}{\ensuremath{\mathcal{S}}}
\newcommand{\T}{\ensuremath{\mathcal{T}}}
\newcommand{\U}{\ensuremath{\mathcal{U}}}

\newcommand{\W}{\ensuremath{\mathcal{W}}}

\DeclareMathOperator{\Ext}{ext}
\DeclareMathOperator{\Jump}{jmp}
\DeclareMathOperator{\Sep}{sep}
\DeclareMathOperator{\Gr}{Gr}

\DeclareMathOperator{\Row}{row}
\DeclareMathOperator{\Ind}{ind}
\DeclareMathOperator{\Rows}{rows}
\DeclareMathOperator{\Col}{col}
\DeclareMathOperator{\Pos}{pos}
\newcommand{\Unit}{\U}
\newcommand{\Desc}{u}
\newcommand{\Circ}{c}
\newcommand{\Len}{\ell}
\newcommand{\Dist}{d}
\newcommand{\BegDist}{d_s}
\newcommand{\Syn}{\S}
\newcommand{\SynInt}{\L}

\newcommand{\Copies}{\kappa}
\newcommand{\Unroll}{\cdot}
\newcommand{\Noses}{\nu}
\newcommand{\Hollows}{\eta}
\newcommand{\Steps}{\sigma}

\newcommand{\Range}[1]{\llbracket #1\rrbracket}

\DeclareMathOperator{\Wg}{wg}

\begin{document}

\title{Minimal and minimum unit circular-arc models}

\author{Francisco J.\ Soulignac\thanks{CONICET}~\thanks{Departamento de Ciencia y Tecnología, Universidad Nacional de Quilmes, Bernal, Argentina.} \and Pablo Terlisky\footnotemark[2]~\thanks{Instituto de Cálculo, FCEN, Universidad de Buenos Aires, Buenos Aires, Argentina.}}

\date{\normalsize\texttt{francisco.soulignac@unq.edu.ar}, \texttt{terlisky@dc.uba.ar}}

\maketitle

\begin{abstract}
  A \emph{proper circular-arc (PCA) model} is a pair $\M = (C, \A)$ where $C$ is a circle and $\A$ is a family of inclusion-free arcs on $C$ in which no two arcs of $\A$ cover $C$.  A PCA model $\Unit = (C,\A)$ is a \emph{$(\Circ, \Len)$-CA} model when $C$ has circumference $\Circ$, all the arcs in $\A$ have length $\Len$, and all the extremes of the arcs in $\A$ are at a distance at least $1$.  If $\Circ \leq \Circ'$ and $\Len \leq \Len'$ for every $(\Circ', \Len')$-CA model equivalent (resp.\ isomorphic) to $\Unit$, then $\Unit$ is \emph{minimal} (resp.\ \emph{minimum}).  In this article we prove that every PCA model is isomorphic to a minimum model.  Our main tool is a new characterization of those PCA models that are equivalent to $(\Circ,\Len)$-CA models, that allows us to conclude that $\Circ$ and $\Len$ are integer when $\Unit$ is minimal.  As a consequence, we obtain an $O(n^3)$ time and $O(n^2)$ space algorithm to solve the minimal representation problem, while we prove that the minimum representation problem is \NP-complete.
  
\end{abstract}

\section{Introduction}

The last decade saw an increasing research on numerical problems for \emph{unit interval (UIG)} and \emph{unit circular-arc (UCA)} models~\cite{CostaDantasSankoffXuJBCS2012,DuranFernandezGrippoSouzaSzwarcfiterENiDM2015,KlavikKratochvilOtachiRutterSaitohSaumellVyskocilA2017,LinSoulignacSzwarcfiter2009,LinSzwarcfiterSJDM2008,SoulignacJGAA2017,SoulignacJGAA2017a}.  In these problems we are given a UCA (or UIG) model $\M$ and we have to find UCA (or UIG) model $\Unit$, related to $\M$, that satisfies certain numerical constraints.  Here we consider two numerical problems, whose constraints ask to minimize the circumference of the circle and lengths of the arcs of $\Unit$.  To define these problems, we require some terminology that will be used in the remaining of this article. 

\paragraph{Statement of the problems.}

A \emph{proper circular-arc} (PCA) model $\M$ is a pair $(C, \A)$, where $C$ is a circle and $\A$ is a finite family of inclusion-free arcs of $C$ in which no pair of arcs in $\A$ cover $C$.  If $s, t$ are points of $C$, then $(s, t)$ is the open arc of $C$ that goes from $s$ to $t$ in a clockwise traversal of $C$, while $|s,t|$ is the length of $(s,t)$.  Each arc $A = (s,t) \in \A$ is described by its \emph{extremes} $s(A) = s$ and $t(A) = t$.  The \emph{extremes} of $\M$ are those extremes of the arcs in $\A$.  An ordered pair of extremes $e_1e_2$ of $\M$ is \emph{consecutive} when $\M$ has no extremes in $(e_1, e_2)$.  We assume $C$ has a special point $0$ such that $p = |0,p|$ for every point $p \in C$.  We classify the arcs of $\A$ as being \emph{external} or \emph{internal} according to whether $A \cup \{t(A)\}$ contains $0$ or not, respectively.  For $A_1, A_2 \in \A$, we write $A_1 < A_2$ to mean that $s(A_1)$ appears before $s(A_2)$ in a clockwise traversal of $C$ from $0$.  

A \emph{unit circular-arc} (UCA) model is a PCA model $\M$ whose arcs all have the same length $\Len$.  If $|e,e'| \geq 1$ for every pair of consecutive extremes $e$ and $e'$, then we refer to $\M$ as being a \emph{$(|C|, \Len)$-CA} model.  In this work, \emph{proper interval (PIG)} and \emph{unit interval (UIG)} models correspond to those PCA and UCA models that have no external arcs, respectively.

Two ingredients are required to define a numerical problem: the relation between the input and output, and the numerical constraint.  Equivalence and isomorphism are the relations that we consider in this article (\Figure~\ref{fig:models}).  Two PCA models $\M = (C, \A)$ and $\M' = (C', \A')$ are \emph{equivalent} when their extremes appear in the same order in the traversals of $C$ and $C'$ from their respective $0$ points, while $\M$ and $\M'$ are \emph{isomorphic} when the intersection graphs of $\A$ and $\A'$ are isomorphic.  Formally, $\M$ and $\M'$ are equivalent (resp.\ isomorphic) if there exists a bijection $f\colon\A \to \A'$ such that $e(f(A)) < e'(f(B))$ (resp.\ $f(A) \cap f(B) \neq \emptyset$) if and only if $e(A) < e'(B)$ (resp.\ $A \cap B \neq \emptyset$), for $e, e' \in \{s,t\}$. 

\begin{figure}[t!]
  \mbox{} \hfill \includegraphics{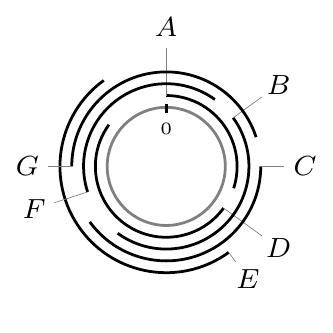} \hfill \includegraphics{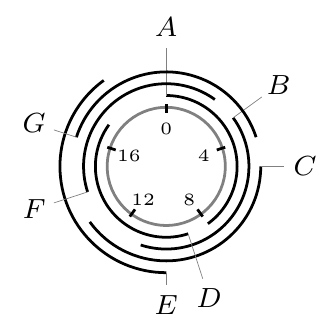} \hfill \includegraphics{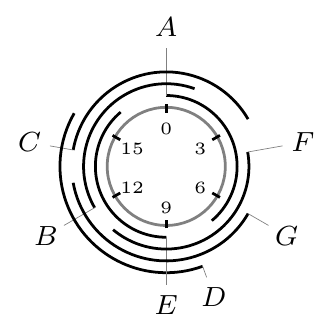} \hfill\mbox{} 
  \caption{From left to right: a PCA model $\M$, a minimal $(20,8)$-CA model equivalent to $\M$, and a minimum $(18,7)$-CA model isomorphic to $\M$.}\label{fig:models}
\end{figure}

Regarding the numerical constraints, our goal is to \textbf{simultaneously} minimize the circumference of the circle and the length of the arcs (\Figure~\ref{fig:models}).  Let $\Unit$ be a $(\Circ, \Len)$-CA model.  Formally, $\Unit$ is \emph{minimal} (resp.\ \emph{minimum}) when $\Len \leq \Len'$ and $\Circ \leq \Circ'$ for every $(\Circ', \Len')$-CA model equivalent (resp.\ isomorphic) to $\Unit$.  The \emph{minimal} (resp.\ \emph{minimum}) \emph{representation problem} asks to find a minimal (resp.\ minimum) UCA model equivalent (resp.\ isomorphic) to an input UCA model.

\paragraph{Brief history of the problems.}

As described in~\cite{SoulignacJGAA2017}, the motivations to study the minimal representation problems, both on UIG and UCA models, date back to c.~1950 at least, and thus predate the notions of UIG and UCA graphs.  The formal definition of minimal models appeared in 1990, when Pirlot~\cite{PirlotTaD1990} proved that every UIG model $\M$ is equivalent to some minimal $(\Circ, \Len)$-CA model, and that $\Circ$ and $\Len$ are integer values.  Pirlot's work yields an $O(n^2)$ time algorithm to decide if $\M$ is equivalent to a $(\Circ, \Len)$-CA model, when $\Circ$ and $\Len$ are given as input, that can be used to compute a minimal representation in $O(n^2\log n)$ time.

The main tool devised by Pirlot is a new representation of PIG models, called \emph{synthetic} graphs.  As observed by Mitas~\cite{Mitas1994}, synthetic graphs admit peculiar plane drawings that provide a framework to prove different properties with simple geometrical arguments.  In particular, Mitas uses these drawings to solve the minimal representation problem in $O(n^2)$ time and $O(n)$ space.\footnote{\textbf{N.B.} Even though Mitas' original algorithm is linear, it has a mistake and the correct version requires quadratic time~\cite{SoulignacJGAA2017a}.} The textbook~\cite{PirlotVincke1997} devotes a chapter to Pirlot's and Mitas' works, while it provides other reasons for studying the minimal representation problem.  

Recently, Klav\'\i{}k et al.~\cite{KlavikKratochvilOtachiRutterSaitohSaumellVyskocilA2017} rediscovered synthetic graphs while dealing with the \emph{bounded representation problem} on UIG models, while Soulignac~\cite{SoulignacJGAA2017,SoulignacJGAA2017a} extended synthetic graphs to UCA models.  As part of his work, Soulignac proves that every UCA model $\M$ is equivalent to some minimal $(\Circ, \Len)$-CA model $\Unit$ that, under the unproved assumption that $\Circ$ and $\Len$ are integer values, can be computed in $O(n^4\log n)$ time.  Besides conjecturing that $\Circ$ and $\Len$ must be integer, Soulignac asks for an efficient algorithm to solve the minimum representation problem, strengthening the open problems of \textbf{independently} computing the minimum arc and circle lengths of a UCA graph, reported by Lin and Szwarcfiter~\cite{LinSzwarcfiterSJDM2008}.  As noted by Soulignac, the minimum and minimal problems coincide for UIG models, but they differ in the UCA case~(\Figure~\ref{fig:models}).

We refer to~\cite{SoulignacJGAA2017,SoulignacJGAA2017a} for a deeper and up-to-date overview of these and other representation problems on UIG and UCA models.

\paragraph{Our contributions.}

\begin{figure}[t!]
  \centering
  \begin{tabular}{c@{\hspace{1.2cm}}c@{\hspace{1.2cm}}c}
    \includegraphics{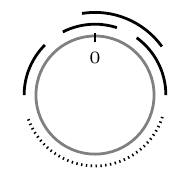} & \includegraphics{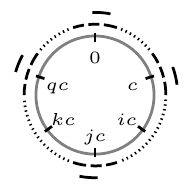} & \includegraphics{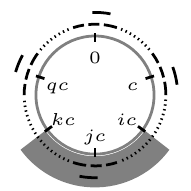} \\
    $\M$ & $\Copies \Unroll \M$ & Segment of $\Copies \Unroll \M$
  \end{tabular}
  \caption{The loop unrolling technique; $\kappa' = \kappa-1$.}\label{fig:model unrolling}
\end{figure}

We prove that $\Circ$ and $\Len$ are integer values when $\M$ is a minimal $(\Circ, \Len)$-CA model, and that every UCA model is isomorphic to some minimum UCA model.  Then, we devise an $O(n^3)$ time and $O(n^2)$ space algorithm for the minimal representation problem, while we prove that the minimum representation problem is \NP-complete.

From a theoretical point of view, our main contribution is a new characterization of the family $\mathbb{M}$ of PCA models that have equivalent UCA models.  As discussed in Section~\ref{sec:characterization}, our characterization simplifies the criterion for recognizing if $\M \in \mathbb{M}$ given by Tucker~\cite{TuckerDM1974}.  From a technical point of view, we apply a simple ``loop unrolling'' technique similar to that used for computer programs (\Figure~\ref{fig:model unrolling}).  Loosely speaking, we replicate $\Copies$ times the arcs of a PCA model $\M$.  As it turns out, we can determine if $\M \in \mathbb{M}$ by looking only at a segment of the model $\Copies \Unroll\M$ so obtained.  Moreover, the information in this segment, which is a UIG model, is enough to determine the minimum $\Circ$ and $\Len$ for which $\M \in \mathbb{M}$ is equivalent to a $(\Circ, \Len)$-CA model.  Loop unrolling is a natural and old technique that, not surprisingly, has already been applied to circular-arc models (e.g.~\cite{WerraEisenbeisLelaitMarmolDAM1999}).

\subsection{Preliminaries}

This section describes the remaining non-standard definitions that we use throughout the article.  

For $m < n$, we write $\Range{m,n} = [m,n) \cap \mathbb{N}$ and $\Range{n} = \Range{0,n}$.  When $S$ is a set with $|S| = n$, we use $\Range{S}$ to denote $\Range{n}$.  

A \emph{$q$-digraph} is a $(q+1)$-tuple $D = (V, E_0, \ldots, E_{q-1})$ such that $(V, E_i)$ is a digraph that can contain loops but not multiple edges, for $i \in \Range{q}$.  We write $V(D) = V$ and $E(D) = \bigcup_{i\in\Range{q}} E_i$ to denote the set of \emph{vertices} and bag of \emph{edges} of $D$, respectively, and $n = |V(D)|$ and $m = |E(D)|$.  For any pair $u,v \in V(D)$, we interchangeably write $uv$ or $u \to v$ to denote the ordered pair $(u,v)$.  In some occasions we may refer to $uv$ as being an edge \emph{from} (resp.\ \emph{starting at}) $u$ \emph{to} (resp.\ \emph{ending at}) $v$, regardless of whether $uv \in E(D)$.  

A walk $W$ in a $q$-digraph $D$ is a sequence of edges $v_0v_1, v_1v_2 \ldots, v_{k-1}v_k$ of $G$; walk $W$ goes \emph{from} (or \emph{begins at}) $v_0$ \emph{to} (or \emph{ends at}) $v_k$.  We say that $W$ is a \emph{circuit} when $v_k = v_0$, that $W$ is a \emph{path} when $v_i \neq v_j$ for every $0 \leq i < j \leq k$, and that $W$ is a \emph{cycle} when it is a circuit and $v_0v_1, \ldots, v_{k-2}v_{k-1}$ is a path.  If $D$ contains no cycles, then $D$ is \emph{acyclic}.  For the sake of notation, we could say that $W$ is a \emph{circuit} when $v_0 \neq v_k$; this means that $W, v_kv_0$ is a circuit.  Moreover, we may write that a sequence of vertices $v_0, \ldots, v_k$ is a \emph{walk} of $D$ to express that some sequence of edges $v_0v_1, \ldots, v_{k-1}v_k$ is a walk of $D$.  Both conventions are ambiguous, as there could be $q$ edges from $v_i$ to $v_{i+1}$ (or from $v_k$ to $v_0$ in the former case).  In general, the edge represented by $v_iv_{i+1}$ is clear by context; if not, then $v_iv_{i+1}$ refers to any of the edges from $v_i$ to $v_{i+1}$.  

An \emph{edge weighing}, or simply a \emph{weighing}, of a $q$-digraph $D$ is a function $w\colon E(D) \to \mathbb{R}$.  The value $w(uv)$ is referred to as the \emph{weight} of $uv$ (with respect to $w$).  For any bag of edges $E$, the \emph{weight} of $E$ (with respect to an edge weighing $w$) is $w(E) = \sum_{uv \in E}w(uv)$.  

Recall that a PCA model $\M = (C,\A)$ is a $(\Circ,\Len)$-CA model when: 1.\ $|C| = \Circ$; 2.\ all the arcs in $\A$ have length $\Len$; and 3.\ $|e,e'| \geq \Dist = 1$ for every pair of consecutive extremes $e$ and $e'$.  Clearly, if we let $\BegDist = 0$, then 4.\ $|s(A), s(A')| \geq \Dist + \BegDist = 1$ for ever pair of arcs $A$ and $A'$.  Although our arbitrary choices for $\Dist$ and $\BegDist$ may seem natural, in some applications it is better to allow $\Dist$ and $\BegDist$ to take different values~\cite{KlavikKratochvilOtachiRutterSaitohSaumellVyskocilA2017,SoulignacJGAA2017a}.  For this reason, we say that a tuple $\Desc = (\Circ, \Len, \Dist, \BegDist)$ is a \emph{UCA descriptor} when $\Circ,\Len,\Dist \in \mathbb{R}_{>0}$ and $\BegDist \in\mathbb{R}_{\geq0}$, while $\M$ is a $\Desc$-CA model when it satisfies 1--4 for the values in $\Desc$.  For the sake of notation, we may also say use a pair $(\Circ, \Len)$ in place of a UCA descriptor; in such cases, $\Dist = 1$ and $\BegDist=0$.

Our new terminology allows for a better description of what a minimal model is.  For a UCA descriptor $\Desc = (\Circ, \Len, \Dist, \BegDist)$, say that a $\Desc$-CA model $\Unit$ is \emph{$(\Dist, \BegDist)$-minimal} (resp.\ \emph{$(\Dist, \BegDist)$-minimum}) when $\Len \leq \Len'$ and $\Circ \leq \Circ'$ for every $(\Circ', \Len', \Dist, \BegDist)$-CA model equivalent (resp.\ isomorphic) to $\Unit$.  We omit the parameters for the special case in which $\Dist=1$ and $\BegDist =0$.  The following non-trivial theorems will be taken for granted in the rest of the article.

\begin{theorem}[\cite{PirlotTaD1990,SoulignacJGAA2017}]
  Every UCA (resp.\ UIG) model is equivalent to some $(\Dist, \BegDist)$-minimal UCA model, for all $\Dist \in \mathbb{R}_{>0}$ and $\BegDist \in \mathbb{R}_{\geq0}$.
\end{theorem}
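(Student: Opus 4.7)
The plan is to recast the existence of a $(\Dist,\BegDist)$-minimal model as a feasibility question for an explicit linear system, and then to exhibit, inside its feasible region, a coordinate-wise smallest pair $(\Circ^{\ast},\Len^{\ast})$. Let $\M=(C,\A)$ have $2n$ extremes listed in clockwise order from $0$.  Any equivalent $(\Circ,\Len,\Dist,\BegDist)$-CA model corresponds to an assignment of positions $p_0<\cdots<p_{2n-1}$ together with $\Circ,\Len$ satisfying: consecutive separation $p_{i+1}-p_i\geq\Dist$ (with wrap-around $\Circ-p_{2n-1}+p_0\geq\Dist$); start-separation $\geq\Dist+\BegDist$ for consecutive arc starts; and the arc-length equations $p_{t(A)}-p_{s(A)}=\Len$ when $A$ is internal and $p_{t(A)}-p_{s(A)}=\Len-\Circ$ when $A$ is external.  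All constraints are linear in $(p_0,\ldots,p_{2n-1},\Circ,\Len)$.

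Following the approach of Pirlot (and of Soulignac in the circular setting), I would encode these constraints as the \emph{synthetic graph} $\Syn$ of $\M$, whose vertices are the extremes and whose edges carry the above (in)equalities with weights that are linear functions of $\Dist,\BegDist,\Len,\Circ$.  A standard potential/LP-duality argument then shows that a realising position vector exists for given $(\Circ,\Len)$ iff every cycle of $\Syn$ has nonnegative total weight.  Consequently the projection $F\subseteq\mathbb{R}_{>0}^2$ of the feasible set onto the $(\Circ,\Len)$-plane is cut out by a finite list of nonstrict linear inequalities, one per simple cycle of $\Syn$, and is in particular a closed convex polyhedron; by hypothesis $F\neq\emptyset$.

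The crux is then a \emph{stability} statement: each cycle inequality can be written as $a\Circ+b\Len\geq c$ with $a\geq0$, $b\geq 0$, and $c\geq 0$ a nonnegative combination of $\Dist$ and $\BegDist$.  Once this sign structure is in hand, $F$ is closed under coordinate-wise minima, so that, being nonempty, it has a unique least element $(\Circ^{\ast},\Len^{\ast})$, which yields the required minimal model.  The main obstacle is the sign structure itself, because a cycle of $\Syn$ may traverse external arcs (each contributing $-\Circ$) as well as forward separations that wrap around $0$ (each contributing $+\Circ$), and one must show that the two contributions balance with the correct sign.  The way I would attack this is through the loop-unrolling construction $\Copies\Unroll\M$ of \Figure~\ref{fig:model unrolling}: for $\Copies$ sufficiently large, every cycle of $\Syn$ lifts to a path in the essentially acyclic synthetic graph of a segment of $\Copies\Unroll\M$, on which Pirlot's original UIG analysis applies directly and gives the required nonnegativity of the coefficients; transferring the conclusion back to $\M$ then delivers the stability lemma and hence the theorem.
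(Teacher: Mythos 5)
The paper does not prove this statement: it is imported verbatim from Pirlot (for the UIG case) and Soulignac (for the UCA case), and the text explicitly says these ``non-trivial theorems will be taken for granted.'' So there is no in-paper proof to compare against, but your sketch can still be checked on its own merits, and it has a genuine gap at the step you yourself flag as the crux.

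The sign structure you postulate is false. With the paper's weighing~\eqref{eq:sep}, a cycle $\W$ contributes the constraint $\Sep_\Desc(\W)\leq 0$, which rearranges to $-\Ext(\W)\,\Circ - \Jump(\W)\,\Len \geq \Dist|\W|+\BegDist\Steps(\W)$, i.e.\ $a=-\Ext(\W)$ and $b=-\Jump(\W)$ with $c\geq 0$. The right-hand side $c$ is indeed nonnegative, but $a$ and $b$ do \emph{not} both have the claimed sign in general. A greedy nose cycle $\W_N$ has $\Ext(\W_N)<0$ but $\Jump(\W_N)>0$, giving $a>0$ and $b<0$; symmetrically a greedy hollow cycle $\W_H$ has $\Ext(\W_H)>0$ and $\Jump(\W_H)<0$, giving $a<0$ and $b>0$. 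So the feasible polyhedron $F$ in the $(\Circ,\Len)$-plane is cut out by half-planes of both ``northwest'' and ``southeast'' orientation, and such a polyhedron is \emph{not} automatically closed under coordinate-wise minima (e.g.\ $\{\Circ+\Len\geq 2,\ \Circ-\Len\geq -1,\ \Len-\Circ\geq -1\}$ contains $(1.5,0.5)$ and $(0.5,1.5)$ but not $(0.5,0.5)$). That $F$ nevertheless has a least element is exactly the content of the theorem, and it hinges on a genuinely combinatorial fact about $\Syn$ — the crossing property of Theorem~\ref{thm:crossing cycles}, that circuits with opposite signs of $\Ext$ must share a vertex — which lets one combine a nose cycle and a hollow cycle into a circuit with $\Ext=0$ and $\Jump<0$, as in Lemma~\ref{lemma:minimal conditions}. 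Your LP framing simply has no access to this.

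The proposed loop-unrolling patch does not close the gap either. In $\Copies\Unroll\Syn$, a cycle of $\Syn$ with $\Ext\neq 0$ does not lift to a cycle of the unrolled graph — its copies are walks that ascend or descend between different rows (this is precisely how $\T_N$ and $\T_H$ are used in the proof of Theorem~\ref{thm:crossing cycles}). Thus the cycles visible in a segment of $\Copies\Unroll\M$ are exactly the copies of circuits of $\Syn$ with $\Ext=0$, and Pirlot's UIG analysis of that segment only recovers the $\Ext=0$ constraints, which bound $\Len$ but say nothing about $\Circ$. The constraints that actually determine the minimum $\Circ$ (those with $\Ext\neq 0$) are invisible in the unrolled segment, so the ``transfer back to $\M$'' cannot deliver the stability statement you need. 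In short, the coordinate-wise-minimum step requires the crossing/combination machinery of synthetic graphs, not just the LP feasibility-region picture plus unrolling.
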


\begin{theorem}[\cite{PirlotTaD1990}]
  If a $(\Circ, \Len, \Dist, \BegDist)$-CA model $\Unit$ with no external arcs is $(\Dist, \BegDist)$-minimal, then $\Circ$ and $\Len$ are integer combinations of $\Dist$ and $\BegDist$.
\end{theorem}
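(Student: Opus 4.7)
The plan is to reduce to the UIG case and apply a difference-constraint argument of the kind pioneered by Pirlot. Since $\Unit$ has no external arcs, I would cut the circle at $0$ and view $\Unit$ as a proper interval model on a line, placing the left extremes at positions $x_1 < x_2 < \ldots < x_n$ ordered by $<$. I would also introduce a virtual $x_{n+1} = x_1 + \Circ$ so that the wrap-around separation between the last and first extremes becomes a single inequality on the line.

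The assertion that the combinatorial pattern of $\Unit$ is realized by some $(\Circ, \Len, \Dist, \BegDist)$-CA model then becomes a system of linear inequalities in the variables $x_1, \ldots, x_{n+1}$, $\Len$, and $\Circ$, whose left-hand sides have $\pm 1$ coefficients and whose right-hand sides are drawn from $\{\Dist+\BegDist, \Len-\Dist, \Len+\Dist-\BegDist, \ldots\}$. Consecutive-separation and properness give lower bounds between neighbouring $x_i$; each intersecting pair $A_i, A_j$ with $i<j$ gives an upper bound $x_j - x_i \leq \Len-\Dist$; and each disjoint pair gives a lower bound $x_j - x_i \geq \Len+\Dist-\BegDist$. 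The wrap-around $x_{n+1} - x_1 = \Circ$ links $\Circ$ into the system.

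Next I would encode this as a weighted digraph $D$ in the standard Bellman--Ford fashion, so that feasibility of the system is equivalent to $D$ having no negative cycle. Fixing the combinatorial pattern and letting $\Len$ shrink, the $(\Dist,\BegDist)$-minimal arc length is precisely the infimum of $\Len$ for which $D$ has no negative cycle; at this infimum some cycle of $D$ is \emph{tight} (weight exactly zero). Since every edge weight of $D$ is a $\mathbb{Z}$-linear combination of $\Dist$, $\BegDist$, and $\Len$, the weight of any cycle takes the form $a\Dist + b\BegDist + c\Len$ for integers $a,b,c$ that depend only on the multiset of edge types used; setting this to $0$ gives $c\Len = -a\Dist - b\BegDist$. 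Minimising $\Circ$ with $\Len$ now fixed yields an analogous equation for $\Circ$.

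The main obstacle is ensuring that the tight cycle can be chosen so that $c$ divides both $a$ and $b$, so that $\Len$ is an \emph{integer} combination of $\Dist$ and $\BegDist$ rather than merely a rational one. My plan for this step is a structural case analysis on the kinds of edges in $D$ (the ``noses'' and ``hollows'' of Pirlot's synthetic graph plus the wrap-around edge): by extremality of the chosen tight cycle, one shows that the edges can be grouped so that $c \in \{+1,-1\}$, after which the integrality of $\Len$ is immediate. Re-running the argument on the $\Circ$-inequality with $\Len$ fixed at its minimum handles $\Circ$ in the same way. The delicate extremality argument is where the real work of the theorem lies, and is why the result is genuinely non-trivial even though its LP skeleton is very clean.
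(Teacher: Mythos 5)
Your difference-constraint/Bellman--Ford/tight-cycle strategy is in fact Pirlot's, which is where the paper attributes this theorem (the paper does not reprove it; it instead proves Theorem~\ref{thm:len and circ are integer}, a generalization without the no-external-arcs hypothesis, using loop unrolling to reduce to a situation resembling yours). The reduction to the line and the introduction of $x_{n+1} = x_1 + \Circ$ are fine, and the observation that a tight cycle yields $a\Dist + b\BegDist + c\Len = 0$ is also correct.

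The gap is at the step you yourself flag as the real content. You claim that ``by extremality of the chosen tight cycle, one shows that the edges can be grouped so that $c \in \{+1,-1\}$.'' This mislocates the argument, and as written it is an unproved assertion that your sketch cannot close. The coefficient of $\Len$ in a cycle's weight is $\Jump(\W) = \Noses(\W) - \Hollows(\W)$, and the fact that $\Jump(\W) = -1$ holds for \emph{every} internal cycle of the synthetic graph (Corollary~\ref{cor:one more hollow}), tight or not; it is a universal topological property proved via Mitas' plane drawing (a Jordan-curve argument), not a consequence of choosing the tight cycle extremally. In the no-external-arcs case the only external edge of $\Syn$ is the single wrap-around step, so a cycle that omits it is internal and automatically has $\Len$-coefficient $-1$, while a cycle that traverses it exactly once automatically has $\Circ$-coefficient $-1$; these two give the two integrality claims in sequence. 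Without this counting lemma (or an equivalent structural argument) the claim that $c=\pm1$ is a hole, and ``extremality'' will not fill it. A second, smaller point: minimizing $\Len$ first and then $\Circ$ is legitimate only because a $(\Dist,\BegDist)$-minimal model is already known to exist (the first theorem the paper cites from Pirlot); you are using that existence implicitly and should say so.
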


\section{Synthetic Graphs}
\label{sec:synthetic graph}

This section introduces synthetic graphs, their associated weighing $\Sep$, and Mitas' drawings. The presentation summarizes the features that we require in this work; for motivations and a thorough explanation of its inception, we refer to~\cite{Mitas1994,PirlotVincke1997,PirlotTaD1990,SoulignacJGAA2017,SoulignacJGAA2017a}.

Let $\M = (C, \A)$ be a PCA model with arcs $A_0 < \ldots < A_{n-1}$.  The \emph{synthetic graph} of $\M$ is the $3$-digraph $\Syn(\M)$ (\Figure~\ref{fig:synthetic graph}(b)) that has a vertex $v(A_i)$ for each $A_i \in \A$ and whose bag of edges is $E_\Steps \cup E_\Noses \cup E_\Hollows$, where:
\begin{itemize}
 \item $E_\Steps = \{v(A_i) \to v(A_{i+1}) \mid i \in \Range{n}\} \cup \{v(A_{n-1}) \to v(A_0)\}$,
 \item $E_\Noses = \{v(A_i) \to v(A_j) \mid t(A_i)s(A_j) \mbox{ are consecutive in } \M\}$, and
 \item $E_\Hollows = \{v(A_i) \to v(A_j) \mid s(A_i)t(A_j) \mbox{ are consecutive in } \M\}$.
\end{itemize}
The edges in $E_\Steps$, $E_\Noses$, and $E_\Hollows$ are the \emph{steps}, \emph{noses}, and \emph{hollows} of $\Syn(\M)$, respectively.  We drop the parameter $\M$ from $\Syn$ when no ambiguities are possible, and we implicitly consider the definitions on $\Syn$ as being valid on $\M$, and vice versa, when no confusions are possible.  Moreover, we regard the arcs of $\M$ as being the vertices of $\Syn$, thus we say that $A_i \to A_j$ is a nose instead of writing that $v(A_i) \to v(A_j)$ is a nose.  

\begin{figure}[t!]%
  \centering
  \begin{tabular}{c@{\hspace{1cm}}c@{\hspace{1cm}}c}
    \includegraphics{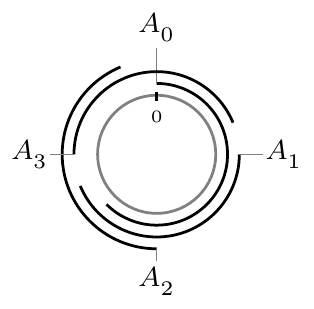} & \includegraphics{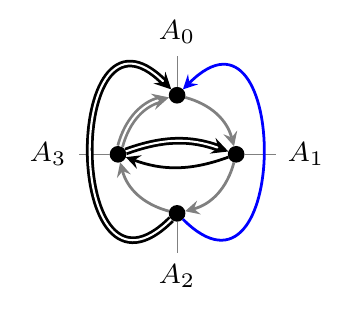} & \includegraphics{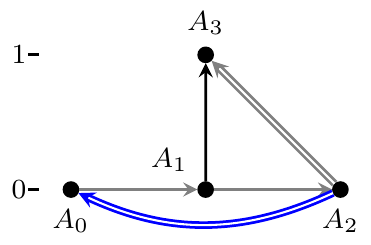} \\
    (a) & (b) & (c)
  \end{tabular}
  \caption{(a) A PCA model $\M$ with arcs $A_0 < A_1 <A_2<A_3$, (b) its synthetic graph $\Syn$, and (c) Mitas' drawing of $\Syn$ with backward edges and row numbers.  Black, blue, and gray lines represent noses, hollows, and steps, respectively, while double lines represent external (b) and backward (c) edges.}\label{fig:synthetic graph}
\end{figure}

The edges of $\Syn$ are classified into internal or external according to the way they interact with $0$ (\Figure~\ref{fig:synthetic graph}(b)).  A step (resp.\ nose) $A_i \to A_j$ is \emph{internal} (resp.\ \emph{external}) if and only if $(s(A_i), s(A_j))$ is internal (resp.\ external), while a hollow $A_i \to A_j$ is \emph{internal} (resp.\ \emph{external}) if and only if $(s(A_j), s(A_i))$ is internal (resp.\ external).  

Each UCA descriptor $\Desc = (\Circ, \Len, \Dist, \BegDist)$ implies a weighing $\Sep_{\Desc}$ of the edges of $\Syn$ whose purpose is to indicate how far or close $s(A_i)$ and $s(A_j)$ must be in any $\Desc$-CA model equivalent to $\M$. For each edge $A_i \to A_j$ of $\Syn$, let $q_{ij} \in \{0,1\}$ be equal to $0$ if and only if $A_i \to A_j$ is internal, and define
\begin{displaymath}
 \Sep_{\Desc}(A_i \to A_j) =
  \begin{cases}
   \Dist + \BegDist - \Circ q_{ij} & \mbox{if $A_i \to A_j$ is a step} \\
   \Dist + \Len - \Circ q_{ij} & \mbox{if $A_i \to A_j$ is a nose, and} \\
   \Dist - \Len + \Circ q_{ij} & \mbox{if $A_i \to A_j$ is a hollow.}
  \end{cases} 
\end{displaymath}

Let $\Noses(\W)$, $\Hollows(\W)$, and $\Steps(\W)$ (resp.\ $\Noses_{\Ext}(\W)$, $\Hollows_{\Ext}(\W)$, and $\Steps_{\Ext}(\W)$) be the number of (resp.\ external) noses, hollows, and steps of a walk $\W$, respectively.  Recall that, viewing $\W$ as a bag of arcs, its weight is $\Sep_{\Desc}(\W) = \sum_{A_i \to A_j \in \W}\Sep_{\Desc}(A_i \to A_j)$.  Then, 
\begin{align}
  \Sep_{\Desc}(\W) &= \Len\Jump(\W) + \Circ\Ext(\W) + \Dist|\W| + \BegDist\Steps(\W) \mbox{,} \label{eq:sep}
\end{align}
where $\Jump(\W) = \Noses(\W) - \Hollows(\W)$, and $\Ext(\W) = \Hollows_{\Ext}(\W) - \Noses_{\Ext}(\W) - \Steps_{\Ext}(\W)$.

\begin{theorem}[\cite{PirlotTaD1990,SoulignacJGAA2017,SoulignacJGAA2017a}]\label{thm:separation constraints}
 A PCA model\/ $\M$ is equivalent to a $\Desc$-CA model if and only if $\Sep_\Desc(\W) \leq 0$ for every cycle $\W$ of $\Syn$. Furthermore, there exists an $O(n^2)$ time and $O(n)$ space algorithm that, given a UCA descriptor $\Desc$, outputs either a $\Desc$-CA model equivalent to $\M$ or a cycle $\W$ of $\Syn$ with $\Sep_\Desc(\W) > 0$.
\end{theorem}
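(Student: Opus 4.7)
The plan is to translate the theorem into a statement about a system of difference constraints on the start-positions of the arcs, which can then be solved by a Bellman--Ford-style longest-path procedure. The bridge is the observation that each edge $A_i \to A_j$ of $\Syn$ records exactly one linear inequality
\[
s(A_j) - s(A_i) \;\geq\; \Sep_\Desc(A_i \to A_j),
\]
where $s(A_k) \in [0,\Circ)$ is regarded as a real number. I would verify this edge-by-edge. An internal step requires two consecutive arc-starts to be at distance $\geq \Dist + \BegDist$, whereas an external step replaces this by the corresponding wrap-around distance, turning the right-hand side into $\Dist + \BegDist - \Circ$. An internal nose $A_i \to A_j$ encodes $s(A_j) - t(A_i) \geq \Dist$, which becomes $s(A_j) - s(A_i) \geq \Dist + \Len$ after substituting $t(A_i) = s(A_i) + \Len$; the external case subtracts an additional $\Circ$. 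Internal and external hollows similarly encode $t(A_j) - s(A_i) \geq \Dist$, yielding the values $\Dist - \Len$ and $\Dist - \Len + \Circ$, respectively. These are precisely the values of $\Sep_\Desc$ in the statement.

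For necessity, if $\M$ is equivalent to a $\Desc$-CA model $\Unit$, then all the inequalities above hold in $\Unit$. Summing along any cycle $\W$ of $\Syn$ telescopes the left-hand sides to $0$, leaving $\Sep_\Desc(\W) \leq 0$.

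For sufficiency, assume every cycle of $\Syn$ is non-positive. Then longest walks from $A_0$ in $(\Syn, \Sep_\Desc)$ are well defined, so I would set $p(A_j)$ to be the weight of a longest such walk, with $p(A_0)=0$, and define $s(A_j) := p(A_j)$ and $t(A_j) := s(A_j) + \Len \pmod{\Circ}$. Two bounds place each $p(A_j)$ in $[0, \Circ)$ with strict monotonicity in $j$: the internal-step walk $A_0 \to A_1 \to \cdots \to A_j$ gives $p(A_j) \geq j(\Dist + \BegDist) > 0$ and $p(A_{j+1}) \geq p(A_j) + (\Dist + \BegDist)$; meanwhile, extending any walk ending at $A_j$ by the step walk $A_j \to A_{j+1} \to \cdots \to A_{n-1} \to A_0$ (whose last edge is the external step of weight $\Dist + \BegDist - \Circ$) creates a cycle of weight $\leq 0$, so the walk's weight is at most $\Circ - (n-j)(\Dist + \BegDist) < \Circ$. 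The inherited edge inequalities then ensure that the cyclic order of all $2n$ extremes in $\Unit$ matches that of $\M$, so $\Unit$ is equivalent to $\M$.

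As for the algorithm, $\Syn$ has $n$ vertices and $O(n)$ edges (the $2n$ consecutive extreme pairs contribute at most $2n$ noses and hollows, plus $n$ step edges). A single-source Bellman--Ford-style longest-path procedure rooted at $A_0$ therefore computes $p$ or detects a positive cycle in $O(n \cdot |E(\Syn)|) = O(n^2)$ time and $O(n)$ space. The main obstacle I anticipate is the order-preservation claim at the end of the sufficiency argument: the pairwise inequalities only certify spacings, so to forbid an extraneous extreme from sneaking between a consecutive pair of $\M$ in $\Unit$ one must exploit that \emph{every} consecutive pair of a $t$- and an $s$-extreme in $\M$ contributes an edge to $\Syn$, so any such interloper would witness a walk from $A_0$ strictly longer than the one giving $p$, contradicting the maximality of $p$.
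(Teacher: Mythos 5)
The paper does not prove this theorem; it cites it from Pirlot and Soulignac, so there is no in-paper proof to compare against. Your reconstruction via difference constraints on the start positions and a Bellman--Ford longest-path computation is exactly the standard argument in those references, and your edge-by-edge derivation of $\Sep_\Desc$ as the constraint values, the telescoping necessity step, the $[0,\Circ)$ bounds on the longest-walk values, and the $O(n^2)$/$O(n)$ complexity analysis are all correct.

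The one real concern is the order-preservation step, which you rightly flag as the delicate point, but the mechanism you sketch for it is not the right one. Maximality of $p$ plays no role there --- any feasible solution in the right range would serve --- so ``an interloper would witness a strictly longer walk'' is not what closes the gap. What does: the step constraints place $s(A_0)<\dots<s(A_{n-1})$; because every arc has length $\Len$, the $t$-extremes inherit the same relative (cyclic) order as the $s$-extremes; the nose and hollow constraints correspond \emph{exactly} to the consecutive $t$--$s$ and $s$--$t$ pairs of $\M$ and force each such pair to land $\geq\Dist$ apart in the correct relative position; and the non-positivity of the all-steps cycle $A_0\to\cdots\to A_{n-1}\to A_0$ gives $n(\Dist+\BegDist)\leq\Circ$, so the arrangement completes exactly one lap of the circle. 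These facts together pin down the interleaving of all $2n$ extremes to match $\M$. With that replacement your proof is complete and agrees with the cited works.
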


The synthetic graph $\Syn$ admits a peculiar drawing in which its vertices occupy entries of an imaginary matrix.  For $i \in \Range{n}$, let $\Ind(A_i)$ be the maximum number of pairwise non-intersecting arcs in $\{A_0, \ldots, A_i\}$.  The \emph{row} of $A_i$ is $\Row(A_i) = \Ind(A_i) - 1$, while the \emph{number of rows} of $\M$ is $\Ind(A_{n-1})$.  The maximal sequence $A_j < \ldots < A_k$ of arcs with row $r$, for $r \in \Range{\Rows(\M)}$, is the \emph{row $r$} of $\M$, while $A_j$ and $A_k$ are the \emph{leftmost} and \emph{rightmost} at row $r$ (\Figure~\ref{fig:synthetic graph}(c)).

Say that a step (resp.\ nose, hollow) $A_i \to A_j$ of $\Syn$ is a $\delta$-step (resp.\ $\delta$-nose, $\delta$-hollow) when $\Row(A_j) - \Row(A_i) = \delta$.  We refer to $0$-steps, $1$-noses, and $(-1)$-hollows as being \emph{forward} edges, and to $1$-steps and $0$-hollows as being \emph{backward} edges.  It is not hard to see that an edge is internal if and only if it is either forward or backward.  We say a walk $\W$ of $\Syn$ is \emph{internal} when it contains only internal edges, and that is \emph{forward} when it contains only forward edges.

A key observation by Mitas~\cite{Mitas1994} is that the digraph $\SynInt$ obtained after removing the external and backward edges of $\Syn$ is acyclic.  This fact allows us to define the column of the vertices in $\Syn$ using the following recurrence.  The \emph{column} $A_0$ is $\Col(A_0) = 0$, while, for $0 < \varepsilon \ll 1/n$ and $i \in \Range{n}$, the \emph{column} of $A_i$ is:
\begin{align}
  \Col(A_i) = \max\left\{
    \begin{array}{r}
       \Col(N) + \varepsilon \\ \Col(H) + 1 \\ \Col(S) + 1
    \end{array} \ \middle| \ 
    \begin{array}{l}
      N \to A_i \mbox{ is a $1$-nose } \\
      H \to A_i \mbox{ is a $-1$-hollow } \\
      S \to A_i \mbox{ is a $0$-step } \\
    \end{array}\right\}\label{eq:column}
\end{align}

\emph{Mitas' drawing} (\Figure~\ref{fig:synthetic graph}(c)) is obtained by placing each vertex $A_i$ in the plane at point $\Pos(A_i) = (\Row(A_i), \Col(A_i))$, and joining $\Pos(A_i)$ with $\Pos(A_j)$ with a straight line $L(A_iA_j)$, for each edge $A_i \to A_j$ of $\SynInt$. Clearly, any forward walk $\W = B_0, \ldots, B_{k-1}$ of $\Syn$ is also a walk of $\SynInt$; let $\Gr(\W)$ be the curve obtained by traversing $L(B_{i}B_{i+1})$ after $L(B_{i-1}B_{i})$, for $i \in \Range{k-1}$.

\begin{observation}[\cite{Mitas1994,SoulignacJGAA2017a}]\label{obs:walk drawing}
  If\/ $\W$ is a forward walk of\/ $\Syn$, then $\Gr(\W)$ is the graph of a continuous function in $\mathbb{R} \to \mathbb{R}$.
\end{observation}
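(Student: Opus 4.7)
The plan is to reduce the claim to a single monotonicity fact about the column coordinate: along every forward edge the column strictly increases. Once this is in hand, $\Gr(\W)$ is a concatenation of straight-line segments whose projections to the column axis are strictly nested, so the whole curve is the graph of a continuous piecewise-linear function of the column coordinate, extendable to all of $\mathbb{R}$.

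First I would observe that the recurrence \eqref{eq:column} is well-defined: by Mitas' observation, $\SynInt$ is acyclic, so $\Col$ can be computed by processing vertices in a topological order of $\SynInt$. This also guarantees that every forward edge of $\Syn$ lives in $\SynInt$ and thus contributes to the maximum defining the column of its head.

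Second, I would prove the key inequality: for every forward edge $A_i \to A_j$ of $\Syn$, we have $\Col(A_j) > \Col(A_i)$. By definition, such an edge is a $0$-step, a $1$-nose, or a $(-1)$-hollow, so $A_i$ appears in the set over which the maximum defining $\Col(A_j)$ is taken. Hence $\Col(A_j)$ exceeds $\Col(A_i)$ by at least $\varepsilon > 0$ (nose case) or $1$ (step/hollow case). Iterating along a forward walk $\W = B_0, \ldots, B_{k-1}$ gives a strictly increasing sequence $\Col(B_0) < \Col(B_1) < \ldots < \Col(B_{k-1})$.

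Third, each segment $L(B_iB_{i+1})$ is a straight line joining $\Pos(B_i)$ and $\Pos(B_{i+1})$, whose projection onto the column axis is the nondegenerate interval $[\Col(B_i), \Col(B_{i+1})]$. It is therefore the graph of an affine function of the column coordinate over that interval. Because consecutive segments meet at the common endpoint $\Pos(B_{i+1})$, their concatenation is a continuous, piecewise-linear function defined on $[\Col(B_0), \Col(B_{k-1})]$; any continuous extension (for instance constant on each side) realises it as a continuous function $\mathbb{R} \to \mathbb{R}$ whose graph is $\Gr(\W)$.

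I do not anticipate a significant obstacle. The only mild subtlety is the convention for the maximum over an empty set when some vertex has no incoming forward edge in $\SynInt$, but this only affects the base case of the recurrence and not the strict-monotonicity argument, which exclusively uses the inequality $\Col(A_j) \geq \Col(A_i) + \varepsilon$ along forward edges.
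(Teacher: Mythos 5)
Your proof is correct, and it is the natural argument: the observation is cited from Mitas and Soulignac rather than proved in this paper, but the standard justification is precisely the strict monotonicity of $\Col$ along forward edges, which you establish directly from recurrence~\eqref{eq:column}. Since every forward edge ($0$-step, $1$-nose, $(-1)$-hollow) into $A_j$ has its tail appearing in the maximum defining $\Col(A_j)$, the column coordinate strictly increases along a forward walk, so $\Gr(\W)$ is a concatenation of nondegenerate line segments whose projections onto the column axis are consecutive intervals, hence a function graph. Your remark about extending the function to all of $\mathbb{R}$ is a harmless cosmetic point; what the observation really needs (and what you have) is that the curve is a function graph over the bounded interval $[\Col(B_0), \Col(B_{k-1})]$.
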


Removing the backward edges of an internal walk $\W$, we obtain a family of forward walks $\W_0, \ldots, \W_{k-1}$.  The \emph{drawing} of $\W$ is $\Gr(\W) = \bigcup_{i \in \Range{k}} \Gr(\W_i)$.  Mitas' drawing of $\SynInt$ is so attractive because it is a ``plane'' drawing~\cite{Mitas1994,SoulignacJGAA2017a}.  

\begin{theorem}[\cite{Mitas1994,SoulignacJGAA2017a}]\label{thm:plane drawing}
  Two internal walks $\W$ and\/ $\W'$ of $\Syn$ have a common vertex if and only if $\Gr(\W) \cap \Gr(\W') \neq \emptyset$. Furthermore, $A$ is a vertex common to $\W$ and\/ $\W'$ if and only if\/ $\Pos(A) \in \Gr(\W) \cap \Gr(\W')$.
\end{theorem}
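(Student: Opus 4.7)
The plan is to dispatch the trivial direction first: if $A$ is a common vertex of $\W$ and $\W'$, then $\Pos(A)$ lies on a segment of $\Gr(\W)$ and on a segment of $\Gr(\W')$, so $\Pos(A)\in\Gr(\W)\cap\Gr(\W')$; this also gives one direction of the ``furthermore'' clause. The first real move toward the converse is to split each internal walk at its backward edges into forward sub-walks. Since $\Gr(\W)$ is by definition the union of the drawings of these sub-walks, it suffices to prove the theorem for forward walks $\W,\W'$: any common vertex of forward sub-walks is a common vertex of the original walks, and any point of the drawing intersection lies in the intersection of the drawings of two forward sub-walks.

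For forward $\W$ and $\W'$, equation~(\ref{eq:column}) forces the column to strictly increase along each walk, so by Observation~\ref{obs:walk drawing} each of $\Gr(\W),\Gr(\W')$ is a piecewise-linear continuous graph parametrized by the column coordinate, with breakpoints precisely at the positions of the vertices. The core of the proof will be a geometric lemma: if $L(e)$ and $L(e')$ are the line segments in Mitas' drawing associated with edges $e,e'$ of $\SynInt$, then $L(e)\cap L(e')$ is nonempty only when $e$ and $e'$ share an endpoint in $\Syn$, in which case the common point is the position of that shared endpoint. Granted the lemma, any point of $\Gr(\W)\cap\Gr(\W')$ lies on some edge of $\W$ and some edge of $\W'$, forcing them to share a vertex $A$ with $\Pos(A)$ equal to that point, which is exactly what the theorem asserts.

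To establish the geometric lemma I would carry out a case analysis by edge type. The forward edges are $0$-steps (slope $0$ in row-versus-column), $1$-noses (slope $1/\varepsilon$), and $(-1)$-hollows (slope $-1$); their pairwise distinct slopes already dispose of any candidate meeting whose row coordinate is not an integer. For the remaining cases I would lean on two structural facts: the arcs of a given row form a contiguous subsequence of $A_0<\ldots<A_{n-1}$, so no vertex position can sit strictly between the two endpoints of a $0$-step; and the maximum in~(\ref{eq:column}) tightly constrains the columns of noses, hollows, and steps entering a given vertex.

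The main obstacle I anticipate is the $1$-nose versus $(-1)$-hollow case, in which both segments span the same pair of adjacent rows and purely slope-based geometry does not preclude an interior crossing. Here the tightness of~(\ref{eq:column}) must be invoked: an interior crossing would pin down the columns of the endpoints of both edges in a way that contradicts the max-defining recurrence for $\Col$ at the common row, unless the two edges already share an endpoint in $\Syn$, in which case the meeting is at a vertex position. Once the geometric lemma is secured, both biconditionals of the theorem follow immediately from the reduction to forward walks.
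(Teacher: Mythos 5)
The paper does not prove this theorem---it cites it from Mitas~(1994) and Soulignac~(2017)---so your argument is assessed on its own merits against that known proof strategy. Your overall plan is the right one and is essentially Mitas' route: dispatch the trivial direction, reduce to forward sub-walks via $\Gr(\W)=\bigcup_i\Gr(\W_i)$, and then prove that the drawing is plane by showing that two segments of $\SynInt$ meet only at a common endpoint, which is a vertex position. The reduction step is clean, and the observation that segment interiors avoid vertex positions (since interior rows are non-integer, except for $0$-steps where contiguity and column-monotonicity within a row take over) is correct.

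The case analysis underlying your geometric lemma, however, has two gaps. First, the slopes you quote ($0$, $1/\varepsilon$, $-1$) are not fixed: the recurrence~\eqref{eq:column} only supplies \emph{lower bounds}, so for a $1$-nose $N\to A$ the quantity $\Col(A)-\Col(N)$ can be anything $\geq\varepsilon$. Hence two $1$-noses (or two $(-1)$-hollows) are not parallel, can a priori cross transversally at a non-integer row, and are not eliminated by slope considerations; your sketch has no argument for these same-type pairs. Second, for the nose-vs-hollow case, ``tightness of~\eqref{eq:column}'' cannot by itself produce the contradiction: the constraints in~\eqref{eq:column} are one-sided and one can choose column values satisfying all of them while the nose and hollow cross. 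What actually kills every interior crossing---same-type pairs included---is the \emph{consecutiveness} built into the definitions of noses and hollows, combined with inclusion-freeness. For instance, if a $1$-nose $N\to A$ and a $(-1)$-hollow $H\to B$ both span rows $r,r+1$ and cross, then (up to symmetry) $N<B$ and $H<A$, so $t(N)<t(B)$ and $s(H)<s(A)$; then either $s(H)$ lies strictly between $t(N)$ and $s(A)$, or $t(N)$ lies strictly between $s(H)$ and $t(B)$, contradicting that $t(N)s(A)$ and $s(H)t(B)$ are consecutive pairs of extremes. A parallel argument handles two noses and two hollows. Equation~\eqref{eq:column} is needed only to make $\Col$ strictly increasing along each row, so that the combinatorial order $<$ is reflected geometrically.
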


\begin{figure}[t!]
  \centering
  \begin{tabular}{ccc}
    \includegraphics{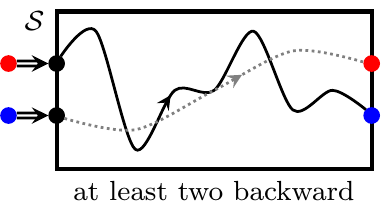} & \includegraphics{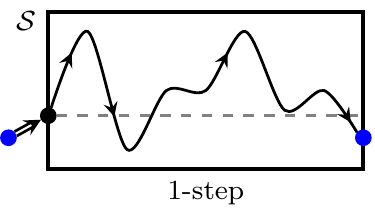} & \includegraphics{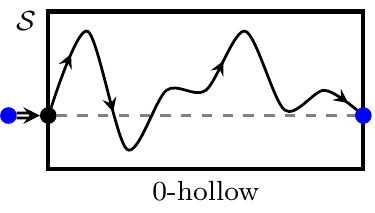} \\
    (a) & (b) & (c)
  \end{tabular}
  \caption{Any backward edge ($0$-hollow or $1$-step) that begins at row $r$ ends at row either $r$ or $r+1$.  Then, by Observation~\ref{obs:walk drawing} and Theorem~\ref{thm:plane drawing}, an internal circuit is a cycle if and only if it contains exactly one backward edge (a).  Moreover, $1$-noses (resp.\ $0$-steps, $(-1)$-hollows) that begin at $r$ end at $r+1$ (resp.\ $r$, $r-1$), thus any internal cycle contains exactly one more hollow than noses (b)~and~(c), i.e., it has $\Jump = -1$.}\label{fig:one more hollow}
\end{figure}

To highlight the utility of Mitas' drawings, Figure~\ref{fig:one more hollow} contains an informal geometrical proof of the next corollary that exploits Observation~\ref{obs:walk drawing} and Theorem~\ref{thm:plane drawing}.  

\begin{corollary}[\cite{PirlotTaD1990,SoulignacJGAA2017a}]\label{cor:one more hollow}
  If\/ $\W$ is an internal cycle of\/ $\Syn$, then $\Jump(\W) = -1$.
\end{corollary}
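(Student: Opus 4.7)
The plan is to combine a row-counting identity with the plane-drawing machinery of Section~\ref{sec:synthetic graph}. Each internal edge of $\Syn$ induces a fixed row change: $+1$ for a $1$-nose, $0$ for a $0$-step, $-1$ for a $(-1)$-hollow (the three forward types), $+1$ for a $1$-step, and $0$ for a $0$-hollow (the two backward types). Since any cycle $\W$ returns to its starting vertex, the row changes of its edges must sum to $0$. Letting $b$ denote the number of backward edges in $\W$ and collecting terms, this identity rearranges to $\Jump(\W) = \Noses(\W) - \Hollows(\W) = -b$; proving the corollary therefore reduces to showing that $b = 1$ for every internal cycle $\W$.

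For the lower bound $b \geq 1$, note that columns increase strictly along forward edges by~\eqref{eq:column}, so a forward walk visits pairwise distinct vertices and cannot return to its start. For the upper bound $b \leq 1$, I would assume $b \geq 2$ and split $\W$ at its backward edges into forward sub-walks $\W_0, \ldots, \W_{b-1}$. By Observation~\ref{obs:walk drawing}, each $\Gr(\W_i)$ is the graph of a continuous function over a bounded column interval. Using the constraint that every backward edge lands either at the row of its source or one row above, I would argue that no arrangement of these function graphs, glued at their endpoints by backward edges, can close up without forcing some pair $\Gr(\W_i)$ and $\Gr(\W_j)$ to intersect at a point; Theorem~\ref{thm:plane drawing} would then supply a vertex common to $\W_i$ and $\W_j$, contradicting the distinctness of the non-endpoint vertices of the cycle $\W$.

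The main obstacle is the $b \leq 1$ direction. The figure sketches only the single-backward-edge picture, and ruling out multiple backward edges requires a careful geometric bookkeeping of the column intervals spanned by each $\W_i$ together with the row positions at which the backward edges begin and end. Once the unavoidable intersection is in hand, combining it with Theorem~\ref{thm:plane drawing} forces $b = 1$, and the row-change identity immediately yields $\Jump(\W) = -1$.
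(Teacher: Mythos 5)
Your argument matches the geometric strategy the paper sketches in Figure~\ref{fig:one more hollow}: establish that an internal cycle has exactly one backward edge, then read off the jump from the per-edge row changes; your algebraic identity $\Jump(\W) = -b$ (with $b$ the backward-edge count) is a clean way to package that second step, and the $b \geq 1$ half via strict column monotonicity is sound. The $b \leq 1$ direction you flag as the main obstacle is indeed the crux, and the paper itself only gestures at it in the figure caption while deferring to \cite{PirlotTaD1990,SoulignacJGAA2017a} for the full argument, so your sketch is at the same level of detail as the paper's own presentation.
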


\section{A New Characterization of UCA Models}
\label{sec:characterization}

\begin{figure}[b!]
  \centering
  \begin{tabular}{c@{\hspace{1cm}}c}
    \includegraphics{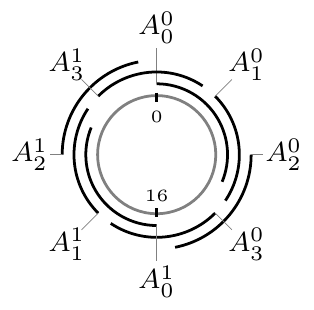} & \includegraphics{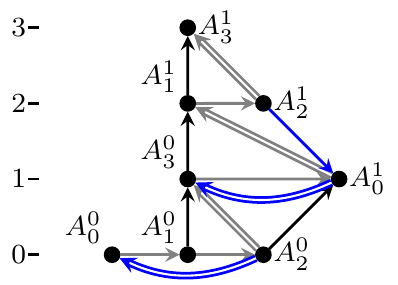} \\
    (a) & (b)
  \end{tabular}
  \caption{(a) $2 \Unroll \M$ for the model $\M$ in \Figure~\ref{fig:synthetic graph}, where $A_i^j$ is the $j$-th copy of $A_i$. (b) Mitas' drawing of $2 \Unroll \Syn$ with backward edges.}\label{fig:unrolling}
\end{figure}

In this section we introduce the loop unrolling technique to prove a new characterization of those PCA models that have equivalent UCA models.  

Let $\Circ$ be the circumference of the circle of a PCA model $\M$.  The \emph{$\Copies$-unrolling} of $\M$ (\Figure~\ref{fig:unrolling}) is the PCA model $\Copies \Unroll \M$ whose circle has circumference $\Copies\Circ$ that has $\Copies$ arcs $A_0, A_1, \ldots, A_{\Copies-1}$ for every $A \in \A$ such that, for $i \in \Range{\Copies}$:
\begin{displaymath}
  s(A_i)=s(A) + i\Circ\mbox{, and } t(A_i)=t(A) + \Circ(i+q) \bmod \Copies\Circ,
\end{displaymath}
where $q \in \{0,1\}$ equals $1$ if and only if $A$ is external.
For convenience, we write $\Copies \Unroll \Syn(\M)$ as a shortcut for $\Syn(\Copies \Unroll \M)$, and we drop the parameter $\M$ when no confusions are possible.  By definition, the arc $A_i$ of $\Copies \Unroll \M$ is a vertex of $\Copies \Unroll \Syn$ for every $i \in \Range{\Copies}$ (\Figure~\ref{fig:unrolling}(b)).  We refer to $A_i$ as being the \emph{$i$-th copy} of both $A$ and $A_j$ (for $j \in \Range{\Copies}$).  Similarly, each edge $A_i \to B_j$ of $\Copies \Unroll \Syn$ is said to be a \emph{copy} of the edge $A \to B$ of $\Syn$ to indicate that $B_j$ is a copy of $B$ and $A_i \to B_j$ is of the same kind as $A \to B$, while each walk $\T$ of $\Copies\Unroll\Syn$ is a \emph{copy} of the walk $\W$ of $\Syn$ such that the $i$-th edge of $\T$ is a copy of the $i$-th edge of $\W$, for $i \in \Range{\T}$.  Note that $A \to B$ has $\Copies$ copies in $\Copies \Unroll \Syn$ by definition.  We remark, as it can be observed in \Figure~\ref{fig:unrolling}, that $\Row(B_j) - \Row(A_i)$ need not be equal to $\Row(B) - \Row(A)$. Thus, the $\delta$-noses (resp.\ $\delta$-hollows, $\delta$-steps) of $\Copies \Unroll \Syn$ need not correspond to the $\delta$-noses (resp.\ $\delta$-hollows, $\delta$-steps) of $\Syn$.

Our characterization of those PCA models that admit equivalent UCA models is given below.  Nose and hollow walks have a central role in our theorem, as they do in Tucker's characterization~\cite{SoulignacJGAA2017,TuckerDM1974}.  A walk $\W = B_0, \ldots, B_{k-1}$ of $\Syn$ is a \emph{nose walk} (resp.\ \emph{hollow walk}) when it contains no hollows (resp.\ noses).  Walk $\W$ is \emph{greedy} when either $B_i \to B_{i+1}$ is a nose (resp.\ hollow) or there is no nose (resp.\ hollow) from $B_i$ in $\Syn$, for every $i \in \Range{k}$. That is, $\W$ is greedy when noses (resp.\ hollows) are preferred over steps. 

\begin{theorem}\label{thm:crossing cycles}
  The following statements are equivalent for a PCA model $\M$.%
  \begin{enumerate}
    \item $\M$ is equivalent to a UCA model.\label{thm:crossing cycles:uca}
    \item Every pair of circuits of\/ $\Syn$ with different signs of $\Ext$ have a common vertex.\label{thm:crossing cycles:every ext}
    \item Some greedy hollow cycle and some greedy nose cycle of\/ $\Syn$ share a vertex.\label{thm:crossing cycles:some greedy}
  \end{enumerate}
\end{theorem}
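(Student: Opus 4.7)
The plan is to prove (1) $\Rightarrow$ (2) $\Rightarrow$ (3) $\Rightarrow$ (1), with the last implication carrying the bulk of the technical work. For (1) $\Rightarrow$ (2), assume $\M$ is equivalent to a $(\Circ,\Len)$-CA model, so by Theorem~\ref{thm:separation constraints} every cycle $\W$ of $\Syn$ satisfies $\Sep_{(\Circ,\Len)}(\W) \leq 0$. Suppose for contradiction that $\W_1, \W_2$ are vertex-disjoint circuits with $\Ext(\W_1) > 0 > \Ext(\W_2)$. Decomposing each circuit into edge-disjoint cycles and summing yields $\Sep(\W_i) \leq 0$ for $i=1,2$. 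Combining these with the positive weights $|\Ext(\W_2)|$ and $|\Ext(\W_1)|$ cancels the $\Circ$-term and gives $|\Ext(\W_2)| \Jump(\W_1) + |\Ext(\W_1)|\Jump(\W_2) < 0$. To convert this into a contradiction, I pass to $\Copies\Unroll\Syn$ with $\Copies=|\Ext(\W_1)\Ext(\W_2)|$, where the $|\Ext(\W_2)|$-fold traversal of $\W_1$ and the $|\Ext(\W_1)|$-fold traversal of $\W_2$ lift to vertex-disjoint Eulerian subgraphs whose edge-union has total $\Ext=0$. Any cycle decomposition of this union consists of internal cycles of $\Copies\Unroll\Syn$, and Corollary~\ref{cor:one more hollow} forces each such cycle to contribute $\Jump=-1$, giving a total $\Jump$ that contradicts the weighted-sum inequality above.

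For (2) $\Rightarrow$ (3), starting from any vertex of $\Syn$, the greedy nose walk eventually revisits a vertex and closes into a greedy nose cycle $\W_n$; the analogous construction produces a greedy hollow cycle $\W_h$. Since $\W_n$ contains only noses and (possibly) steps, $\Jump(\W_n) > 0$, so by Corollary~\ref{cor:one more hollow} $\W_n$ is not internal, and $\Ext(\W_n) \neq 0$. A direct analysis of the direction in which noses advance clockwise around $C$ shows that $\Ext(\W_n) < 0$; by symmetry $\Ext(\W_h) > 0$. Applying (2) to the pair $(\W_n, \W_h)$ provides the shared vertex claimed in (3).

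For (3) $\Rightarrow$ (1), the main technical direction, let $A$ be a vertex common to greedy cycles $\W_n$ and $\W_h$ with $\Ext(\W_n)=-a<0$ and $\Ext(\W_h)=b>0$. Passing to $\Copies\Unroll\Syn$ with $\Copies$ a suitable multiple of $ab$, the lifts of $\W_n$ and $\W_h$ become forward walks of $\Copies\Unroll\SynInt$ that meet at several copies of $A$; Observation~\ref{obs:walk drawing} and Theorem~\ref{thm:plane drawing} then provide a rigid plane geometry in Mitas' drawing of $\Copies\Unroll\SynInt$. The plan is to extract integer values $\Circ$ and $\Len$ from this geometry---essentially as the row- and column-increments per traversal of the greedy walks between consecutive meeting points at copies of $A$---and then to verify that $\Sep_{(\Circ,\Len)}(\W) \leq 0$ for every cycle $\W$ of $\Syn$, after which Theorem~\ref{thm:separation constraints} yields (1). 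The main obstacle is precisely this verification: internal cycles are handled automatically by Corollary~\ref{cor:one more hollow} once $\Len$ is chosen large enough, but external cycles require the extracted $(\Circ,\Len)$ to uniformly balance $\Len\Jump(\W)+\Circ\Ext(\W)+|\W|$ over all non-internal cycles, and this uniformity must be extracted from the plane structure of Theorem~\ref{thm:plane drawing} together with the common vertex $A$ that anchors both $\W_n$ and $\W_h$.
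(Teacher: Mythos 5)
Your overall plan (proving the implications cyclically, $(1)\Rightarrow(2)\Rightarrow(3)\Rightarrow(1)$) is legitimate, but two of the three implications have genuine gaps, and both gaps stem from not using the planarity of Mitas' drawing, which is the real engine of the paper's proof.

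For $(1)\Rightarrow(2)$: the algebraic inequality you derive from $\Sep\leq 0$ is a \emph{consequence} of $(1)$, not a contradiction, and disjointness gives you no way to refute it. Your attempted contradiction fails because the lifts of the two disjoint circuits in $\Copies\Unroll\Syn$ are themselves vertex-disjoint, so they cannot be reassembled into internal cycles; and each lift individually wraps all the way around the unrolled circle, so it has a nonzero $\Ext$ there and is \emph{not} internal. Nothing forces the total $\Jump$ to equal $-(\text{number of cycles in an internal decomposition})$. Indeed one can write down numerical $(\Ext,\Jump,|\W|)$-data for two disjoint circuits of opposite $\Ext$-signs together with a valid $(\Circ,\Len)$, so a purely arithmetical argument cannot work. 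What \emph{does} carry the weight in the paper is Theorem~\ref{thm:plane drawing}: the disjoint lifts must both be forward (otherwise a backward edge forces a crossing), and two disjoint forward walks from row $x$ to row $y$ produce $\Sep_\Desc(\W_N')+\Sep_\Desc(\W_H')>0$, contradicting Theorem~\ref{thm:separation constraints}. Your $(2)\Rightarrow(3)$ is fine (the paper considers it trivial; you correctly supply the reasons $\Ext(\W_n)<0<\Ext(\W_h)$). But $(3)\Rightarrow(1)$ is only a sketch, and it contains a false claim: the lift of the greedy \emph{hollow} cycle is not forward — it necessarily contains $0$-hollows, which are backward. You then explicitly flag the verification of $\Sep_{(\Circ,\Len)}(\W)\leq 0$ for external cycles as an unresolved obstacle. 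The paper avoids this head-on verification by proving $(3)\Rightarrow(2)$ via planarity (exploiting that $\T_N$ is forward and greedy so any positive-$\Ext$ circuit's lift must pick up two backward edges, hence intersect it), and then $(2)\Rightarrow(1)$ by the explicit construction $\Len=2n^2$ with $\Circ$ chosen minimal so that every $\Ext<0$ cycle already satisfies $\Sep\leq 0$; Corollary~\ref{cor:one more hollow} handles $\Ext=0$, and $\Ext>0$ cycles are merged at the common vertex with a tight $\Ext<0$ cycle. You would need to adopt some version of that construction, or develop the geometric argument you only gesture at.
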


\begin{proof}
  \ref{thm:crossing cycles:uca} $\Rightarrow$ \ref{thm:crossing cycles:every ext}.  First consider the case in which $\W_N$ and $\W_H$ are circuits of $\Syn$ with $\Ext(\W_N) < 0 < \Ext(\W_H)$. This means that, for any $\Copies \geq 1$, every row $r < \Copies-1$ of $\Copies \Unroll \Syn$ has at least one copy of a vertex in $\W_N$ and one copy of a vertex in $\W_H$.  Define $w=\max\{n,|\W_N|,|\W_H|\}$, and take $\Copies \gg w$ to be large enough.  Let $N_w$ and $H_w$ be copies of vertices in $\W_N$ and $\W_H$ that belong to row $w$, respectively.  By traversing $i$ copies of $\W_N$ from $N_w$, we obtain a walk $\T_N(i)$ that ends at some copy $N_{x(i)}$ of $N_w$ whose row is $x(i) > w$.  Similarly, if we traverse $i$ copies of $\W_H$ in reverse from $H_w$, we obtain a walk $\T_H(i)$ that begins at some copy $H_{y(i)}$ of $H_w$ whose row is $y(i) > w$.  By definition, each row $r < \Copies-1$ of $\Copies\Unroll\Syn$ is uniquely determined by its leftmost vertex, thus there exist $a,a',b,b' \in \Range{n^3}$ such that: $x = x(a) = y(a')$, $y = x(b) = y(b')$ and row $y$ is a copy of row $x$.  Furthermore, as $|\W_N| \leq w$ and $|\W_H| \leq w$ and $\Copies$ is large enough (say $\Copies \gg w(n^3+1)$), the walks $\T_H$ and $\T_N$ that join $H_x$ and $N_y$ to $H_y$ and $N_x$, respectively, are internal (\Figure~\ref{fig:crossing cycles}).  
  
  \begin{figure}[t!]
    \centering
    \begin{tabular}{c@{\hspace{3mm}}c@{\hspace{3mm}}c}
    \includegraphics{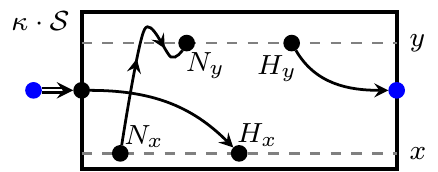} & \includegraphics{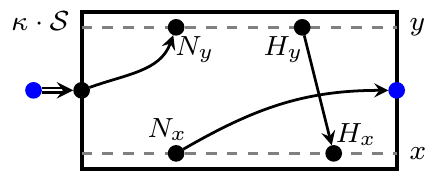} & \includegraphics{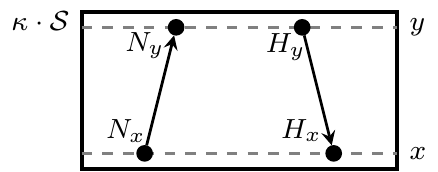}\\
    (a) & (b) & (c)
    \end{tabular}
    \caption{Proof of \ref{thm:crossing cycles:uca} $\Rightarrow$ \ref{thm:crossing cycles:every ext} in Theorem~\ref{thm:crossing cycles}.}\label{fig:crossing cycles}
  \end{figure}

  By Observation~\ref{obs:walk drawing} and Theorem~\ref{thm:plane drawing}, $\T_N$ and $\T_H$ have a common vertex when at least one of them is not forward (\Figure~\ref{fig:crossing cycles}(a)(b)), and so do $\W_N$ and $\W_H$.  Suppose, then, that both $\T_N$ and $\T_H$ are forward walks (\Figure~\ref{fig:crossing cycles}(c)).  By construction, $\T_N$ and $\T_H$ are copies of the circuits $\W_N'$ and $\W_H'$ of $\Syn$ that are obtained traversing $(b-a)$ and $(b'-a')$ times $\W_N$ and $\W_H$, respectively.  Moreover, $\W_N'$ has the same number of external edges as $\W_H'$ because $\T_N$ and $\T_H$ both join rows $x$ and $y$.  Then, for any UCA descriptor $\Desc = (\Circ, \Len)$, we obtain that 
  \begin{align*}
    \Sep_{\Desc}(\W_H') &= \Sep_{\Desc}(\T_H) + c\Ext(\W_H') \geq (y-x)(1-\Len) - c\Ext(\W_N')\mbox{, and}\\
    \Sep_{\Desc}(\W_N') &= \Sep_{\Desc}(\T_N) + c\Ext(\W_N') \geq (y-x)(\Len+1) + c\Ext(\W_N'), 
  \end{align*}
  thus $\Sep_{\Desc}(\W_H') + \Sep_{\Desc}(\W_N') > 0$, and $\M$ is not equivalent to a UCA model by Theorem~\ref{thm:separation constraints}.
  
  Now consider the case in which $\Ext(\W) = 0$ and $\Ext(\W') \neq 0$ for $\{\W, \W'\} = \{\W_N, \W_H\}$.  As before, we can assure that some internal copy $\T'$ of $\W'$ in $(4n)\Unroll\Syn$ joins a vertex at row $n$ with a vertex at row $3n$.   On the other hand, some internal copy $\T$ of $\W$ in $(4n)\Unroll\Syn$ has a backward edge joining a rightmost vertex at row $i$ to a leftmost vertex at row $j$ for $i, j \in \Range{2n,3n}$ (recall $|i-j| \leq 1$).  By Theorem~\ref{thm:plane drawing} and Observation~\ref{obs:walk drawing}, $\T$ and $\T'$ have a common vertex, and so do $\W_N$ and $\W_H$.
  
  \ref{thm:crossing cycles:every ext} $\Rightarrow$ \ref{thm:crossing cycles:uca}.  Let $\Len = 2n^2$ and $\Circ$ be the minimum such that $\Sep_{(\Circ, \Len)} \leq 0$ for every cycle $\W$ of $\Syn$ with $\Ext < 0$.  Note that such a value of $\Circ$ always exists by~\eqref{eq:sep}.  Moreover, some cycle $\W_N$ of $\Syn$ with $\Ext < 0$ has $\Sep_{(\Circ, \Len)} = 0$.  We prove that $\Sep_{(\Circ, \Len)}(\W) \leq 0$ for every cycle $\W$ of $\Syn$, thus $\M$ is equivalent to a $(\Circ,\Len)$-CA model by Theorem~\ref{thm:separation constraints}.  
  \begin{description}
    \item [Case 1:] $\Ext(\W) < 0$.  Then $\Sep_{(\Circ, \Len)}(\W) \leq 0$ by the definition of $\Circ$.
    \item [Case 2:] $\Ext(\W) = 0$.  This follows by~\eqref{eq:sep} and Corollary~\ref{cor:one more hollow} because $|\W| \leq n$.
    \item [Case 3:] $\Ext(\W) > 0$.  By hypothesis, $\W$ and $\W_N$ have a common vertex $A$.  Let $\W_0$ be the circuit of $\Syn$ that begins at $A$ which is obtained by traversing $|\Ext(\W_N)|$ times $\W$ and then $\Ext(\W)$ times $\W_N$.  Clearly, $|\W_0| \leq 2n^2 = \Len$ and $\Ext(\W_0) = 0$.  The latter implies that $\W_0$ has some internal copy $\T_0$ in $\Copies \Unroll \Syn$ when $\Copies$ is large enough (\Figure~\ref{fig:repeated cycle}).  Clearly, $\Jump(\W_0) = \Jump(\T_0)$, thus $\Jump(\W_0) < 0$ by Corollary~\ref{cor:one more hollow}.  Then, by~\eqref{eq:sep},
    \begin{gather*}
      |\Ext(\W_N)|\Sep_{(\Circ, \Len)}(\W) = |\Ext(\W_N)|\Sep_{(\Circ, \Len)}(\W) + \Ext(\W)\Sep_{(\Circ, \Len)}(\W_N) =\\ 
                           \Sep_{(\Circ, \Len)}(\W_0) = \Len\Jump(\W_0) + |\W_0| \leq -\Len + |\W_0| \leq 0
    \end{gather*}
  \end{description}

  \begin{figure}[t!]
    \mbox{}\hfill \includegraphics{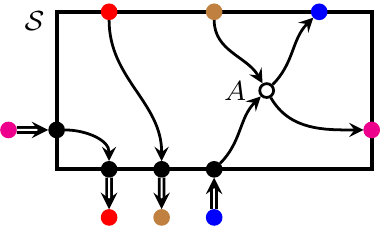} \hfill \includegraphics{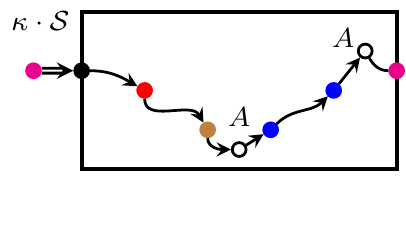} \hfill\mbox{}
    \caption{Theorem~\ref{thm:crossing cycles} (\ref{thm:crossing cycles:every ext} $\Rightarrow$ \ref{thm:crossing cycles:uca}): circuit $\W_0$ and its copy $\T_0$ in $\Copies\Unroll\Syn$.}\label{fig:repeated cycle}
  \end{figure}

  \ref{thm:crossing cycles:every ext} $\Rightarrow$ \ref{thm:crossing cycles:some greedy} is trivial.
  
  \ref{thm:crossing cycles:some greedy} $\Rightarrow$ \ref{thm:crossing cycles:every ext}. Suppose some greedy nose cycle $\W_N$ has a vertex $A$ in common with a greedy hollow cycle $\W_H$.  Fix a large enough $\Copies$ and let $\T_N$ be a walk of $\Copies\Unroll\Syn$ obtained by traversing $3\Ext(\W_H)$ times $\W_N$ from some copy $A_0$ of $A$ in $\Copies \Unroll \Syn$, ending at some other copy $A_3$ of $A$. Similarly, let $\T_H$ be a walk of $\Copies\Unroll\Syn$ obtained by traversing $3\Ext(\W_N)$ times $\W_H$ from $A_3$, ending at $A_0$. It is easily seen that $\Copies$ and $A_0$ can be chosen so that $n < \Row(A_0) < \Row(A_3) < \Copies-n$, which implies that $\T_N$ and $\T_H$ are internal in $\Copies \Unroll \Syn$. Notice that besides $A_0$ and $A_3$, $\T_N$ and $\T_H$ have copies $A_1$ and $A_2$ of $A$ in common. Also, observe that $\T_N$ is forward because $\W_N$ is greedy, so, by Corollary~\ref{cor:one more hollow}, the subpath of $\T_H$ from $A_{i+1}$ to $A_i$ is not forward for $i\in\Range{3}$ (\Figure~\ref{fig:greedy crossings}).  Hence, $\T_H$ contains at least two subpaths $\T_0$ and $\T_1$ that join a leftmost vertex to a rightmost vertex. Let $x_i$ and $y_i$ be the rows of the leftmost and rightmost vertices of $\T_i$ for $i \in \{0,1\}$ (\Figure~\ref{fig:greedy crossings}).

  \begin{figure}[t!]
    \centering \includegraphics{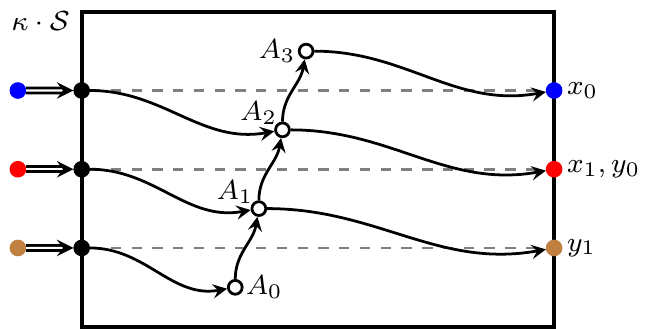}
    \caption{Theorem~\ref{thm:crossing cycles} (\ref{thm:crossing cycles:some greedy} $\Rightarrow$ \ref{thm:crossing cycles:every ext}): walks $\T_N$ and $\T_H$ in $\Copies\Unroll\Syn$.}\label{fig:greedy crossings}
  \end{figure}

  By repeatedly traversing copies of any circuit $\W$ of $\Syn$ with $\Ext(\W) > 0$, we obtain a walk $\T$ in $\Copies\Unroll\Syn$ that joins a vertex at row $x_i+1$ with a vertex at row $y_i-1$.  If $\T$ has a vertex $B$ in common with $\T_i$ and the edge $B \to B'$ from $B$ in $\T$ is a hollow, then $B \to B'$ must also an edge of $\T_i$ because $\T_i$ is greedy. By Observation~\ref{obs:walk drawing}, this implies that $\T$ must pass through a backward edge before reaching the level $y_i-1$. Consequently, $\T$ contains at least two backward edges and, so, by Observation~\ref{obs:walk drawing} and Theorem~\ref{thm:plane drawing}, $\W$ has a vertex in common to every circuit $\W'$ with $\Ext(\W') \leq 0$.  Similarly, by Observation~\ref{obs:walk drawing} and Theorem~\ref{thm:plane drawing}, any circuit with $\Ext < 0$ has a vertex in common with every circuit with $\Ext = 0$, as proven in \ref{thm:crossing cycles:uca} $\Rightarrow$ \ref{thm:crossing cycles:every ext}.
\end{proof}

\paragraph{A comparison to Tucker's characterization.} The first characterization of those PCA models that have equivalent UCA models was given by Tucker~\cite{TuckerDM1974}.  To translate Tucker's characterization to the language of synthetic graphs, Soulignac~\cite{SoulignacJGAA2017} defines the so-called ``nose ratio'' $r$ and ``hollow ratio'' $R$.  We shall not recall these definitions here, because they are rather technical and non-important for our article.  Yet, we recall Tucker's theorem, as stated by Soulignac, for the sake of the comparison.

\begin{theorem}[\cite{TuckerDM1974,SoulignacJGAA2017}]\label{thm:tucker+soulignac}
  A PCA model $\M$ has equivalent UCA models if and only $r(\W_N) < R(\W_H)$ for every greedy nose cycle $\W_N$ and every greedy hollow cycle $\W_H$ of $\Syn$.
\end{theorem}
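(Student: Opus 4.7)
The plan is to prove the cycle $(1) \Rightarrow (2) \Rightarrow (1)$ together with the easy $(2) \Rightarrow (3)$ and the more delicate $(3) \Rightarrow (2)$. The recurring tool is the loop unrolling introduced at the beginning of this section: a vertex-disjoint pair of circuits in $\Syn$ lifts to vertex-disjoint copies in $\Copies \Unroll \Syn$, and once $\Copies$ is chosen sufficiently larger than $n$, $|\W_N|$, and $|\W_H|$, copies whose endpoints lie in interior rows become internal walks. At that point Observation~\ref{obs:walk drawing} and Theorem~\ref{thm:plane drawing} apply: two internal walks share a point of the drawing only if they share a vertex, and a forward walk is the graph of a continuous function from rows to columns, so two forward walks with the same endpoints bound a region of the strip between their endpoint rows.

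For $(1) \Rightarrow (2)$, I would assume by contradiction that circuits $\W_N, \W_H$ with $\Ext(\W_N) < 0 < \Ext(\W_H)$ are vertex-disjoint (the case where one has $\Ext = 0$ is analogous). Iterated copies in the unrolling produce disjoint internal walks $\T_N, \T_H$ joining the same pair of rows $x < y$. If either walk is not forward, Theorem~\ref{thm:plane drawing} forces a common vertex. Otherwise both are forward, and using~\eqref{eq:sep} the weights of the corresponding iterated circuits can be bounded from below by linear expressions in $y - x$ whose sum is strictly positive, contradicting Theorem~\ref{thm:separation constraints}. For $(2) \Rightarrow (1)$, I would fix $\Len = 2n^2$ and let $\Circ$ be the smallest value making $\Sep_{(\Circ,\Len)}(\W) \leq 0$ on every cycle with $\Ext(\W) < 0$; such a $\Circ$ exists by~\eqref{eq:sep}, and some cycle $\W_N$ is tight. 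Cycles with $\Ext = 0$ are handled by Corollary~\ref{cor:one more hollow} and $|\W| \leq n < \Len$, while each cycle $\W$ with $\Ext(\W) > 0$ is combined with $\Ext(\W)$ traversals of $\W_N$ around a shared vertex (provided by $(2)$) into a circuit of $\Ext = 0$; lifting to an internal copy in the unrolling and invoking Corollary~\ref{cor:one more hollow} gives $\Jump < 0$, whence $\Sep_{(\Circ,\Len)}(\W) \leq 0$. Theorem~\ref{thm:separation constraints} then yields the desired UCA model.

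The implication $(2) \Rightarrow (3)$ is immediate once one observes that every greedy nose cycle must close through an external nose, so $\Ext < 0$, and symmetrically every greedy hollow cycle has $\Ext > 0$. For $(3) \Rightarrow (2)$, I would start from a common vertex $A$ of a greedy nose cycle $\W_N$ and a greedy hollow cycle $\W_H$, and in a sufficiently large unrolling build internal walks $\T_N, \T_H$ by traversing several copies of each from a copy $A_0$ of $A$, meeting at intermediate copies $A_1, A_2, A_3$. Greediness of $\W_N$ forces $\T_N$ to be forward, whereas Corollary~\ref{cor:one more hollow} forces each $A_{i+1}$-to-$A_i$ piece of $\T_H$ to contain a backward edge, giving two disjoint leftmost-to-rightmost subwalks of $\T_H$. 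Any iterated internal copy of a circuit $\W$ with $\Ext(\W) > 0$ spans the full vertical range between these subwalks, and greediness of $\W_H$ together with Observation~\ref{obs:walk drawing} prevents it from avoiding every backward edge; Theorem~\ref{thm:plane drawing} then forces a common vertex with $\W_H$. A symmetric argument against $\W_N$ handles circuits with $\Ext < 0$, and the $\Ext = 0$ case is reduced to the $(1) \Rightarrow (2)$ argument already carried out.

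The main obstacle I anticipate is the bookkeeping around the unrolling: choosing $\Copies$ and the starting copies so that the constructed walks lie in interior rows and remain internal, controlling the $\delta$-labels of edges (which need not coincide with their values in $\Syn$), and correctly transferring greediness from $\Syn$ to $\Copies \Unroll \Syn$. Once these technicalities are in place, the geometric intuition supplied by Mitas' drawing carries each case through.
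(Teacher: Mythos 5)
You have proved a different statement from the one assigned. What you establish is the three-way equivalence of Theorem~\ref{thm:crossing cycles} (a PCA model is equivalent to a UCA model iff every pair of circuits of $\Syn$ with opposite signs of $\Ext$ shares a vertex iff some greedy nose cycle and some greedy hollow cycle share a vertex), and your argument for that reproduces the paper's proof of Theorem~\ref{thm:crossing cycles} faithfully: unrolling to reduce to internal walks, Observation~\ref{obs:walk drawing} and Theorem~\ref{thm:plane drawing} to force crossings, the $\Len = 2n^2$ construction together with Corollary~\ref{cor:one more hollow} for the converse, and the greedy-implies-forward observation plus the two backward-edge count for the final implication.

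But the statement you were asked to prove is Theorem~\ref{thm:tucker+soulignac}, which asserts that $\M$ has an equivalent UCA model if and only if $r(\W_N) < R(\W_H)$ for every greedy nose cycle $\W_N$ and every greedy hollow cycle $\W_H$, where $r$ and $R$ are the ``nose ratio'' and ``hollow ratio'' of Tucker and Soulignac. Your proposal never mentions $r$ or $R$, so it does not address the claimed equivalence at all. Moreover, the paper itself does not prove Theorem~\ref{thm:tucker+soulignac}: it cites the result from \cite{TuckerDM1974,SoulignacJGAA2017} and explicitly declines to recall the definitions of $r$ and $R$, introducing the statement only for comparison with Theorem~\ref{thm:crossing cycles}. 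To actually prove the assigned theorem you would need to unpack those ratio definitions and show that the universal inequality $r(\W_N) < R(\W_H)$ is equivalent to one of the shared-vertex conditions in Theorem~\ref{thm:crossing cycles}. As it stands there is a genuine gap: a correct proof of an adjacent theorem, with no bridge to the one required.
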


Tucker's theorem is the basis for the three polynomial time algorithms that output a negative witness certifying that $\M \not\in\mathbb{M}$~\cite{DuranGravanoMcConnellSpinradTuckerJA2006,KaplanNussbaumDAM2009,SoulignacJGAA2017a}.  In a nutshell, these algorithms compute all the greedy nose and greedy hollow cycles, and then they compare their ratios. Yet, by Theorem~\ref{thm:crossing cycles}, only one greedy nose and one greedy hollow cycle need to be computed.  Furthermore, there is no need to compute the ratios; we only have to make sure that both cycles have a common vertex. Clearly, the algorithm so obtained can be implemented to run in linear time.  Regarding the time complexity, we remark that, although there is no improvement over~\cite{KaplanNussbaumDAM2009,SoulignacJGAA2017a} in the worst case, it provides a faster algorithm when the greedy cycles of $\Syn$ are short and the data structure representing the input model $\M$ allows the efficient computation of noses and hollows.  This is a common case when the $\M$ is obtained by running a recognition algorithm on a PCA graph~\cite{DengHellHuangSJC1996,SoulignacA2015}.  More important than this is the fact that Theorem~\ref{thm:crossing cycles}, when combined with Mitas' drawings of the unrolled synthetic graph, allows us to better visualize the structure of UCA models.

\section{The Integrality of \texorpdfstring{$\Circ$}{\it c} and \texorpdfstring{$\Len$}{\it l}}
\label{sec:integrality}

The purpose of this section is to prove that both $\Circ$ and $\Len$ are integer combinations of $\Dist$ and $\BegDist$ when $\Unit$ is a $(\Dist, \BegDist)$-minimal $\Desc$-CA model.  Pursuing our goal, we first show that $\Unit$ has some special circuits with $\Sep_\Desc = 0$.  These circuits are later combined with~\eqref{eq:sep} to prove our main result.

\begin{lemma}\label{lemma:minimal conditions}
 If\/ $\Unit$ be a $(\Dist,\BegDist)$-minimal $\Desc$-CA model for $\Desc=(\Circ, \Len, \Dist, \BegDist)$, then:
 \begin{enumerate}[(a)]
  \item $\Syn$ has cycles $\W_N$ and $\W_H$ with $\Sep_{\Desc}(\W_N) = \Sep_{\Desc}(\W_H) = 0$ such that $\Ext(\W_N) < 0$ and $\Ext(\W_H) \geq 0$.\label{lemma:minimal conditions:ext for sep=0}
  \item $\Syn$ has a circuit $\W_0$ with $\Sep_{\Desc}(\W_0) = \Ext(\W_0) = 0$.\label{lemma:minimal conditions:ext=0}
  \item $\Syn$ has a circuit $\W_1$ with $\Sep_{\Desc}(\W_1) = 0$ and $\Ext(\W_1) = -1$.\label{lemma:minimal conditions:ext=-1}
 \end{enumerate}
\end{lemma}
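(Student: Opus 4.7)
The plan is to prove parts (a), (b), (c) in order, each building on the earlier ones. Call a cycle $\W$ of $\Syn$ \emph{active} when $\Sep_\Desc(\W) = 0$; since $\Syn$ has finitely many simple cycles, Theorem~\ref{thm:separation constraints} implies that the feasibility of a descriptor $(\Circ', \Len', \Dist, \BegDist)$ near $\Desc$ is controlled entirely by the active cycles.

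For (a), the idea is local perturbation. If every active cycle had $\Ext(\W) \geq 0$, then for sufficiently small $\varepsilon > 0$ the descriptor $(\Circ - \varepsilon, \Len, \Dist, \BegDist)$ would remain feasible: on active cycles $\Sep$ changes to $-\varepsilon\,\Ext(\W) \leq 0$, and on the finitely many non-active cycles $\Sep$ stays strictly negative by continuity. This contradicts $\Circ$-minimality, so some active $\W_N$ has $\Ext(\W_N) < 0$. Symmetrically, if every active cycle had $\Ext(\W) < 0$, then choosing $\alpha = \varepsilon\,\max\{\Jump(\W)/\Ext(\W) : \W \text{ active}\}$ (a finite quantity over a finite set) for small $\varepsilon > 0$ yields the feasible descriptor $(\Circ + \alpha, \Len - \varepsilon, \Dist, \BegDist)$: each active $\Sep$ becomes $\alpha\,\Ext(\W) - \varepsilon\,\Jump(\W) \leq 0$ by the choice of $\alpha$ (dividing by $\Ext(\W) < 0$ flips the inequality), and continuity handles the non-active cycles. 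This contradicts $\Len$-minimality, so some active $\W_H$ has $\Ext(\W_H) \geq 0$.

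For (b), if $\Ext(\W_H) = 0$ take $\W_0 = \W_H$. Otherwise $\Ext(\W_N) < 0 < \Ext(\W_H)$ and, by Theorem~\ref{thm:crossing cycles}\ref{thm:crossing cycles:every ext}, the cycles $\W_N$ and $\W_H$ share a vertex $A$; concatenating at $A$ a total of $\Ext(\W_H)$ copies of $\W_N$ followed by $|\Ext(\W_N)|$ copies of $\W_H$ yields a circuit $\W_0$ with $\Sep_\Desc(\W_0) = \Ext(\W_H)\,\Sep_\Desc(\W_N) + |\Ext(\W_N)|\,\Sep_\Desc(\W_H) = 0$ and $\Ext(\W_0) = \Ext(\W_H)\,\Ext(\W_N) + |\Ext(\W_N)|\,\Ext(\W_H) = 0$.

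For (c), I aim for an active circuit with $\Ext = -1$, and expect this to be the main technical difficulty. The naive concatenation of $\W_N$ and $\W_H$ realizes only $\Ext$-values in the subsemigroup of $\mathbb{Z}$ generated non-negatively by $\Ext(\W_N)$ and $\Ext(\W_H)$, which need not contain $-1$. My plan is to exploit the full family of active cycles rather than just these two: the step cycle $A_0 \to A_1 \to \cdots \to A_{n-1} \to A_0$ of $\Syn$ always has $\Ext = -1$ and $\Jump = 0$, so when it is active we can take it as $\W_1$. Otherwise I would combine a carefully chosen $\W_N$ (minimizing $|\Ext|$ over active cycles with $\Ext < 0$) with additional active cycles supplied by the LP-dual form of $\Circ$-minimality---which places $(1,0)$ in the conic hull of $\{(-\Ext(\W),-\Jump(\W)) : \W\text{ active}\}$---and use the crossing property of Theorem~\ref{thm:crossing cycles} to secure the common vertices needed for concatenation. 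Converting the rational certificate this provides into a non-negative integer combination with $\Ext$-sum exactly $-1$ is the delicate step, and I anticipate it will require the unrolling ideas of Section~\ref{sec:characterization} together with an analysis of the unimodular structure of the active-cycle normals.
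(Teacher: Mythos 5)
Your arguments for parts~(a) and~(b) are correct and follow essentially the same route as the paper: a local perturbation of the descriptor to produce the two active cycles for~(a), and concatenation at a shared vertex guaranteed by Theorem~\ref{thm:crossing cycles} for~(b). (A small cosmetic point: the paper uses a uniform perturbation amount $\Delta/(4n^2)$ in the second case, whereas you compute the ratio $\max\{\Jump(\W)/\Ext(\W)\}$ over active cycles; both are valid, and both rely on the same observation that there are finitely many cycles so non-active ones stay strictly negative under a small enough perturbation.)

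Part~(c), however, is not actually proved. You correctly identify it as the main technical obstacle, and your opening remark---that the all-steps cycle has $\Ext=-1$ but is usually not active---is accurate. But what follows is a sketch of a hoped-for argument, not a proof: you invoke the LP-dual certificate of $\Circ$-minimality to place $(1,0)$ in a conic hull, and then explicitly defer the delicate step (``Converting the rational certificate \ldots is the delicate step, and I anticipate it will require \ldots''). Nothing in the proposal actually produces a single circuit of $\Syn$ with $\Sep_\Desc=0$ and $\Ext=-1$, nor does it explain how non-negative rational dual multipliers would be turned into an integer combination realizing $\Ext=-1$. The paper's proof does not go through LP duality at all; instead it unrolls $\Syn$ enough times that the internal copy $\T$ of $\W_0$ from~(b) crosses successive copies $\T_q$ of $\W_N$, uses the plane-drawing Theorem~\ref{thm:plane drawing} and the periodicity $T_{e+q}=T_q$ to locate two crossing points $T_q$ and $T_{q+1}$ one row apart, and splices a walk through $\T_q$, $\T$, and $\T_{q+1}$ whose $\Ext$ is $-1$. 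The resulting circuit automatically has $\Sep_\Desc=0$ because it is assembled from pieces of maximum-$\Sep$ cycles. That crossing-and-splicing construction is the real content of~(c), and it is missing from your proposal.
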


\begin{proof}
  (\ref{lemma:minimal conditions:ext for sep=0}) Let
 \begin{displaymath}
  \Delta = \min\{-\Sep_\Desc(\W) \mid \W \text{ is a cycle of $\Syn$ with } \Sep_\Desc(\W) \neq 0\},
 \end{displaymath}
 which is positive by Theorem~\ref{thm:separation constraints}.  Consider the following cases.
 \begin{description}
   \item [Case 1:] $\Ext(\W) \geq 0$ for every cycle with $\Sep_\Desc(\W) = 0$.  Let $v = (\Circ - \frac{\Delta}{2n}, \Len, \Dist, \BegDist)$ be a UCA descriptor.  By~\eqref{eq:sep} and the fact that $|\Ext(\W)| \leq n$ is an integer value for every cycle $\W$ of $\Syn$, we obtain that
   \begin{align*}
     \Sep_{v}(\W) = \Sep_\Desc(\W) - \frac{\Delta\Ext(\W)}{2n} \leq \begin{cases}
       0   & \text{if } \Sep_\Desc(\W) = 0 \\
       -\Delta + \frac{\Delta}{2} = -\frac{\Delta}{2} & \text{otherwise.}
     \end{cases}
   \end{align*}
   Therefore, $\Unit$ is equivalent to a $v$-CA by Theorem~\ref{thm:separation constraints}, implying that $\Unit$ is not $(\Dist, \BegDist)$-minimal.
   
   \item [Case 2:] $\Ext(\W) < 0$ for every cycle with $\Sep_\Desc(\W) = 0$.  Let $v = (\Circ + \frac{\Delta}{2n}, \Len - \frac{\Delta}{4n^2}, \Dist, \BegDist)$ be a UCA descriptor.  By~\eqref{eq:sep} and the facts that $|\Ext(\W)| \leq n$ is integer and $|\Jump(\W)| \leq n$ for every cycle $\W$ of $\Syn$, we obtain that 
   \begin{align*}
   \Sep_{v} = \Sep_\Desc - \frac{\Delta\Jump}{4n^2} + \frac{\Delta\Ext}{2n} \leq 
   \begin{cases}
     0 + \frac{\Delta}{4n} - \frac{\Delta}{2n} = -\frac{\Delta}{4n} & \text{if} \Sep_\Desc = 0\\
     -\Delta + \frac{\Delta}{4n} + \frac{\Delta}{2} \leq -\frac{\Delta}{4n} & \text{otherwise,}
   \end{cases}
   \end{align*}
   where $\W$ is the omitted parameter. As in Case~1, $\Unit$ is not $(\Dist, \BegDist)$-minimal.\hfill$\triangle$
 \end{description}
  
 (\ref{lemma:minimal conditions:ext=0}) Let $\W_N$ and $\W_H$ be the cycles implied by~(\ref{lemma:minimal conditions:ext for sep=0}), and $A$ be an arc of both $\W_N$ and $\W_H$, that exists by Theorem~\ref{thm:crossing cycles}.  Clearly, the circuit $\W_0$ obtained by traversing $|\Ext(\W_N)|$ times $\W_H$ plus $\Ext(\W_H)$ times $\W_N$, starting from $A$, has $\Sep_{\Desc} = \Ext = 0$.\hfill$\triangle$
 
 \begin{figure}
  \centering
  \begin{tabular}{c@{\hspace{1cm}}c}
    \includegraphics{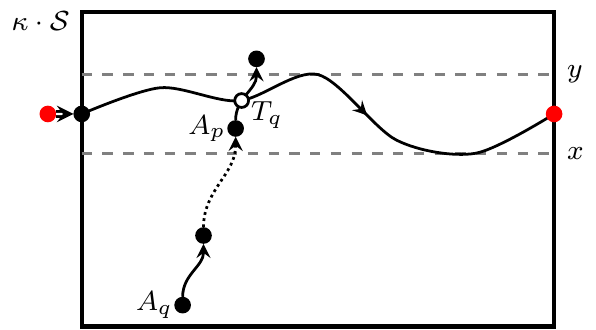} & \includegraphics{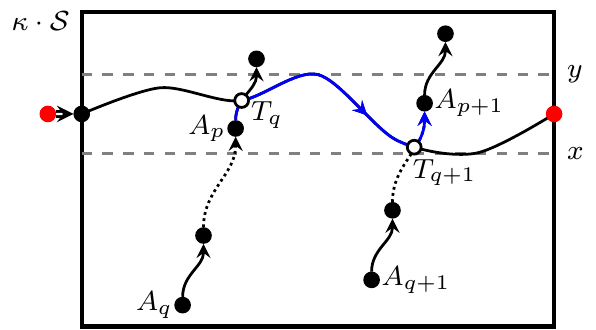}\\
    (a)  &  (b)
  \end{tabular}
  \caption{Proof of Lemma~\ref{lemma:minimal conditions} (\ref{lemma:minimal conditions:ext=-1}); here $p = ae+q$.}\label{fig:minimal conditions:ext=-1}
 \end{figure}

 (\ref{lemma:minimal conditions:ext=-1})  Let $\W_N$ be the cycle implied by (\ref{lemma:minimal conditions:ext for sep=0}) and $\W_0$ be the circuit implied by~(\ref{lemma:minimal conditions:ext=0}).  Take $\Copies$ to a large enough value guaranteeing that $\W_0$ has an internal copy $\T$ in $\Copies\Unroll\Syn$ such that the minimum $x$ and $y$ for which $\T$ has vertices at rows $x$ and $\Copies-y$ are also large enough.  For every $i \in \Range{\Copies}$, write $A_i$ to denote the $i$-th copy of some $A \in \W_N$, and let $e = |\Ext(\W_N)|$.  Fix $k$ with $y \ll k \ll \Copies$, and consider any $0 \ll q \ll x$.  By following $k$ copies of $\W_N$ from $A_q$, we obtain a walk $\T_q$ of $\Copies\Unroll\Syn$ that goes through $A_{ie+q}$ for every $i \in \Range{k}$ (\Figure~\ref{fig:minimal conditions:ext=-1}(a)).  By Observation~\ref{obs:walk drawing} and Theorem~\ref{thm:plane drawing}, $\T_q$ and $\T$ share some vertex $T_q$ that belongs to the subpath of $\T_q$ that begins at $A_{ae+q}$ and ends at $A_{(a+1)e+q}$.  Note that, since $q \ll x$, then $T_{e+q} = T_q$, thus there exists a combination of $q$ and $a$ such that $T_{q+1}$ belongs to the subpath of $\T_{q+1}$ that begins at $A_{q+1}$ and ends at $A_{ae+q+1}$ (\Figure~\ref{fig:minimal conditions:ext=-1}(a)).  Then, the walk $\T'$ obtained by traversing $\T_q$ from $A_{ae+q}$ to $T_q$, then $\T$ from $T_q$ to $T_{q+1}$, and finally $\T_{q+1}$ from $T_{q+1}$ to $A_{ae+q+1}$ has $\Ext = -1$ (\Figure~\ref{fig:minimal conditions:ext=-1}(b)).  Furthermore, as $\W_0$ and $\W_N$ are circuits with maximum $\Sep_{\Desc}$ by Theorem~\ref{thm:separation constraints}, the circuit $\W_1$ of $\Syn$ that has $\T'$ as its copy has maximum $\Sep_{\Desc}$ as well, i.e., $\Sep_{\Desc}(\W_1) = 0$.
\end{proof}

If the circuit $\W_0$ of the previous lemma is internal, then $\Len$ is an integer combination of $\Dist$ and $\BegDist$ by~\eqref{eq:sep} and Corollary~\ref{cor:one more hollow}.  The following technical lemma is to deal with the otherwise case.  

\begin{lemma}\label{lem:rolled up circuit}
 For every UCA descriptor $\Desc$ and every circuit $\W$ of $\Syn$, there exists a circuit $\W'$ with $\Sep_\Desc(\W') \geq \Sep_\Desc(\W)$ and $\Ext(\W') = \Ext(\W)$ that has an internal $\T'$ copy in $\Copies\Unroll\Syn$ such that $\T'$ is either a path (if $\Ext(\W') \neq 0$) or a cycle (otherwise), for $\Copies = (3(|\Ext(\W)|+1)n)$.
\end{lemma}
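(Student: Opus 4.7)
Set $e = \Ext(\W)$ and, by reversing orientations if necessary, assume $e \ge 0$ (the case $e < 0$ is symmetric). Fix $\Copies = 3(e+1)n$. The plan is first to replace $\W$ by a circuit of length at most $\Copies$ without altering $\Ext$ or decreasing $\Sep_\Desc$, then to lift that circuit into $\Copies\Unroll\Syn$, simplify the lift into an internal path (if $e > 0$) or cycle (if $e = 0$), and project back to $\Syn$ to produce the desired $\W'$.

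For the length reduction, if $|\W|$ exceeds $\Copies$, then by pigeonhole some vertex of $\Syn$ must be repeated in $\W$ and thus delimits a sub-circuit; I would iteratively excise sub-circuits with $\Sep_\Desc \le 0$, shortening $\W$ without changing $\Ext$ or decreasing $\Sep_\Desc$. Assume therefore that $|\W| \le \Copies$. Pick a vertex $A$ of $\W$ and let $\T$ be the copy of $\W$ in $\Copies\Unroll\Syn$ that starts at $A_{i^*}$ for $i^* \approx \Copies/2$. Because the copy-index excursion of $\T$ is bounded by $|\W|/2 \le \Copies/2$, no edge of $\T$ wraps around the big $0$-point and hence $\T$ is internal in $\Copies\Unroll\Syn$; by construction it starts at $A_{i^*}$ and ends at $A_{i^*-e}$ (at $A_{i^*}$ when $e=0$). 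If $\T$ is already a path (for $e>0$) or a cycle (for $e=0$), take $\W' = \W$ and $\T' = \T$.

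Otherwise $\T$ revisits some vertex; choose an innermost sub-cycle $C$ of $\T$. Since $C$ is an internal cycle of $\Copies\Unroll\Syn$, Corollary~\ref{cor:one more hollow} gives $\Jump(C) = -1$, and since $C$ closes in the unrolling, its projection $\W_C$ to $\Syn$ is a circuit with $\Ext(\W_C) = 0$, $\Jump(\W_C) = -1$, so
\[
\Sep_\Desc(\W_C) \;=\; -\Len + \Dist\,|\W_C| + \BegDist\,\Steps(\W_C).
\]
When $\Sep_\Desc(\W_C) \le 0$, I delete $C$ from $\T$: the resulting strictly shorter walk is still internal and projects to a circuit $\W''$ with $\Ext(\W'') = e$ and $\Sep_\Desc(\W'') \ge \Sep_\Desc(\W)$, and the argument iterates. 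The main obstacle is the case $\Sep_\Desc(\W_C) > 0$, where deletion would actually decrease $\Sep_\Desc$; here my tactic is to reroute $C$ through a different copy of its base vertex in $\Copies\Unroll\Syn$, exploiting the $\Omega(n)$ copies provided by $\Copies = 3(e+1)n$. Observation~\ref{obs:walk drawing} and Theorem~\ref{thm:plane drawing} would then be used to guarantee that a placement exists for which the lifted walk gains no new self-intersections, preserving both $\Ext$ and $\Sep_\Desc$ of the projection while eliminating the revisit. Iterating this rearrangement until no revisit remains produces the required $\W'$ and internal path/cycle copy $\T'$.
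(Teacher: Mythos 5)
Your proposal takes a different route from the paper and contains several genuine gaps, the most serious of which concern the preliminary shortening of $\W$ and the internality of the lift.

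The paper's proof never shortens or re-lifts $\W$ directly. Instead it decomposes $\W$ into edge-disjoint \emph{cycles} $\C_0,\ldots,\C_{k-1}$ of $\Syn$, lifts each $\C_i$ to an internal copy $\T_i$ in $(3n)\Unroll\Syn$ (possible because $|\C_i|\le n$), and then chooses a permutation $\pi$ of $\Range{k}$ so that the partial sums of the row excursions $\delta_{\pi(0)},\ldots,\delta_{\pi(k-1)}$ stay within a band of width $O\bigl((|\Ext(\W)|+1)n\bigr)$. The reordered concatenation $\T_\pi$ is therefore an internal copy of the circuit $\C_\pi$ (which has exactly the same bag of edges as $\W$, hence the same $\Ext$ and $\Sep_\Desc$) inside $\Copies\Unroll\Syn$. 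Finally $\T_\pi$ is split into one path or cycle $\T'$ plus a family of internal cycles, and discarding those cycles (each with $\Ext=0$ and $\Sep_\Desc\le0$) yields $\W'$. The permutation is the crucial mechanism for taming the excursion of an arbitrarily long $\W$, and it is entirely absent from your plan.

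Concretely, your step of excising sub-circuits to get $|\W|\le\Copies$ does not preserve $\Ext$: a sub-circuit of $\W$ delimited by a repeated vertex can have $\Ext\neq0$, and there is no guarantee that one with $\Ext=0$ (and $\Sep_\Desc\le0$) is available. The circuits to which this lemma is actually applied (e.g.\ $\W_0$ of Lemma~\ref{lemma:minimal conditions}(\ref{lemma:minimal conditions:ext=0})) can have $\Theta(n^2)$ edges, so this reduction is genuinely needed and cannot be waved away. Second, even assuming $|\W|\le\Copies$, the claim that ``the copy-index excursion of $\T$ is bounded by $|\W|/2\le\Copies/2$'' is not justified: each traversal of $\W$ returns to a fixed copy offset, but the \emph{partial} drift can be as large as $(|\W|+|e|)/2$, and starting at $i^*\approx\Copies/2$ then does not leave enough margin; without reordering, a single cycle that drifts far in one direction before drifting back already breaks internality. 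Third, the rerouting proposal for a sub-cycle $C$ with $\Sep_\Desc(\W_C)>0$ is hand-waving: Observation~\ref{obs:walk drawing} and Theorem~\ref{thm:plane drawing} concern intersections of drawn walks and do not by themselves supply a new placement that removes a self-intersection while preserving $\Sep_\Desc$ and $\Ext$. (In fact this case never arises in the contexts where the lemma is invoked, since $\Unit$ is a $\Desc$-CA model and hence all cycles of $\Syn$ have $\Sep_\Desc\le0$ by Theorem~\ref{thm:separation constraints}; the paper quietly uses this fact, and you should too instead of attempting a repair that is not needed.) The cycle-decomposition-and-reordering argument is what makes the lemma go through, and your proof needs it.
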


\begin{proof}
 Let $e = \Ext(\W)$, $s = |e|/e$, $\Delta = |e|n$ and $\C_0, \ldots, \C_{k-1}$ be a partition of $\W$ into cycles.  Clearly, each $\C_i$ has an internal copy $\T_i$ in $(3n)\Unroll\Syn$, because $|\C_i| \leq n$ for $i\in \Range{k}$.  Let $\delta_i = s(y_i - x_i)$ where $x_i$ and $y_i$ are the lowest and highest rows reached by $\T_i$, respectively.  Since $\delta_i \in \Range{-n,n}$ and $\sum_{i\in\Range{k}} \delta_i \in \Range{-\Delta, \Delta}$, there exists a permutation $\pi$ of $\Range{k}$ such that $\sum_{i\in\Range{j}} \delta_{\pi(i)} \in \Range{-\Delta - n, \Delta + n}$ for every $j \in \Range{k}$.  This means that the walk $\T_\pi = \T_{\pi(1)}, \ldots, \T_{\pi(k)}$ of $\Copies\Unroll\Syn$ that begins in a vertex at row $r \in \Range{(e+1)n,2(e+1)n}$ is an internal copy of the circuit $\C_\pi = \C_{\pi(1)}, \ldots, \C_{\pi(k)}$ of $\Syn$.  By definition, $\Ext(\C_\pi) = 0$ and $\Sep_{\Desc}(\C_\pi) = \Sep_{\Desc}(\W)$.  Finally, observe that $\T_\pi$ can be partitioned into at most one path or cycle $\T'$ plus a family of cycles.  By construction, $\T'$ is internal in $\Copies\Unroll\Syn$ and it is the copy of some circuit $\W'$ with $\Ext(\W') = \Ext(\W)$.  Moreover, since every cycle of $\Copies\Unroll\Syn$ has $\Sep_\Desc \leq 0$, it follows that $\Sep(\W') \geq \Sep(\W)$ as desired.
\end{proof}

Now we are ready to state the main theorem of this section.

\begin{theorem}\label{thm:len and circ are integer}
  If\/ $\Unit$ is a $(\Dist, \BegDist)$-minimal $(\Circ, \Len, \Dist, \BegDist)$-CA model, then $\Len$ and $\Circ$ are integer combinations of $\Dist$ and $\BegDist$.
\end{theorem}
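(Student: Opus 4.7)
The plan is to combine Lemmas~\ref{lemma:minimal conditions} and~\ref{lem:rolled up circuit} with Corollary~\ref{cor:one more hollow} and the explicit formula~\eqref{eq:sep}, first pinning down $\Len$ as an integer combination of $\Dist$ and $\BegDist$, and then leveraging this to derive the analogous statement for $\Circ$.

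First I would invoke Lemma~\ref{lemma:minimal conditions}(\ref{lemma:minimal conditions:ext=0}) to obtain a circuit $\W_0$ of $\Syn$ with $\Sep_\Desc(\W_0) = \Ext(\W_0) = 0$. To convert this abstract circuit into a structural object whose $\Jump$ is known, I would then apply Lemma~\ref{lem:rolled up circuit} to $\W_0$, yielding a circuit $\W_0'$ with $\Ext(\W_0') = 0$ and $\Sep_\Desc(\W_0') \geq 0$ that admits an internal copy $\T_0'$ in $\Copies\Unroll\Syn$ which is a cycle (the $\Ext=0$ case of the lemma). Since $\Unit$ is a $\Desc$-CA model, Theorem~\ref{thm:separation constraints} forces $\Sep_\Desc$ to be non-positive on every cycle, hence on every circuit, of $\Syn$, so in fact $\Sep_\Desc(\W_0') = 0$. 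Because taking a copy preserves the kind (nose/hollow/step) of each edge, $\Jump(\W_0') = \Jump(\T_0')$, and $\Jump(\T_0') = -1$ by Corollary~\ref{cor:one more hollow}. Substituting these values into~\eqref{eq:sep} collapses it to $\Len = \Dist\,|\W_0'| + \BegDist\,\Steps(\W_0')$, exhibiting $\Len$ as a non-negative integer combination of $\Dist$ and $\BegDist$.

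Next I would invoke Lemma~\ref{lemma:minimal conditions}(\ref{lemma:minimal conditions:ext=-1}) to obtain a circuit $\W_1$ of $\Syn$ with $\Sep_\Desc(\W_1) = 0$ and $\Ext(\W_1) = -1$. Rearranging~\eqref{eq:sep} applied to $\W_1$ gives $\Circ = \Len\,\Jump(\W_1) + \Dist\,|\W_1| + \BegDist\,\Steps(\W_1)$. Since $\Jump(\W_1)$, $|\W_1|$ and $\Steps(\W_1)$ are integers, and $\Len$ was just shown to be an integer combination of $\Dist$ and $\BegDist$, so is $\Circ$, which finishes the argument.

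The only step requiring real care is pinning down $\Jump(\W_0') = -1$; this is precisely why Lemma~\ref{lem:rolled up circuit} must be invoked. The circuit $\W_0$ supplied by Lemma~\ref{lemma:minimal conditions}(\ref{lemma:minimal conditions:ext=0}) is not, in general, a cycle of $\Syn$, and Mitas' drawing (in particular Corollary~\ref{cor:one more hollow}) is only meaningful on internal walks, so one cannot apply the $\Jump=-1$ identity to $\W_0$ directly. Unrolling straightens this out geometrically: the vanishing external weight condition is exactly what lets us realize (a variant of) $\W_0$ as a genuine internal cycle in $\Copies\Unroll\Syn$, where Mitas' machinery forces $\Jump = -1$, and copying preserves that statistic. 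No analogous obstacle arises for $\W_1$, since we only need $\Jump(\W_1)$ to be \emph{some} integer, not a prescribed one.
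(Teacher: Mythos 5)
Your proposal is correct and follows essentially the same route as the paper: apply Lemma~\ref{lemma:minimal conditions}(\ref{lemma:minimal conditions:ext=0}) together with Lemma~\ref{lem:rolled up circuit} to produce an internal cycle copy whose $\Jump$ is pinned to $-1$ via Corollary~\ref{cor:one more hollow}, extract $\Len$ from~\eqref{eq:sep}, then feed the $\Ext=-1$ circuit of Lemma~\ref{lemma:minimal conditions}(\ref{lemma:minimal conditions:ext=-1}) back into~\eqref{eq:sep} to extract $\Circ$. The only difference is cosmetic: you are slightly more explicit that Theorem~\ref{thm:separation constraints} is what squeezes $\Sep_\Desc(\W_0')$ down from the $\geq 0$ guaranteed by Lemma~\ref{lem:rolled up circuit} to exactly $0$, a step the paper performs tacitly.
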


\begin{proof}
  By Lemma~\ref{lemma:minimal conditions}~(\ref{lemma:minimal conditions:ext=0}) and Lemma~\ref{lem:rolled up circuit}, $(3n)\Unroll\Syn$ contains an internal cycle $\T_0$ that is a copy of a circuit $\W_0$ of $\Syn$ with $\Ext(\W_0) = \Sep_\Desc(\W_0) = 0$.  Note that $\Sep_{\Desc}(\T_0) = \Sep_{\Desc}(\W_0) = 0$ because $\Ext(\W_0) = 0$.  Similarly, by Lemma~\ref{lemma:minimal conditions}~(\ref{lemma:minimal conditions:ext=-1}), $\Syn$ contains a circuit $\W_1$ with $\Ext(\W_1) = -1$ and $\Sep_\Desc(\W_1) = 0$.  Then, by~\eqref{eq:sep} and Corollary~\ref{cor:one more hollow},
  \begin{align}
    0 = \Sep_{\Desc}(\T_0) = \Len\Jump_0 + \Dist|\T_0|+\BegDist\sigma(\T_0) = -\Len + \Dist|\T_0|+\BegDist\sigma(\T_0) \mbox{, thus} \label{eq:len integrality} \\
    0 = \Sep_{\Desc}(\W_1) = -\Circ + d(|\T_0|\Jump_1 + |\W_1|) + \BegDist(\Jump_1\sigma(\T_0) + \sigma(\W_1)), \label{eq:circ integrality}
  \end{align}%
  where $\Jump_i = \Jump(\W_i)$ and $\sigma$ counts the number of steps in a walk.
\end{proof}

\subsection{Computing a minimal UCA model}
\label{sec:minimal algorithm}

Theorem~\ref{thm:len and circ are integer} yields an algorithm to compute a $(\Dist,\BegDist)$-minimal $\Desc$-CA model $\Unit$ equivalent to an input UCA model $\M$, when $\Dist \in \mathbb{Q}_{> 0}$, and $\BegDist \in \mathbb{Q}_{\geq 0}$ are also given as input.  The algorithm has three phases that compute $\Len$, $\Circ$, and $\Unit$, respectively.  For the first phase, recall that $\Syn$ has a circuit $\W_0$ with $\Sep_\Desc = \Ext = 0$ that has an internal copy $\T_0$ in $(3n)\Unroll\Syn$.  Moreover, $\T_0$ is a cycle and, by~\eqref{eq:len integrality}, $\Len = \Dist|\T_0| + \BegDist\sigma(\T_0)$.  Then, taking into account that every internal cycle of $(3n)\Unroll\Syn$ is a copy of a circuit of $\Syn$, we obtain that
\begin{equation}
  \Len = \max\{\Dist|\T| + \BegDist\sigma(\T) \mid \T \text{ is an internal cycle of }(3n)\Unroll\Syn\}\label{eq:minimal len}
\end{equation}
by Theorem~\ref{thm:separation constraints} and Corollary~\ref{cor:one more hollow}.  Graph $(3n)\Unroll\Syn$ has $O(n^2)$ vertices and edges.  So, as discussed in~\cite{SoulignacJGAA2017a}, the value of $\Len$ satisfying~\eqref{eq:minimal len} can be found in $O(n^3)$ time and $O(n^2)$ space.  Indeed, all we have to do is to compute $(3n)\Unroll\Syn$ to find the longest path in $(3n)\Unroll\Syn$ from the leftmost vertex at row $r$ to the rightmost vertex at row $r$, for each of the $O(n^2)$ rows $r$ of $(3n)\Unroll\Syn$.

\begin{lemma}\label{lem:len procedure}
  Let $\M$ be a UCA model equivalent to a $(\Dist, \BegDist)$-minimal $(\Circ, \Len, \Dist, \BegDist)$-CA model, for $\Dist\in \mathbb{Q}_{> 0}$ and $\BegDist \in \mathbb{Q}_{\geq 0}$.  There is an algorithm that computes $\Len$ in $O(n^3)$ time and $O(n^2)$ space when $\Unit$, $\Dist$ and $\BegDist$ are given as input. 
\end{lemma}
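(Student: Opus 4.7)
The plan is to justify equation~\eqref{eq:minimal len} rigorously and then describe the algorithm.

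First, I would prove the characterization~\eqref{eq:minimal len} in two directions. For the lower bound $\Len \geq \Dist|\T|+\BegDist\sigma(\T)$ for every internal cycle $\T$ of $(3n)\Unroll\Syn$: such a $\T$ is a cycle of $\Syn((3n)\Unroll\M)$, and $(3n)\Unroll\M$ is equivalent to a $(3n\Circ,\Len,\Dist,\BegDist)$-CA model, so by Theorem~\ref{thm:separation constraints} we have $\Sep_\Desc(\T)\le 0$. Since $\T$ is internal, $\Ext(\T)=0$, and by Corollary~\ref{cor:one more hollow}, $\Jump(\T)=-1$. Plugging into~\eqref{eq:sep} yields $-\Len+\Dist|\T|+\BegDist\sigma(\T)\le 0$. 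For the matching upper bound, Lemma~\ref{lemma:minimal conditions}(\ref{lemma:minimal conditions:ext=0}) gives a circuit $\W_0$ of $\Syn$ with $\Sep_\Desc(\W_0)=\Ext(\W_0)=0$, and Lemma~\ref{lem:rolled up circuit} applied with $\Copies=3n$ upgrades this to a circuit $\W_0'$ whose internal copy $\T_0$ in $(3n)\Unroll\Syn$ is a cycle and satisfies $\Sep_\Desc(\T_0)=\Sep_\Desc(\W_0')\ge 0$. Combined with the upper bound above, $\Sep_\Desc(\T_0)=0$, and the same computation with $\Ext(\T_0)=0$, $\Jump(\T_0)=-1$ gives $\Len=\Dist|\T_0|+\BegDist\sigma(\T_0)$, so $\T_0$ attains the maximum in~\eqref{eq:minimal len}.

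Second, I would describe the algorithm. Build $(3n)\Unroll\M$ explicitly from $\M$ and then construct $(3n)\Unroll\Syn$: this graph has $O(n^2)$ vertices and $O(n^2)$ edges (each vertex emits $O(1)$ noses, hollows, and steps), and is built in $O(n^2)$ time and space. By the discussion preceding Corollary~\ref{cor:one more hollow}, every internal cycle consists of exactly one backward edge followed by a forward path in the acyclic subgraph $\SynInt$ of $(3n)\Unroll\Syn$; moreover, each backward edge goes from the rightmost vertex of a row $r$ to the leftmost vertex of row $r$ (if it is a $0$-hollow) or row $r+1$ (if it is a $1$-step). Hence~\eqref{eq:minimal len} reduces to finding, for each of the $O(n^2)$ rows $r$ of $(3n)\Unroll\Syn$, the longest forward path from the leftmost vertex at row $r$ to the rightmost vertex at row $r$, weighted by $\Dist$ per edge plus $\BegDist$ per step, and adding the weight of the backward edge that closes the cycle. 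Because $\SynInt$ is acyclic, the longest path from a single source to a single sink is computed by a standard topological DP in time linear in the subgraph visited.

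Third, I would argue the $O(n^3)$ time bound. The main obstacle is that doing $O(n^2)$ independent DAG longest-path computations, each over $\SynInt$ of size $O(n^2)$, would give only $O(n^4)$. To improve this to $O(n^3)$, I would invoke the structural argument from~\cite{SoulignacJGAA2017a}: the relevant forward paths emanating from the leftmost vertex of row $r$ only reach rows within a bounded window of $r$, because the drawing of the path is the graph of a continuous function by Observation~\ref{obs:walk drawing} and must return to row $r$ at the corresponding rightmost vertex; this confines the DP to a strip of $O(n)$ rows and $O(n)$ vertices, giving $O(n)$ work per row and $O(n^3)$ total. The space remains $O(n^2)$, dominated by storing $(3n)\Unroll\Syn$. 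The main delicate point of the plan is formalising that all internal cycles of $(3n)\Unroll\Syn$ realising the maximum of~\eqref{eq:minimal len} are captured by iterating over rows (as opposed to over arbitrary backward edges), so that the argument yields both correctness and the claimed complexity.
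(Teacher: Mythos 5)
Your proposal is correct and follows essentially the same route as the paper: both directions of equation~\eqref{eq:minimal len} are established from Lemma~\ref{lemma:minimal conditions}(\ref{lemma:minimal conditions:ext=0}), Lemma~\ref{lem:rolled up circuit}, Theorem~\ref{thm:separation constraints}, and Corollary~\ref{cor:one more hollow}, and the algorithm is the same per-row longest-forward-path computation on $(3n)\Unroll\Syn$, with the $O(n^3)$ refinement delegated to~\cite{SoulignacJGAA2017a} just as the paper does. The only minor imprecision is the phrase ``a strip of $O(n)$ rows and $O(n)$ vertices'' --- what the confinement argument actually yields is $O(n)$ \emph{vertices} (equivalently, $O(\Rows(\M))$ rows), since rows of the unrolled graph have $n/\Rows(\M)$ arcs on average --- but this does not affect the final $O(n^3)$ accounting.
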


The second phase begins once the value of $\Len$ has been found.  Since $\Dist, \BegDist \in \mathbb{Q}$, we may write $\Dist = \frac{a_1}{b}$ and $\BegDist = \frac{a_2}{b}$ for $a_1, a_2, b \in \mathbb{N}$.  Let $\T_0$ and $\W_1$ be as in Theorem~\ref{thm:len and circ are integer} where, by Lemma~\ref{lem:rolled up circuit}, we may assume that $\W_1$ has an internal copy $\T_1$ in $(6n)\Unroll\Syn$.  Obviously, $|\T_0| \leq 3n^2$ and $|\T_1| \leq 6n^2$, thus $|\W_1|\leq 6n^2$.  Then, by~\eqref{eq:circ integrality}, taking into account that $\Jump(\W_1) \leq n$, it follows that $\Circ = \frac{a}{b}$ for some $a \leq k = 10(a_1+a_2)n^3$.  The idea, then, is to search for $\Circ \in [0, kn^3)$ with a bisection algorithm.  At each step, we test whether $\M$ is equivalent to a $\Desc=(\Circ', \Len, \Dist, \BegDist)$ by invoking the algorithm in Theorem~\ref{thm:separation constraints}.  This algorithm returns either a $\Desc$-CA model equivalent to $\M$ or a cycle $\W$ with $\Sep_{\Desc} > 0$.  In the former case, $\Circ \leq \Circ'$ by definition.  In the latter case, by~\eqref{eq:sep}, either $\Circ \leq \Circ'$ (if $\Ext(\W) \geq 0$) or $\Circ \geq \Circ'$ (if $\Ext(\W) \leq 0$).  This implies that only $O(\log n)$ invocations to the algorithm in Theorem~\ref{thm:separation constraints} are required in the bisection algorithm to locate $\Circ$.

\begin{lemma}\label{lem:circ procedure}
  Let $\M$ be a UCA model equivalent to a $(\Dist, \BegDist)$-minimal $(\Circ, \Len, \Dist, \BegDist)$-CA model, for $\Dist\in \mathbb{Q}_{> 0}$ and $\BegDist \in \mathbb{Q}_{\geq 0}$.  There is an algorithm that computes $\Circ$ in $O(n^2\log(n))$ time and $O(n)$ space when $\M$, $\Dist$, $\BegDist$, and $\Len$ are given as input. 
\end{lemma}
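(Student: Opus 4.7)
The plan is to formalize the bisection procedure sketched in the paragraph preceding the lemma. First I would pin down the set of candidate values that $\Circ$ can take. Writing $\Dist = a_1/b$ and $\BegDist = a_2/b$ with $a_1,a_2,b \in \mathbb{N}$, Theorem~\ref{thm:len and circ are integer} gives the identity~\eqref{eq:circ integrality} expressing $\Circ$ as an integer combination of $\Dist$ and $\BegDist$; combined with the bounds $|\T_0| \leq 3n^2$ and $|\W_1| \leq 6n^2$ (where $\W_1$ is the circuit of Lemma~\ref{lemma:minimal conditions}~(\ref{lemma:minimal conditions:ext=-1}) after applying Lemma~\ref{lem:rolled up circuit}) together with $|\Jump(\W_1)| \leq n$, this forces $\Circ = a/b$ for a non-negative integer $a$ bounded by a fixed polynomial $k$ in $n$ (with $a_1,a_2$ treated as constants). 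So $\Circ$ lies in a sorted candidate set of polynomial size.

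Next I would run a standard binary search over this set. At each step the current midpoint $\Circ'$ defines a descriptor $\Desc' = (\Circ', \Len, \Dist, \BegDist)$, and I invoke the $O(n^2)$ time, $O(n)$ space algorithm of Theorem~\ref{thm:separation constraints} on $\M$ and $\Desc'$. If it returns a $\Desc'$-CA model equivalent to $\M$, then by the $(\Dist,\BegDist)$-minimality of $\Unit$ we have $\Circ \leq \Circ'$. If it returns a cycle $\W$ with $\Sep_{\Desc'}(\W) > 0$, then subtracting the weighings $\Sep_{\Desc'}(\W)$ and $\Sep_\Desc(\W)$ via~\eqref{eq:sep} yields $(\Circ - \Circ')\Ext(\W) \leq -\Sep_{\Desc'}(\W) < 0$, since $\Sep_\Desc(\W) \leq 0$ by Theorem~\ref{thm:separation constraints} applied to the true $\Desc$. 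Hence $\Ext(\W)\neq 0$, and its sign dictates whether $\Circ < \Circ'$ or $\Circ > \Circ'$, which is exactly the information the bisection needs to halve the remaining interval.

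Because the candidate set has polynomial size, the search terminates in $O(\log n)$ iterations, each costing $O(n^2)$ time and $O(n)$ space, yielding the claimed $O(n^2\log n)$ time and $O(n)$ space bounds.

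The main obstacle I anticipate is the size bound on the numerator $a$: although the inequalities $|\T_0|, |\W_1|, |\Jump(\W_1)| = O(n^2)$ look straightforward, one must verify that the constants coming from Lemma~\ref{lem:rolled up circuit} compose cleanly with~\eqref{eq:circ integrality} to produce a $k$ that is genuinely polynomial, so that $\log k = O(\log n)$. Everything else is bookkeeping: the correctness of the direction chosen from the sign of $\Ext(\W)$ follows immediately from~\eqref{eq:sep}, and the $O(n)$ space bound relies on not materializing any unrolled synthetic graph but rather working directly in $\Syn$, as the algorithm of Theorem~\ref{thm:separation constraints} does.
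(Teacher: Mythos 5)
Your proposal is correct and matches the paper's proof essentially step for step: write $\Dist=a_1/b$, $\BegDist=a_2/b$, use~\eqref{eq:circ integrality} together with the bounds $|\T_0|\leq 3n^2$, $|\W_1|\leq 6n^2$, $|\Jump(\W_1)|$ polynomial to conclude $\Circ=a/b$ with $a$ polynomially bounded, and then bisect using the oracle of Theorem~\ref{thm:separation constraints}, reading the direction from the sign of $\Ext(\W)$ via~\eqref{eq:sep}. Your derivation $(\Circ-\Circ')\Ext(\W)=\Sep_\Desc(\W)-\Sep_{\Desc'}(\W)<0$ is a clean way to phrase exactly the sign argument the paper uses, and your concern about whether the constants compose to a genuinely polynomial $k$ is the same bookkeeping the paper carries out (arriving at $k=10(a_1+a_2)n^3$).
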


Finally, for the last phase, we simply invoke the algorithm in Theorem~\ref{thm:separation constraints} with $\Len$ and $\Circ$ as input.  Since this last steps costs $O(n^2)$ time and $O(n)$ space, we obtain the main theorem of this section. 

\begin{theorem}\label{thm:algorithm}
 Given a UCA model $\M$ and two values $\Dist\in \mathbb{Q}_{> 0}$ and $\BegDist \in \mathbb{Q}_{\geq 0}$, a $(\Dist, \BegDist)$-minimal $\Desc$-CA model can be computed in $O(n^3)$ time and $O(n^2)$ space.
\end{theorem}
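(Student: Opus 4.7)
The plan is to carry out the three-phase strategy already sketched in Section~\ref{sec:minimal algorithm}, applying Lemmas~\ref{lem:len procedure} and~\ref{lem:circ procedure} in sequence and finishing with the constructive algorithm of Theorem~\ref{thm:separation constraints}. First I would invoke Lemma~\ref{lem:len procedure} on the inputs $\M$, $\Dist$, $\BegDist$, obtaining the value $\Len$ of the minimal length in $O(n^3)$ time and $O(n^2)$ space. This is the dominant step and its correctness is already established through~\eqref{eq:minimal len} together with Theorem~\ref{thm:len and circ are integer}.

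Second, with $\Len$ in hand, I would apply Lemma~\ref{lem:circ procedure} to compute the minimal circumference $\Circ$, which takes $O(n^2 \log n)$ time and $O(n)$ space. The correctness here rests on the bisection argument: the discussion before Lemma~\ref{lem:circ procedure} bounds the numerator of $\Circ$ (written as $a/b$ with $b$ coming from the common denominator of $\Dist$ and $\BegDist$) by a polynomial in $n$, so only $O(\log n)$ probes using Theorem~\ref{thm:separation constraints} are needed, and each probe either certifies feasibility or returns a cycle whose sign of $\Ext$ tells us on which side of $\Circ$ the current guess lies.

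Third, once both $\Len$ and $\Circ$ have been computed, I would run the algorithm from Theorem~\ref{thm:separation constraints} with the descriptor $\Desc = (\Circ, \Len, \Dist, \BegDist)$ as input. By Theorems~\ref{thm:len and circ are integer} and~\ref{thm:separation constraints}, this produces a $\Desc$-CA model equivalent to $\M$, and by the minimality of $\Circ$ and $\Len$ this model is $(\Dist, \BegDist)$-minimal. This last step costs $O(n^2)$ time and $O(n)$ space.

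Summing the three phases, the time complexity is $O(n^3) + O(n^2 \log n) + O(n^2) = O(n^3)$ and the space is $O(n^2) + O(n) + O(n) = O(n^2)$, matching the claimed bounds. The ``main obstacle'' here is essentially bookkeeping rather than a genuine difficulty: the three lemmas do all the nontrivial work, and the only thing to verify in writing up the proof is that composing them in order yields the advertised complexity and that the output is indeed $(\Dist, \BegDist)$-minimal, which follows because $\Len$ and $\Circ$ are produced as the minimum values compatible with equivalence to $\M$.
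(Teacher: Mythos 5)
Your proposal matches the paper's proof exactly: the same three-phase decomposition using Lemma~\ref{lem:len procedure}, then Lemma~\ref{lem:circ procedure} via bisection over a polynomially bounded set of candidate numerators, and finally a single call to the algorithm of Theorem~\ref{thm:separation constraints} with the computed descriptor, with the same complexity accounting in each phase. No discrepancies to note.
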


\section{The minimum representation problem}
\label{sec:minimum representation}

In this section we prove that the minimum representation problem is well defined and \NP-complete.  In order to do so, we first review the structure of PCA models.  Our review is just a translation of some results by Huang~\cite{HuangJCTSB1995} to the framework of synthetic graphs; these results also appear in~\cite{KoeblerKuhnertVerbitskyDAM2017,SoulignacA2015}.

\begin{figure}[t!]
  \centering
  \begin{tabular}{c@{\hspace{1cm}}c@{\hspace{1cm}}c}
    \includegraphics{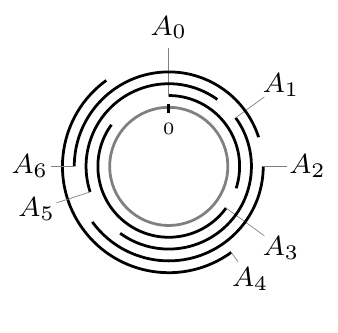} & \includegraphics{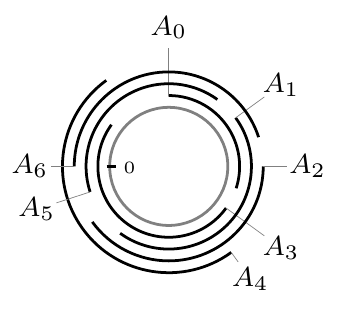} & \includegraphics{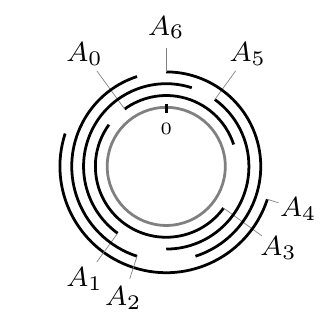} \\
    $\M$ & $\M'$ & $\M^{-1}$
  \end{tabular}
  \caption{An aligned PCA model $\M$ shift equivalent to $\M'$ whose reverse is $\M^{-1}$.}\label{fig:shift and reverse}
\end{figure}

Let $\M = (C, \A)$ be a PCA model and say that $A \in \A$ is \emph{universal} when $A$ intersects every arc in $\A$.  If no arc of $\M$ is universal, then $\M$ is \emph{universal-free}, while if all the arcs of $\M$ are universal, then $\M$ is \emph{complete}.  Any model $\M'$ obtained after moving the point $0$ of $C$ as being \emph{shift equivalent} to $\M$ (\Figure~\ref{fig:shift and reverse}(b)).  Let $\A^{-1} = \{A^{-1} \mid A \in \A\}$, where $A^{-1} = (|C|-t(A), |C|-s(A))$ for $A \in \A$.  (Sometimes we write $\M^1$ and $A^1$ to refer to $\M$ and $A$, respectively.)  The \emph{reverse} $\M^{-1}$ of $\M$ is PCA model obtained from $(C, \A^{-1})$ after moving $0$ to $t(A)$, where $A$ is the last arc of $\A$ w.r.t $<$ (\Figure~\ref{fig:shift and reverse}(c)). Clearly, $\M$ is a $(\Circ,\Len)$-CA model if and only if $\M^{-1}$ is a $(\Circ,\Len)$-CA model, while $A_0 < \ldots < A_{n-1}$ are the arcs of $\M$ if and only if $A_{n-1}^{-1} < \ldots < A_{0}^{-1}$ are the arcs of $\M^{-1}$.   If $\M^{-1}$ is the unique PCA model isomorphic to $\M$, up to shift equivalence, then $\M$ is called \emph{singular}. 

\begin{figure}[t!]
  \centering
  \begin{tabular}{c@{\hspace{5mm}}c@{\hspace{5mm}}c}
    \includegraphics{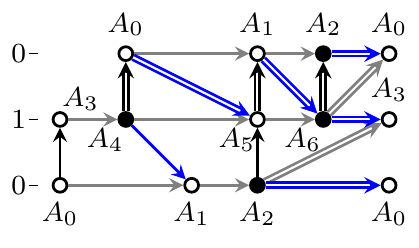} & \includegraphics{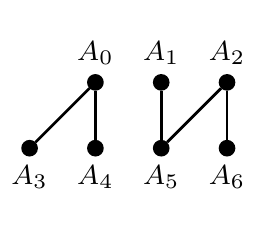} & \includegraphics{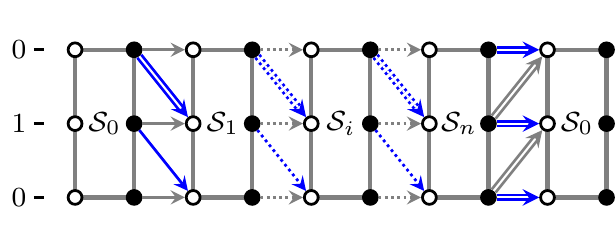} \\
    (a) & (b) & (c)
  \end{tabular}
  \caption{(a) Mitas' drawing $\S(\M)$ of $\M$ (\Figure~\ref{fig:shift and reverse}) with backward and external edges; co-starts are filled with white.  Note that $A_0$ and all the black nodes are co-ends.  (b) The complement of the intersection graph of $\M$. (c) The general case when $\M$ is universal-free.}\label{fig:boundary}
\end{figure}

A greedy nose cycle $\W$ of $\Syn$ is a \emph{boundary} when it contains exactly two noses.  If $B \to A$ is a nose of $\W$, then $A$ is a \emph{co-start}, while the arc immediately preceding $A$ in the circular ordering implied by $<$ is a \emph{co-end}.  Clearly, every boundary has two co-starts, thus $\M$ has an even number of co-start and co-end arcs.  We say that $\M$ is \emph{aligned} when its first arc (w.r.t $<$) is a co-start.  It is easy to see that $\Syn$ has exactly two rows when $\M$ is aligned (\Figure~\ref{fig:boundary}(a)).  

Let $G$ be the complement of the intersection graph of $\M$, i.e., $G$ has a vertex $v_A$ for each $A \in \A$, while $v_A$ and $v_B$ are adjacent if and only if $A \cap B = \emptyset$ (\Figure~\ref{fig:boundary}(b)).  We say that a subset $\A'$ of $\A$ \emph{induces a co-component} of $\M$ when their corresponding vertices induce a component in $G$.  Sometimes we also refer to the PCA model $(C, \A')$ as being a co-component of $\M$.  The next results describe the structure of a PCA model (\Figure~\ref{fig:boundary}).

\begin{lemma}[\cite{HuangJCTSB1995,KoeblerKuhnertVerbitskyDAM2017,SoulignacA2015}]\label{lem:singular}
 If $\M$ is nonsingular and universal-free, then $\M$ has $k>1$ co-components and $\Syn$ has $k$ boundary cycles.
\end{lemma}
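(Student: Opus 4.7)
The plan is to reduce to aligned models and exploit the two-row structure of Mitas' drawing of $\Syn$. Since both the number of co-components of $\M$ and the number of boundary cycles of $\Syn$ are invariant under shift equivalence, I would first shift $\M$ so that it is aligned; then $\Syn$ has exactly two rows, every nose flips the row (being either a $1$-nose or a $-1$-nose), and greedy nose cycles have a very restricted shape.

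In this setting, because $\M$ is universal-free, every arc has some nose leaving it, so the rule ``take a nose if available, else a step'' defines a deterministic greedy successor. Following these greedy edges from any co-start $A$ produces a walk that must return to $A$, and since each nose flips the row, the closed walk contains an even number of noses; the minimal such closed walks have exactly two noses and are thus precisely the boundaries. This naturally partitions the co-starts into pairs, one pair per boundary.

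Next I would match these pairs with co-components. Each pair of co-starts of a boundary delimits a circular interval of $C$, and tracing the disjointness pattern illustrated in Figure~\ref{fig:boundary}(c), I would show that the arcs enclosed by a boundary form exactly one co-component of $\M$ and that distinct boundaries enclose disjoint subsets that partition $\A$. This yields the bijection between boundaries and co-components, i.e., $k$ boundaries for $k$ co-components. For $k > 1$: if $\M$ had a single co-component, then the complement of its intersection graph would be connected and the unique boundary would rigidify the circular ordering up to shift and reversal; the only PCA models isomorphic to $\M$ up to shift equivalence would then be $\M$ and $\M^{-1}$, making $\M$ singular and contradicting the hypothesis.

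The hardest step will be the bijection between boundaries and co-components: the picture is geometrically compelling in the two-row Mitas drawing, but making it rigorous requires verifying both that the arcs ``inside'' a boundary induce a connected subgraph of the complement of the intersection graph with no disjointness edges leaving the region, and that every co-component arises this way. Once this is done, the counts match and the nonsingularity argument in the last paragraph finishes the proof.
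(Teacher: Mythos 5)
The paper does not prove this lemma; it is cited from Huang, K\"obler--Kuhnert--Verbitsky, and Soulignac, so there is no internal argument to compare your sketch against. Taken on its own, your plan (align, exploit the two-row Mitas drawing, pair co-starts via boundaries, match boundaries to co-components, get $k>1$ from nonsingularity) has the right shape, but it contains one incorrect claim and one genuine logical gap.

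The claim that ``because $\M$ is universal-free, every arc has some nose leaving it'' is false. With extremes in circular order $s_0, s_1, t_0, t_1, s_2, s_3, t_2, t_3$, the arc $A_0$ is disjoint from $A_2$ (hence not universal), yet the extreme following $t(A_0)$ is $t(A_1)$, so $A_0$ has no nose. That slip is actually harmless (the greedy rule is deterministic because each arc has at most one nose and exactly one step, and a co-start lies on its boundary by definition), but the deeper problem is that aligning $\M$ already presupposes a co-start, hence a boundary, whose existence is part of what the lemma asserts. Nothing in ``nonsingular and universal-free'' obviously yields a greedy nose cycle with exactly two noses: a $C_5$ PCA model is universal-free, yet its unique greedy nose cycle uses five noses and it has no boundary (that model is singular, so it does not refute the lemma, but it shows boundary existence must be argued, not assumed). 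Your proof would first have to show that nonsingularity forces the complement of the intersection graph to be disconnected and that this forces the independence number to be two, producing the two-row structure and the boundaries, before the alignment and pairing steps can even begin. Finally, as you yourself note, the boundary-to-co-component bijection is the real content; in the form you sketch it, together with the concluding $k>1$ argument, you are essentially appealing to the content of Theorem~\ref{thm:co-components} and Theorem~\ref{thm:all models}, which in a self-contained proof would themselves need to be established rather than invoked.
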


\begin{theorem}[\cite{HuangJCTSB1995,KoeblerKuhnertVerbitskyDAM2017,SoulignacA2015}]\label{thm:co-components}
 Suppose $\M$ is an aligned and universal-free PCA model with co-starts $A_0 < \ldots < A_{2k-1}$ ($k>0$), and let $B_0 < \ldots < B_{2k-1}$ be its co-ends. Define $\A_i^j$ as the family of arcs between $A_{kj+i}$ and $B_{kj+i}$ for $i \in \Range{k}$, $j \in \{0,1\}$, and $\A_i = \A_i^{0} \cup \A_i^{1}$.  If we write $A_{(2+j)k} = A_{jk}$, then:
 \begin{enumerate}[(a)]
  \item The steps $B_{jk+i} \to A_{jk+i+1}$ and hollows $B_{jk+i} \to A_{(1-j)k+i+1}$ are the only edges of $\Syn$ that can go from a vertex in $\A_i$ to a vertex outside $\A_i$.\label{thm:co-components:bridges}
  \item The submodel $\M_i = (C,\A_i)$ is a singular and aligned co-component of $\M$.\label{thm:co-components:submodel}
  \item $\Syn(\M_i)$ is the $3$-digraph that is obtained after inserting the steps $B_{jk+i} \to A_{(1-j)k+i}$ and the hollows $B_{jk+i} \to A_{jk+i}$ to the sub-$3$-digraph of $\Syn$ induced by $\A_i$.\label{thm:co-components:subsyn}
 \end{enumerate}
\end{theorem}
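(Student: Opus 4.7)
The plan is to exploit the two-row structure of aligned universal-free PCA models (recall $\Syn$ has exactly two rows in this case) and the fact that each $\A_i^j$ is a contiguous block of arcs in the cyclic ordering that stretches from the co-start $A_{kj+i}$ to the co-end $B_{kj+i}$. Throughout, I would reason on Mitas' drawing where the arcs of $\A_i^0$ occupy a horizontal stretch of row $0$ and those of $\A_i^1$ occupy a horizontal stretch of row $1$, so that $\A_i$ looks like a ``horizontal strip'' joined to other strips only at its boundary arcs.

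For (a), I would argue that only the boundary vertices $A_{kj+i}$ and $B_{kj+i}$ can carry edges of $\Syn$ to a vertex outside $\A_i^j$. By the definition of co-start and co-end, the nose starting at $B_{kj+i}$ points to the next co-start in the greedy nose cycle structure, which lies on the \emph{other} row and remains inside $\A_i$ (namely $A_{(1-j)k+i}$, by the boundary that defines $\A_i$); the hollow starting at $A_{kj+i}$ points to an arc inside $\A_{i}$ for the same reason. The remaining edges leaving $B_{kj+i}$ are precisely its step and its hollow: inspecting consecutive extremes, the step lands on the next arc in the circular order, which is the next co-start $A_{jk+i+1}$, while the hollow lands on the ``partner'' on the opposite row, which is $A_{(1-j)k+i+1}$. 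Edges leaving $A_{kj+i}$ go backward and are analyzed symmetrically (they are identified with step/hollow edges arriving at some $A_{kj+i}$ from outside).

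For (b), I would first use (a) to show that $\A_i$ is closed under the non-intersection relation: any arc in $\A_i$ fails to meet only arcs reached via noses or hollows, and (a) tells us that those hollows/noses leaving $\A_i$ are accounted for by the steps and hollows into co-starts of the neighboring strip, so no co-adjacency edge of $G$ connects $\A_i$ with its complement. Conversely, within $\A_i$ the boundary cycle passing through $A_i$ and $A_{k+i}$ realizes a sequence of noses and hollows that connects $\A_i^0$ and $\A_i^1$, so $\A_i$ induces a single co-component. Alignment of $\M_i$ is immediate because its first arc $A_i$ is a co-start. For singularity, observe that $\Syn(\M_i)$ inherits exactly one boundary (the one of $\M$ passing through $A_i$ and $A_{k+i}$), so Lemma~\ref{lem:singular} forces $\M_i$ to be singular.

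For (c), I would note that since $\A_i^0$ and $\A_i^1$ are contiguous blocks of $\M$, every pair of consecutive extremes strictly inside one of these blocks is the same in $\M$ and in $\M_i$, so the sub-$3$-digraph of $\Syn$ induced by $\A_i$ already contains all internal edges of $\Syn(\M_i)$. The only difference is that in $\M_i$, after the co-end $B_{jk+i}$ the next arc in the new cyclic order is $A_{(1-j)k+i}$ (the opposite strip begins); this produces precisely the step $B_{jk+i}\to A_{(1-j)k+i}$ and the hollow $B_{jk+i}\to A_{jk+i}$ that close the loop. No other edge is created, because the consecutive-extreme relation is unchanged elsewhere. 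The main obstacle in this plan is the bookkeeping in (a): one has to check carefully which of the four possible wrap-around edges from $B_{jk+i}$ goes to which co-start, and to verify that the indexing convention $A_{(2+j)k}=A_{jk}$ matches the actual cyclic order of the co-starts on the two rows.
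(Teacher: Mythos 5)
The paper does not actually prove this theorem: it is stated as a translation, to the language of synthetic graphs, of structural results of Huang, K\"obler--Kuhnert--Verbitsky, and Soulignac, and cited without proof, so there is no paper proof to compare against. Judged on its own terms, your plan has the right picture (two-row Mitas drawing, $\A_i$ as a pair of horizontal strips, ``bridge'' edges at the co-ends), and part (c) is sound once (a) is established, but two steps in (a)--(b) are asserted rather than proved, and they hide the real content.

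In (a), the opening assertion that ``only the boundary vertices $A_{kj+i}$ and $B_{kj+i}$ can carry edges of $\Syn$ to a vertex outside $\A_i^j$'' is precisely what has to be shown, not a starting point: nothing in your sketch rules out a $1$-nose from an interior arc of $\A_i^0$ landing outside $\A_i^1$, or a $0$-hollow from an interior arc landing outside $\A_i^0$. Moreover, the identification ``the nose starting at $B_{kj+i}$ points to $\dots$ $A_{(1-j)k+i}$'' conflates $\Syn(\M)$ with $\Syn(\M_i)$: in $\Syn(\M)$ a nose from $B_{jk+i}$ need not exist, and when it does, there is no reason its head is the co-start $A_{(1-j)k+i}$ (that edge is one of the \emph{new} edges inserted in (c), not an edge of $\Syn$). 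In (b) you pass from (a), which is a statement about the steps/noses/hollows of $\Syn$, to the co-adjacency relation (arcs failing to intersect); these are not the same relation, since disjoint arcs are joined by a nose or hollow only when they are ``consecutive'' in the appropriate sense. You need an explicit bridge lemma --- roughly, that two arcs lie in the same co-component if and only if some nose/hollow walk of $\Syn$ joins them --- before (a) can be converted into the closure statement that (b) requires.
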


By Theorem~\ref{thm:co-components}, we write that $\M_0 < \ldots < \M_{k-1}$ are the co-components of an aligned and universal-free model $\M$ to indicate that $\M_i$ has $A_i$ as its first arc, where $A_0 < \ldots < A_{2k-1}$ are the co-starts of $\M$.  

For a given model $\M$, we can compute the minimal value of $\Circ$ from the minimal value of $\Len$ by applying the increasing function defined by~\eqref{eq:circ integrality} to a circuit of $\Syn$ with $\Ext=-1$ and $\Sep=0$.  To prove the existence of minimum models, we show that some of these circuits belong to all the PCA models isomorphic to $\M$, because they are ``trapped'' inside singular submodels of $\M$.  The next lemma describes such circuits, while the following Theorem completes the proof.

\begin{lemma}\label{lem:paths in co-components}
  Let\/ $\Unit$ be a $(\Dist,\BegDist)$-minimal $\Desc$-CA model.  If\/ $\Unit$ is a\-ligned, then\/ $\Syn$ contains a circuit with\/ $\Sep_\Desc = 0$ and $\Ext = -1$ that has a forward copy in $(6n) \Unroll\Syn$.
\end{lemma}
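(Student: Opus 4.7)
My plan is to start by combining the two tools from the preceding lemmas. Lemma~\ref{lemma:minimal conditions}(\ref{lemma:minimal conditions:ext=-1}) furnishes a circuit $\W_1$ of $\Syn$ with $\Sep_\Desc(\W_1)=0$ and $\Ext(\W_1)=-1$. Applying Lemma~\ref{lem:rolled up circuit} to $\W_1$ with $\Copies = 3(|{-1}|+1)n = 6n$ produces a circuit $\W$ of $\Syn$ with $\Sep_\Desc(\W)\geq\Sep_\Desc(\W_1)=0$, which forces $\Sep_\Desc(\W)=0$ by Theorem~\ref{thm:separation constraints}, and with $\Ext(\W)=\Ext(\W_1)=-1$, together with an internal path copy $\T$ of $\W$ in $(6n)\Unroll\Syn$. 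The remaining task is to upgrade ``internal'' to ``forward''.

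The crucial structural input is the alignment hypothesis. Since $\Unit$ is aligned, $\Syn$ has exactly two rows; moreover, by Theorem~\ref{thm:co-components}, each co-component $\M_0<\dots<\M_{k-1}$ of $\Unit$ is singular, and the only edges of $\Syn$ connecting distinct co-components are the specific steps and hollows identified in part~(\ref{thm:co-components:bridges}) of that theorem. In Mitas' drawing of $(6n)\Unroll\Syn$ this yields a regular ladder-like picture in which every copy of each $\M_i$ occupies two consecutive rows, and the backward edges that can appear along $\T$ in $(6n)\Unroll\Syn$ are precisely the internal copies of a small number of ``bridging'' external edges of $\Syn$.

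With this geometry in hand, I would argue by a greedy replacement argument that $\T$ may be taken forward. If $\T$ contains a backward edge, Observation~\ref{obs:walk drawing} and Theorem~\ref{thm:plane drawing} force $\Gr(\T)$ to turn back at that edge, so the turning portion stays inside a single (singular) co-component copy and can be rerouted: inside a singular co-component the two orientations of the boundary cycle coincide up to shift equivalence, providing a forward detour with the same contribution to $\Sep_\Desc$ and $\Ext$. After the detour, I obtain a new internal path copy $\T'$ of a circuit $\W'$ of $\Syn$ with $\Sep_\Desc(\W')=0$, $\Ext(\W')=-1$, and strictly fewer backward edges than $\T$. Iterating this procedure finitely many times yields the required forward copy.

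The main obstacle I anticipate is justifying that each local replacement stays inside the unrolled graph as the copy of a \emph{circuit} of $\Syn$ (not just a walk in $(6n)\Unroll\Syn$), and that it does not perturb $\Sep_\Desc$ or $\Ext$. This is where the aligned/singular hypotheses carry the weight: singularity of the co-components (Lemma~\ref{lem:singular}) ensures the availability of forward reroutes, while the two-row structure together with Corollary~\ref{cor:one more hollow} controls the $\Jump$ accounting and hence the $\Sep_\Desc$-neutrality of each local swap. The choice $\Copies=6n$ gives enough room in $(6n)\Unroll\Syn$ to perform all such reroutes while keeping both $\T$ and the reroutes well inside the interior of the unrolled model.
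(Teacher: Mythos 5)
Your opening move is exactly the paper's: combine Lemma~\ref{lemma:minimal conditions}(\ref{lemma:minimal conditions:ext=-1}) with Lemma~\ref{lem:rolled up circuit} (taking $\Copies = 3(1+1)n = 6n$) to obtain a circuit $\W$ with $\Sep_\Desc(\W)=0$, $\Ext(\W)=-1$, and an internal path copy $\T$ in $(6n)\Unroll\Syn$; and, as you note, $\Sep_\Desta\geq 0$ forces equality by Theorem~\ref{thm:separation constraints}. After that, however, your argument diverges from the paper's and has genuine gaps.

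The central claim you lean on---that a backward edge of $\T$ ``turns back'' strictly inside a copy of a single singular co-component, and that singularity supplies a forward reroute preserving both $\Sep_\Desc$ and $\Ext$---is not substantiated, and the first half of it is actually in tension with Theorem~\ref{thm:co-components}(\ref{thm:co-components:bridges}): the only edges crossing between distinct co-components are precisely certain steps and hollows, which is exactly the class of edges that can be backward in the unrolled drawing. So a backward edge of $\T$ can very well be a bridging edge between different co-components, and the ``turning portion'' is not confined to one of them. Even granting confinement, you never produce the claimed forward detour: singularity (uniqueness of the model up to reversal and shift) does not obviously yield, for each backward edge, a forward path in the same unrolled graph with identical $\Sep_\Desc$- and $\Ext$-contribution, nor does it guarantee that the rerouted walk is again the copy of a \emph{circuit} of $\Syn$, which is the object you must ultimately exhibit. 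Finally, the termination and room-in-the-unrolling claims for the iteration are asserted, not argued.

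The paper's proof is structured very differently and does not remove backward edges one at a time. It fixes, among all eligible circuits $\W$, one whose internal path copy $\T$ has largest minimum row $j$ and, among those, fewest arcs at row $j$. It then shows, through a sequence of local arguments (the edge leaving the lowest-row vertex $X_j$ must be a $1$-nose; $X_j$ cannot be universal; a specific step $L_{j+1}\to R_{j+1}$ cannot lie on $\T$), that $\T$ must intersect its own shifted copy $\T'$ (the copy starting at $A_{i+2}$), by Observation~\ref{obs:walk drawing} and Theorem~\ref{thm:plane drawing}. The forward circuit is then read off from the overlap of $\T$ with $\T'$. So the real engine is a global extremality-plus-planarity argument, not a sequence of local swaps; your proposal is missing that idea entirely.
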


\begin{proof}
  For $i \in \Range{12n}$, let $A_i$ and $B_i$ be the leftmost and rightmost arcs at row $i$ of $(6n)\Unroll\Syn$, respectively.  We remark that $A_{2r+i}$ is a copy of $A_i$ for every $r \in \Range{6n}$ and $i \in \{0,1\}$ because $\Unit$ is aligned.  Thus, every even (resp.\ odd) row of $(6n)\Unroll\Syn$ is a copy of row $0$ (resp.\ $1$).  By Lemmas \ref{lemma:minimal conditions} (\ref{lemma:minimal conditions:ext=-1})~and~\ref{lem:rolled up circuit}, $\Syn$ has a circuit $\W$ with $\Sep_\Desc = 0$, $\Ext = -1$, and an internal copy $\T$ in $(6n)\Unroll\Syn$ that is a path.  Suppose $\T$ is not forward and note that, by taking an appropriate starting vertex, we may assume that $\T$ goes from $A_i$ to $A_{i+2}$ for some $i \in \Range{12n-2}$ (\Figure~\ref{fig:paths in co-components}(a)).  Let $j$ be the lowest row reached by $\T$ and, among all the possibilities for $\W$, take one such that $j$ is as large as possible and $\T$ has the fewest number of arcs at row $j$.  
  
  Let $X_j$ be the maximum arc (w.r.t. $<$) at row $j$ that belongs to $\T$.  Since $\T$ is a path, Theorem~\ref{thm:plane drawing} and Observation~\ref{obs:walk drawing} imply that the edge $X_j \to Y_{j+1}$ of $\T$ is not a hollow.  The edge $X_{j} \to Y_{j+1}$ is neither a step; otherwise $X_j = B_i$, $Y_{j+1} = A_{i+1}$, and the circuit obtained by replacing $B_i \to A_{i+1}$ with the hollow $B_i \to A_i$ plus the nose path from $A_i$ to $A_{i+1}$ would have $\Sep_\Desc > 0$, contradicting Theorem~\ref{thm:separation constraints}.  Thus, $X_j \to Y_{j+1}$ is a $1$-nose (\Figure~\ref{fig:paths in co-components}(a)).
  
  We claim that $X_j$ is not universal; suppose otherwise to obtain a contradiction.  Since $X_j \to Y_{j+1}$ is a nose, the copy $X_{j+2}$ of $X_j$ at row $j+2$ has a hollow to $Y_{j+1}$.  Moreover, if $Z_{j+1} \to Y_{j+1}$ is a step, then $Z_{j+1}$ has a nose to an arc at row $j+2$ and a hollow to its copy at row $j$.  Thus, the unique nose path $\T_N$ from $Z_{j+1}$ to $X_{j+2}$ has the same length as the unique hollow path $\T_H$ from $Z_{j+1}$ to $X_j$.  Note that $\T$ must contain $\T_H$ as a subwalk because, by the minimality of $j$, it cannot contain any nose ending at row $j$.  But then, we can replace $\T_H, Y_{j+1}$ with $\T_N, Y_{j+1}$ in $\T$ to obtain a walk whose lowest row is as at least $j$ and that has fewer arcs than $\T$ at row $j$.  Therefore, $X_j$ is not universal.  
  
  Since $X_j$ is not universal in $\Unit$, there exists a nose path from $Y_{j+1}$ to $X_{j+2}$.  Hence, there exists a nose $L_{j+1} \to L_{j+2}$ with $Y_{j+1} \leq L_{j+1}$ and $L_{j+2} \leq X_{j+2}$. Among all the possibilities, take $L_{j+2}$ to be maximum and let $R_{j+2} \to R_{j+1}$ be the hollow with minimum $R_{j+2} \geq L_{j+2}$.  Note that either $R_{j+2}$ is the rightmost vertex at row $j+2$ or some nose ends at the arc that immediately follows $R_{j+2}$. Whichever the case, $L_{j+2} \leq X_{j+2} \leq R_{j+2}$ by the maximality of $L_{j+2}$ (\Figure~\ref{fig:paths in co-components}(a)).  Note also that $L_{j+1} \to R_{j+1}$ is a step because of the minimality of $R_{j+2}$ (\Figure~\ref{fig:paths in co-components}(a)).  By Theorem~\ref{thm:separation constraints}, $\T$ cannot contain the step $L_{j+1} \to R_{j+1}$ because we could replace it with the path from $L_{j+1}$ to $R_{j+1}$ that contains the steps between $L_{j+2}$ and $R_{j+2}$, obtaining a circuit with $\Sep_\Desc > 0$ (\Figure~\ref{fig:paths in co-components}(b)).
   
  Let $\T'$ be the copy of $\T$ (possibly in $(6n+2)\Unroll\Syn$) that beings at $A_{i+2}$ (\Figure~\ref{fig:paths in co-components}(c)).  By Theorem~\ref{thm:plane drawing} and Observation~\ref{obs:walk drawing}, it can be observed that $\T$ and $\T'$ share some vertex $T$ in the subpath of $\T'$ that goes from $A_{i+2}$ to $X_{j+2}$ (Fig~\ref{fig:paths in co-components}(c)).  Therefore, the subpath from the copy of $T$ at row $\Row(T) - 2$ to $T$ is forward in $(6n)\Unroll\Syn$, and has $\Ext = -1$ and $\Sep_\Desc = 0$, as desired.
\end{proof}

\begin{figure}[t!]
  \centering
  \begin{tabular}{c@{\hspace{5mm}}c@{\hspace{5mm}}c}
    \includegraphics{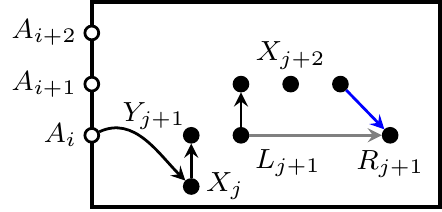} & \includegraphics{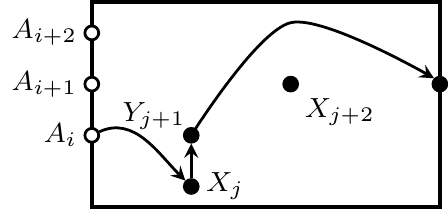} & \includegraphics{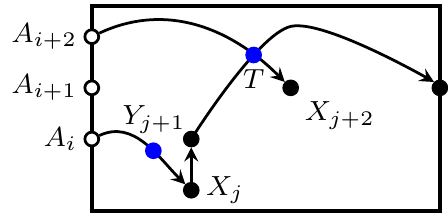} \\
    (a) & (b) & (c)
  \end{tabular}
  \caption{Proof of Lemma~\ref{lem:paths in co-components}.}\label{fig:paths in co-components}
\end{figure}

\begin{theorem}
  Every $(\Circ,\Len,\Dist,\BegDist)$-CA model $\M$ is equivalent to a $(\Dist,\BegDist)$-minimum UCA model.
\end{theorem}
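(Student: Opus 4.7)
The plan is to construct a $(\Dist,\BegDist)$-minimum UCA model inside the isomorphism class of $\M$, in three stages: (i) enumerate the equivalence classes within this isomorphism class; (ii) pick one whose minimal model achieves the smallest $\Len$; (iii) show that the same minimal model also achieves the smallest $\Circ$.

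For (i), I would first shift to make $\M$ aligned and reduce to the universal-free case, since models with universal arcs only impose the trivial constraint $\Len \geq \Circ/2$ and are handled directly. Theorem~\ref{thm:co-components} together with Lemma~\ref{lem:singular} then decomposes $\M$ into singular aligned co-components $\M_0 < \ldots < \M_{k-1}$. Any $\M'$ isomorphic to $\M$ is obtained, up to shift, by permuting these co-components and optionally reversing each, so the isomorphism class partitions into finitely many equivalence classes. By the preliminary existence theorem and Theorem~\ref{thm:len and circ are integer}, each equivalence class $C$ admits a $(\Dist,\BegDist)$-minimal $\Desc$-CA model $m(C)$ whose $\Len$ and $\Circ$ are integer combinations of $\Dist$ and $\BegDist$, so the set of achievable pairs $(\Len',\Circ')$ is discrete and bounded below.

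For (ii), set $\Len^* = \min_C \Len(m(C))$, choose an equivalence class $C^*$ achieving the minimum, and let $\Unit = m(C^*)$ have parameters $(\Circ^*,\Len^*)$. Then $\Len' \geq \Len^*$ for every $(\Circ',\Len',\Dist,\BegDist)$-CA model $\Unit'$ isomorphic to $\M$, by the definition of $\Len^*$ and the minimality of $m(C)$ within its class.

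For (iii), apply Lemma~\ref{lem:paths in co-components} to $\Unit$ to obtain a circuit $\W_1$ of $\Syn(\Unit)$ with $\Sep_{(\Circ^*,\Len^*)}(\W_1)=0$, $\Ext(\W_1)=-1$, whose copy $\T_1$ in $(6n)\Unroll\Syn(\Unit)$ is a forward path. Since $\T_1$ is forward it avoids $0$-hollows, which together with the inter-co-component edge description of Theorem~\ref{thm:co-components}(\ref{thm:co-components:bridges}) and a geometric analysis in Mitas' drawings (Observation~\ref{obs:walk drawing} and Theorem~\ref{thm:plane drawing}) allows me to localize $\T_1$ to the unrolling of a single singular co-component $\Unit_j$. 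The rigidity of $\Unit_j$ then guarantees that every $\Unit'$ isomorphic to $\M$ contains a co-component identical to $\Unit_j$ up to shift and possible reversal; the corresponding $\Syn(\Unit')$ carries a circuit $\W_1'$ with matching $|\W_1'|$, $\Noses$, $\Hollows$, $\Steps$ counts and $\Ext(\W_1')=-1$. Applying Theorem~\ref{thm:separation constraints} to $\W_1'$ in $\Syn(\Unit')$ forces $\Sep_{(\Circ',\Len')}(\W_1')\leq 0$, and expanding via~\eqref{eq:sep} combined with $\Len'\geq\Len^*$ yields $\Circ'\geq\Circ^*$, as required.

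The main obstacle is the localization step in (iii): showing that the forward circuit $\T_1$ can be confined to the unrolling of a single singular co-component, and that its combinatorial invariants are faithfully transported to every isomorphic $\Unit'$. Both points hinge on the rigidity afforded by singularity, but the confinement requires a careful geometric argument in the unrolled Mitas' drawing, exploiting that forward paths cannot cross co-component boundaries via $0$-hollows and that any crossing via a boundary $0$-step can be absorbed using the rigid structure of the adjacent co-component. Once the combinatorial witness $\W_1'$ is produced, the rest of the argument is a single application of Theorem~\ref{thm:separation constraints}.
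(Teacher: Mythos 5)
Your proposal follows the paper's route in all its essentials: take the isomorphic $(\Dist,\BegDist)$-minimal model $\M^*$ with smallest $\Len^*$, extract from Lemma~\ref{lem:paths in co-components} a circuit $\W$ of $\Syn(\M^*)$ with $\Sep_v=0$, $\Ext=-1$, and a forward copy in $(6n)\Unroll\Syn(\M^*)$, show $\W$ is trapped inside a singular co-component, transport it to every model isomorphic to $\M$, and close with Theorem~\ref{thm:separation constraints} and~\eqref{eq:sep}.

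The localization step you flag as the main obstacle is exactly where the paper does its work, and it is resolved by reading off Theorem~\ref{thm:co-components}(a)--(c) rather than by a fresh geometric argument in Mitas' drawing: part~(a) pins down the only edges that may cross co-component boundaries, part~(c) identifies $\W$ as a circuit of $\Syn(\M_i)$ for some co-component $\M_i$ of $\M^*$, and part~(b) plus the singularity of $\M_i$ imply that $\M_i$ reappears, up to shift, as a co-component of $\M$ or of $\M^{-1}$, so $\W$ is \emph{literally} a circuit of $\Syn(\M)$ or $\Syn(\M^{-1})$ — there is no separate ``transport the combinatorial invariants'' step, they come for free. You also leave the final arithmetic implicit: since $\M^*$ is aligned it has two rows, the forward copy of $\W$ rises exactly two rows, hence $\Jump(\W)=2$, and then $0\geq\Sep_\Desc(\W)-\Sep_v(\W)=\Circ^*-\Circ+2(\Len-\Len^*)$ with $\Len\geq\Len^*$ gives $\Circ^*\leq\Circ$. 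Lastly, be careful with your opening reduction: the paper does not ``reduce to the universal-free case''; it handles complete models by an explicit $(\Dist,\BegDist)$-minimum model and singular models by noting that any $(\Dist,\BegDist)$-minimal equivalent model is already minimum, and only then enters the co-component argument. A model with a universal arc but some non-universal arc is not complete and generally not singular, so if you intend to appeal to Theorem~\ref{thm:co-components} (stated for universal-free models) you must explain separately how such models are accommodated; your remark that universal arcs ``only impose $\Len\geq\Circ/2$'' does not by itself give a reduction.
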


\begin{proof}
  If $\M$ is complete and $\Delta = \Dist + \BegDist$, then the $(2(n-1)\Delta+2\Dist,(n-1)\Delta+\Dist,\Dist,\BegDist)$-CA model that has one beginning point at $i\Delta$ for every $i \in \Range{n}$ is a $(\Dist,\BegDist)$-minimum model isomorphic to $\M$.  If $\M$ is singular, then, as neither moving the position of $0$ nor reversing the arcs of $\M$ affects the length of the circle or the arcs, it follows that any $(\Dist,\BegDist)$-minimal model equivalent to $\M$ is also $(\Dist,\BegDist)$-minimum~\cite{SoulignacJGAA2017}.  For the remaining of the proof, suppose $\M$ is neither singular nor complete, and let $\Desc = (\Circ,\Len,\Dist,\BegDist)$.  

  Let $\Len^* \leq \Len$ be the minimum such that the UCA model $\M$ is isomorphic to a $(\Dist,\BegDist)$-minimal $v$-CA model $\M^*$, for $v=(\Circ^*, \Len^*,\Dist,\BegDist)$.  We may suppose that $\M$ and $\M^*$ are aligned because, as previously stated, the position of $0$ is irrelevant under isomorphism.  Let $\W$ be the circuit of $\Syn(\M^*)$ implied by Lemma~\ref{lem:paths in co-components}.  By Theorem~\ref{thm:co-components} (\ref{thm:co-components:bridges})--(\ref{thm:co-components:subsyn}), $\W$ is a circuit of $\Syn(\M_i)$, where $\M_i$ is a singular model that induces a co-component of $\M^*$.  By Theorem~\ref{thm:co-components} (\ref{thm:co-components:bridges})--(\ref{thm:co-components:subsyn}), taking into account that $\M_i$ is singular, it follows that $\M_i$ is a submodel of either $\M$ or $\M^{-1}$, while $\W$ is a circuit of either $\Syn(\M)$ or $\Syn(\M^{-1})$.  Clearly, $\Jump(\W) = 2$, hence, by \eqref{eq:sep} and Theorem~\ref{thm:separation constraints}, it follows that
  \begin{align*}
   0 \geq \Sep_\Desc(\W) - \Sep_v(\W) = \Circ^*-\Circ + 2(\Len-\Len^*),
  \end{align*}
  thus $\Circ^* \leq \Circ$ as well.  Therefore, $\M^*$ is minimum.
\end{proof}

\subsection{Computing a minimum UCA model}

A PCA model $\M$ can be isomorphic to an exponential number of PCA models that are pairwise not shift equivalent.  These models arise from joining the co-components of $\M$ via three well defined operations.  Again, these operations are described in~\cite{HuangJCTSB1995,KoeblerKuhnertVerbitskyDAM2017,SoulignacA2015}; here we just translate them to the language of synthetic graphs.  For the sake of exposition, we describe the effects of these operations using both PCA models and synthetic graphs.

Let $\M$ (resp.\ $\M'$) be an aligned and universal-free PCA model whose synthetic graph $\Syn$ (resp.\ $\Syn'$) has $A_r$ and $B_r$ (resp.\ $A_r'$ and $B_r'$) as its leftmost and rightmost arcs in row $r$, for $r\in\{0,1\}$.  The \emph{$i$-alignment} of $\M$ is the model $\M|i$ obtained by placing $0$ at the position of $s(A_i)$.  Obviously, $\Syn|i = \Syn(\M|i)$ is obtained from $\Syn$ by exchanging rows $0$ and $i$.  The \emph{join} $\Syn + \Syn'$ is the synthetic graph aligned at $A_0$ that is obtained from $\Syn \cup \Syn'$ by replacing: the steps $B_i \to A_{1-i}$ and $B_i' \to A'_{1-i}$ with the steps $B_i \to A_{1-i}'$ and $B_i' \to A_{1-i}$; and the hollows $B_i \to A_i$ and $B_i' \to A_i'$ with the hollows $B_i \to A_i'$ and $B_i' \to A_i$.  (The removed edges exist by~Theorem~\ref{thm:co-components} (\ref{thm:co-components:bridges}).)  The \emph{join} $\M + \M'$ is the unique PCA model whose synthetic graph is $\Syn+\Syn'$.  Informally, we can build $\M + \M'$ directly from $\M$ and $\M'$ by first cutting the circles of $\M$ and $\M'$ immediately before $t(A_i)$ and $t(A_i')$, respectively, and then gluing the four obtained lines into a circle where $s(A_0) < s(A_0') < s(A_1) < s(A_1')$ (\Figure~\ref{fig:join}).

\begin{figure}
 \begin{tabular}{c@{\hspace{0mm}}c@{\hspace{0mm}}c@{\hspace{5mm}}c}
  \includegraphics{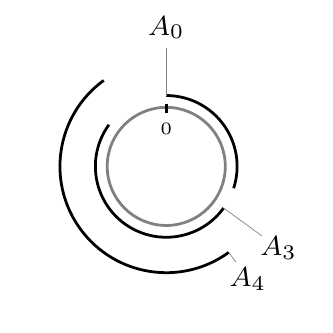} & \includegraphics{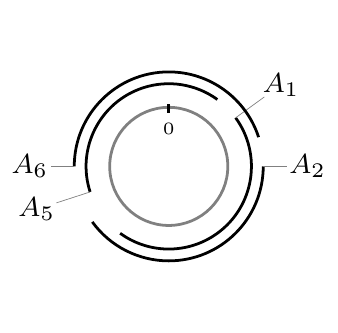} & \includegraphics{fig-model-order} & \includegraphics{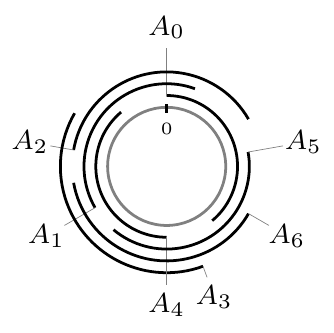} \\
  \includegraphics{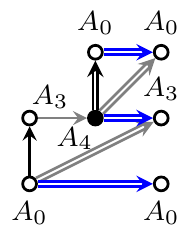} & \includegraphics{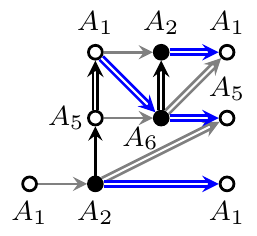} & \includegraphics{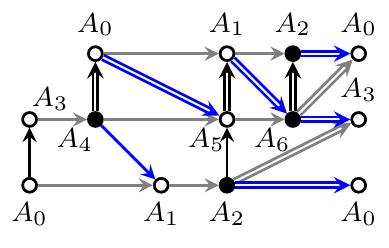} & \includegraphics{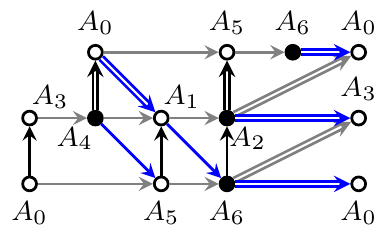}
 \end{tabular}
 \caption{Top row from left to right: $\M_1$, $\M_2$, $\M_1 + \M_2$, $\M_1 + \M_2|1$.  Bottom row: the corresponding synthetic graphs.}\label{fig:join}
\end{figure}

Let $\M_0 < \ldots < \M_{k-1}$ be the co-components of an aligned and universal-free PCA model $\M$.  Obviously, $\M_i = \M_i|0$ for every $i \in \Range{k}$, while, by Theorem~\ref{thm:co-components}, $\M = \M_0|0 + \ldots + \M_{k-1}|0$.  If we exchange the order of the co-components in the summation, or if we replace $\M_i$ with $\M_i^{-1}$ or $\M_i|0$ with $\M_i|1$, then we can obtain a PCA model that is not shift equivalent to $\M$ (\Figure~\ref{fig:join}).  In fact, as it was observed by Huang~\cite{HuangJCTSB1995}, all the PCA models isomorphic to $\M$, up to shift equivalence, can be obtained in this way.  

To state Huang's result in formal terms, consider the permutation $\pi$ of $\Range{k}$ and the functions $\varphi\colon \Range{k} \to \{-1,1\}$ and $\psi\colon \Range{k} \to \{0,1\}$. Define:
\begin{displaymath}
 \pi(\psi(\varphi(\M))) = \left.\M_{\pi(0)}^{\varphi(0)}\right\vert\psi(0) + \ldots + \left.\M_{\pi(k-1)}^{\varphi(k-1)}\right\vert\psi(k-1).
\end{displaymath}
As stated before, by Theorem~\ref{thm:co-components}, $\M = \pi(\psi(\varphi(\M)))$ for the \emph{identity} mappings $\varphi = 1$, $\psi = 0$, and $\pi(i) = i$.  Here, $\pi$ is used to permute the co-components of the summation, $\varphi$ selects between a co-component and its reverse, and $\psi$ defines the alignment of the co-component.  Obviously, we can omit some of these functions from the notation if the corresponding identity is applied, e.g., $\pi(\M) = \pi(0(1(\M)))$.  The reader can check that the order between the operations is unimportant (assuming that $\psi$ and $\varphi$ are modified according to $\pi$; see~\cite{HuangJCTSB1995,KoeblerKuhnertVerbitskyDAM2017,SoulignacA2015}).

\begin{theorem}[\cite{HuangJCTSB1995,KoeblerKuhnertVerbitskyDAM2017,SoulignacA2015}]\label{thm:all models}
 Two aligned and universal-free PCA models $\M$ and $\M'$ with $k$ co-components are isomorphic if and only if $\M'$ is equivalent to $\pi(\psi(\varphi(\M)))$ for some permutation $\pi$ of $\Range{k}$, and $\varphi\colon \Range{k} \to \{-1,1\}$ and $\psi\colon \Range{k} \to \{0,1\}$
\end{theorem}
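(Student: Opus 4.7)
The plan is to handle the two directions of the iff separately, relying heavily on the structural description in Theorem~\ref{thm:co-components}.

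For the easy direction, I would verify that each of the three operations preserves the isomorphism class. Shifting (applying $\psi$) only changes the position of $0$, hence is harmless. Reversing a co-component (applying $\varphi(i)=-1$) replaces arcs by their mirror images; since arc intersection is symmetric, this yields an isomorphic submodel. The interesting operation is the permutation $\pi$. Here I would first show that every arc in $\M_i$ intersects every arc in $\M_j$ for $i\neq j$: by Theorem~\ref{thm:co-components}~(\ref{thm:co-components:bridges}), the arcs of $\M$ alternate in blocks $\A_0^0,\A_1^0,\ldots,\A_{k-1}^0,\A_0^1,\ldots,\A_{k-1}^1$ around the circle, and the only edges of $\Syn$ leaving an $\A_i$ are two steps and two hollows at the four ``boundary'' arcs; an elementary geometric argument then shows that a non-boundary arc of $\A_i$ cannot avoid any arc outside $\A_i$, and boundary arcs are easy to check directly. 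Since between distinct co-components the intersection graph is a complete bipartite structure, permuting the co-components around the circle via the join operation leaves the intersection graph unchanged.

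For the hard direction, assume $f\colon\M\to\M'$ is an isomorphism of intersection graphs. Since co-components are by definition the connected components of the complement of the intersection graph, $f$ induces a bijection on co-components, which I would take as $\pi$. For each $i$, $f$ restricts to an isomorphism between the singular submodels $\M_i$ and $\M'_{\pi(i)}$ (singular by Theorem~\ref{thm:co-components}~(\ref{thm:co-components:submodel})). By definition of singular, $\M'_{\pi(i)}$ is shift equivalent either to $\M_i$ or to $\M_i^{-1}$; this choice determines $\varphi(i)$. Finally, the freedom in $\psi(i)\in\{0,1\}$ selects which of the two co-starts of $\M_i^{\varphi(i)}$ is placed first in the cyclic order of $\M'$. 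Using these choices, the model $\pi(\psi(\varphi(\M)))$ has, for each $i$, the same co-component as $\M'$ in the same cyclic position and with the same alignment.

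It remains to check that two aligned universal-free PCA models with the same ordered list of aligned co-components are equivalent. This is where I expect the main technical care is needed: one must verify that the four boundary edges predicted by Theorem~\ref{thm:co-components}~(\ref{thm:co-components:bridges}) between consecutive co-components $\M'_{\pi(i)}$ and $\M'_{\pi(i+1)}$ coincide with those produced by the join operation in $\pi(\psi(\varphi(\M)))$. But this is precisely the content of the join definition: it is built to rewire exactly those four edges so that the $B$-extremes of one co-component chain into the $A$-extremes of the next. Since every other edge of $\Syn$ is internal to some $\A_i$ and hence fixed by the operations, this last bookkeeping completes the equivalence.

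The main obstacle, therefore, is not the high-level reconstruction but the boundary bookkeeping between adjacent co-components; getting the alignment indices $\psi(i)$ correct so that the four boundary edges (two steps and two hollows) in $\pi(\psi(\varphi(\M)))$ match those forced by the cyclic order of $\M'$ is the only non-trivial verification, and it is essentially forced once the rows $0$ and $1$ of each singular co-component have been labelled consistently with the intersection isomorphism $f$.
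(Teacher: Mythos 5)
The paper does not prove this theorem; it is imported with citations to Huang, K\"obler--Kuhnert--Verbitsky, and Soulignac, so there is no in-paper argument to compare your reconstruction against. Judged on its own, your plan is sound and follows the route one would expect from Theorem~\ref{thm:co-components}: one direction reduces to checking that the join, reverse, and alignment operations do not change the intersection graph; the other to decomposing the isomorphism along co-components and invoking singularity.

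Two remarks. First, you overwork the forward direction: that every arc of $\A_i$ meets every arc of $\A_j$ for $i\ne j$ is immediate from the \emph{definition} of co-components as the connected components of the complement $G$. If $A\in\A_i$ and $B\in\A_j$ lie in distinct components of $G$, then $v_Av_B\notin E(G)$, i.e.\ $A\cap B\ne\emptyset$; no appeal to Theorem~\ref{thm:co-components}~(\ref{thm:co-components:bridges}) nor to geometry is needed. Second, the closing step you flag --- that two aligned universal-free PCA models with the same ordered list of aligned co-components are equivalent --- is best justified not by re-inspecting boundary edges ad hoc, but by the cleaner observation that the join is defined purely in terms of the synthetic graphs of its summands, so $\N_0+\ldots+\N_{k-1}$ is equivalent to $\N'_0+\ldots+\N'_{k-1}$ whenever each $\N_i$ is equivalent to $\N'_i$; combined with the identity $\M'=\M'_0+\ldots+\M'_{k-1}$ supplied by Theorem~\ref{thm:co-components}, this disposes of the bookkeeping. (One small wrinkle for the write-up: with your $\pi$, defined so that $f$ maps $\M_i$ to $\M'_{\pi(i)}$, the $j$-th summand of $\pi(\psi(\varphi(\M)))$ is built from $\M_{\pi(j)}$, which $f$ sends to $\M'_{\pi(\pi(j))}$ rather than to $\M'_j$; you want $\pi^{-1}$ in the summation.) With these adjustments your argument goes through.
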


Before dealing with the minimization problem, we prove that an auxiliary weighing problem is strongly \NP-complete.  In a \emph{weighing problem}, the goal is to weigh the edges of a digraph $G$, obeying certain rules, to minimize the maximum among the weights of the cycles in $G$.  In our problem, $G$ has a vertex $v_q^r$ for every $q\in \Range{2k}$ and $r \in \Range{4k}$, one \emph{diagonal} edge $v_q^r \to v_{q+1}^{r-1}$, one \emph{vertical} edge $v^{r-1}_q \to v^r_q$, and one \emph{horizontal} edge $v^r_{2k-1} \to v^r_0$; of course, the edges are present when $r > 1$ and $q+1 < 2k$ (\Figure~\ref{fig:grid-graph}).  Following \Figure~\ref{fig:grid-graph}, we say $G$ has $4k$ \emph{rows} and $2k$ \emph{columns}, whereas $v_q^r$ is at \emph{row} $r$ and \emph{column} $q$.  

\begin{figure}[t!]
 \centering
 \includegraphics{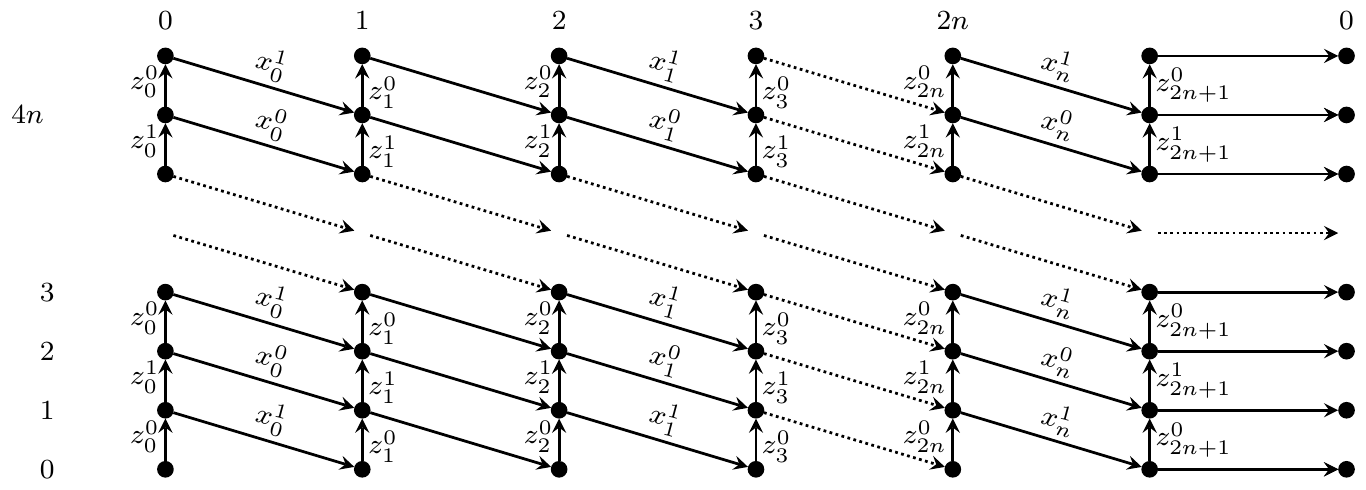}
 \caption{Graph of our auxiliary weighing problem, where $n=k-1$, $z_{2q}^p = y_q^p + x_q^p + 1$, and $z_{2q+1}^p = y_q^{1-p}+x_q^p+1$ for $q \in \Range{k}$ and $p \in \{0,1\}$.}\label{fig:grid-graph}
\end{figure}

Each possible weighing $\Wg_X$ of $G$, in turn, is defined by a sequence of tuples
\begin{displaymath}
 X = (x_0^0, x_0^1, y_0^0, y_0^1), \ldots, (x_{k-1}^0, x_{k-1}^1, y_{k-1}^0, y_{k-1}^1)
\end{displaymath}
with $x_q^p, y_q^p \in \mathbb{N}_{>0}$ for $q\in\Range{k}, p\in\{0,1\}$.  As depicted in Figure~\ref{fig:grid-graph}, if $p = r \bmod 2$, then
\begin{itemize}
 \item $\Wg_X(v^r_{2q} \to v^{r-1}_{2q+1}) = x_q^p$,
 \item $\Wg_X(v^r_{2q+1} \to v^{r-1}_{2q+2}) = \Wg_X(v^r_{2k-1} \to v^r_{0}) = 1$ for $q\in\Range{k-1}$,
 \item $\Wg_X(v^r_{2q} \to v^{r+1}_{2q}) = y_q^p + x_q^p + 1$, and
 \item $\Wg_X(v^r_{2q+1} \to v^{r+1}_{2q+1}) = y_q^{1-p} + x_q^p + 1$.
\end{itemize}

We consider three operations on $X$ that are defined by $\chi,\gamma\colon \Range{k} \to \{0,1\}$ and a permutation $\pi$ of $\Range{k}$, as shown in \Figure~\ref{fig:swaps}. Formally, $\chi(X)$ (resp.\ $\gamma(X)$) is obtained from $X$ by swapping $x_i^0$ and $x_i^1$ (resp.\ $y_i^0$ and $y_i^1$) when $\chi(i) = 1$ (resp.\ $\gamma(i) = 1$), while $\pi(X)$ is the sequence obtained by placing the $i$-th tuple of $X$ at position $\pi(i)$. 

With all these ingredients, we can now formulate our auxiliary weighing problem.  For the sake of notation, we write $\Wg_X(G)$ to denote the weight of the maximum cycle of $G$ when $\Wg_X$ is applied.
\begin{description}
 \item [Minimum cycle weighing by columns of a pseudo-grid digraph (MCW)]
 \item [Instance:] $X = \{(x_q^0, x_q^1, y_q^0, y_q^1) \mid q\in\Range{k} \text{ and } x_q^0,x_q^1,y_q^0,y_q^1 \in \mathbb{N}_{>0}\}$ and $\Len\in\mathbb{N}$
 \item [Question:] Is $\Wg_Y(G) \leq \Len$ for some $Y = \pi(\chi(\gamma(X)))$?
\end{description}

\begin{figure}
  \mbox{}\hfill \includegraphics{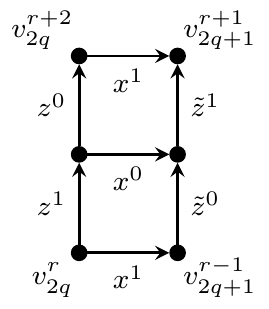} \hfill \includegraphics{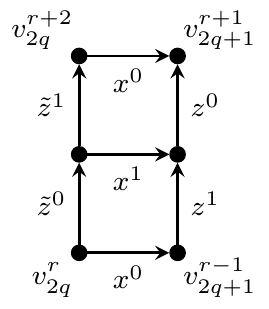} \hfill \includegraphics{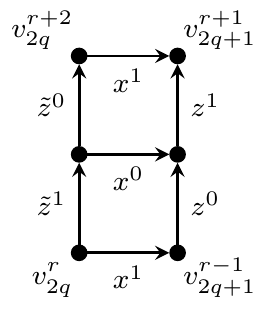} \hfill \includegraphics{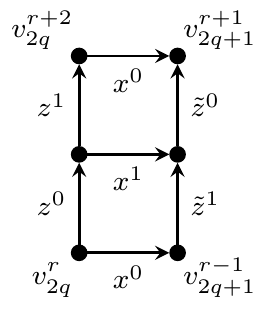} \hfill\mbox{}
  \caption{Section of $G$ showing $\Wg_Y$ for $Y = \pi(\chi(\gamma(X)))$.  Here $(x^0, x^1, y^0, y^1)$ is $i$-th tuple of $X$ such that $\pi(i) = q$, while $z^p = x^p + y^p + 1$ and $\tilde{z}^p = x^p + y^{1-p} + 1$ for $p \in \{0,1\}$.  From left to right, each figure depicts the case: $\chi(i) = \gamma(i) = 0$; $\chi(i) = 1$ and $\gamma(i) = 0$; $\chi(i) = 0$ and $\gamma(i) = 1$; and $\chi(i) = \gamma(i) = 1$.}\label{fig:swaps}
\end{figure}

\begin{theorem}\label{thm:MCW is NPC}
 MCW is strongly \NP-complete.
\end{theorem}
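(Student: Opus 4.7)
The plan is to show both containment in NP and hardness via reduction from a strongly NP-complete problem (most naturally, 3-Partition). For membership in NP, a natural certificate is the triple $(\pi,\chi,\gamma)$ together with the weighing $\Wg_Y$ it induces; the verifier must compute the maximum weight of a simple cycle of $G$ and compare it with $\Len$. The key structural observation is that the subdigraph of $G$ obtained by discarding the horizontal edges is acyclic, since diagonals strictly decrease the row while verticals strictly increase it; hence every nontrivial cycle of $G$ must use at least one horizontal edge. Between two consecutive horizontal edges the walk is a monotone-column staircase from column $0$ to column $2k-1$ consisting of exactly $2k-1$ diagonals interleaved with some verticals, and the maximum weight of such a staircase with prescribed endpoint rows can be found by a straightforward dynamic program on the $(4k)\times(2k)$ grid in $O(k^3)$ time. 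Combining staircases into a cycle is then a matter of pairing endpoint rows, adding only polynomial overhead, and yielding the required polynomial-time verification.

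For hardness I would reduce from 3-Partition: given $3k$ positive integers $a_1,\dots,a_{3k}$ with $\sum_i a_i = kB$ and $B/4<a_i<B/2$, decide whether they can be partitioned into $k$ triples each summing to $B$. I would build an MCW instance with $3k$ tuples, one per input integer, whose internal parameters satisfy $x_q^0=x_q^1$ and $y_q^0=y_q^1$ so that the swaps $\chi$ and $\gamma$ are made irrelevant and the problem reduces to choosing the permutation $\pi$. The encoding would place $a_i$ inside the $x$- (or $y$-) coordinates of tuple $i$, padded by a large fixed offset to dominate any residual variation. A suitable threshold $\Len$ is then chosen so that the maximum-weight cycle of $G$ stays below $\Len$ precisely when every cyclic window of three consecutive tuples in the permutation has integer sum at most $B$; by the global constraint $\sum_i a_i=kB$, this condition forces every such window-sum to equal $B$, which is exactly the 3-Partition condition.

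The principal obstacle I anticipate is designing the padding so that the maximum-weight cycle of $G$ indeed corresponds to a single-winding cycle whose weight encodes a window of three consecutive tuples, and to guarantee that no other cycle (in particular, multiply-winding ones, or single-wind ones using verticals spread across many rows) dominates the objective for any arrangement. I expect the argument to proceed by a column-by-column analysis exploiting the staircase dynamic program from the NP-membership part: because the $x$-diagonals carry the integer payload while verticals carry primarily the padding, the critical cycle is forced to concentrate its verticals within a narrow row band, which then exposes precisely the three tuples whose $a_i$-values are summed. Strong NP-completeness then follows once all $x$ and $y$ values are polynomially bounded in $B$ and $k$, which is immediate from the 3-Partition input bounds. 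The whole proof thereby reduces to two lemmas: any $\pi$ meeting the partition condition admits a weighing with $\Wg_Y(G)\leq\Len$, and any $\pi$ missing the partition condition exhibits a cycle of weight strictly exceeding $\Len$.
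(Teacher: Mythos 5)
Your NP-membership argument is sound and matches the paper's: every cycle of $G$ must use at least one horizontal edge, so $\Wg_Y(G)$ can be computed by a longest-path computation on the acyclic subgraph of diagonals and verticals, and the triple $(\pi,\chi,\gamma)$ serves as a certificate.

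The hardness reduction, however, has a fatal flaw. You propose to set $x_q^0=x_q^1$ and $y_q^0=y_q^1$ for every tuple so that $\chi$ and $\gamma$ become irrelevant. But this also makes $\pi$ irrelevant: with flattened tuples, the weight of every diagonal at an even column $2q$ is the constant $x_q$, every diagonal at an odd column is $1$, and every vertical at columns $2q$ and $2q+1$ has the constant weight $x_q+y_q+1$, independently of the row parity. Since every simple cycle of $G$ uses exactly one diagonal from each of the $2k-1$ column transitions per winding and a free choice of where to place its verticals, the maximum cycle weight is a fixed quantity depending only on the multiset $\{(x_q,y_q)\}$ and not on the permutation $\pi$ at all; concretely it is $\sum_q x_q + (k-1) + (2k-1)\max_q(x_q+y_q+1) + 1$ for single-winding cycles. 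Consequently $\Wg_Y(G)$ is identical for every choice of $\pi,\chi,\gamma$, and the instance is trivially yes or no, which makes the reduction vacuous. Moreover, the intuition of ``cyclic windows of three consecutive tuples'' does not match the structure of $G$: every simple cycle passes through all $2k$ columns, so it necessarily aggregates contributions from all tuples rather than from any local window.

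The paper's reduction avoids both problems by working in the opposite direction: it exploits, rather than neutralizes, the asymmetry $x_q^0\neq x_q^1$. It uses $5n$ tuples for a $3$-partition instance of size $3n$, interleaving delimiter tuples $L_i$ and $H_i$ carrying a dominant weight $y_\infty=h_n^2$ on certain verticals, with item tuples $X_{3i+j}=(T,s_{3i+j},1,1)$ that place the value $s_{3i+j}$ on one diagonal parity and the fixed offset $T$ on the other. The analysis shows that every maximum-weight cycle is forced to use only $y_\infty$-weight verticals, which in turn forces the heavy columns to alternate in a fixed pattern of period $10$ and reveals, for each delimiter block, a path of exactly three consecutive item diagonals whose weights sum to $T$; the swaps $\chi,\gamma$ are what allow one to decide which of the two parities holds the $s$-payload. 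This is what lets the reduction discriminate, and it is precisely the machinery your flattening discards.
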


\begin{proof}
 As discussed in Section~\ref{sec:minimal algorithm}, every cycle of $G$ has an horizontal edge $v_{2k-1}^r \to v_0^r$, thus we can compute $\Wg_Y(G)$ in polynomial time by looking at the weights of the paths that go from $v_0^r$ to $v_{2k-1}^r$, for every $r \in \Range{4k}$.  Therefore, as any sequence $Y = \pi(\chi(\gamma(X)))$ with $\Wg_Y(G) \leq \Len$ serves as a certificate authenticating that $(X, \Len)$ is a yes instance of MCW, it follows that the minimum cycle weighing problem belongs to \NP.  To prove its hardness, we show a polynomial-time reduction from the $3$-partition problem that is known to be strong \NP-complete~\cite{GareyJohnson1979}.  
\begin{description}
 \item [$3$-partition]
 \item [Instance:] a set $S = \{s_0, \ldots, s_{3n-1}\}$ with $nT = \sum S$ (and $s \in\Range{1,T}$ for $s \in S$).
 \item [Question:] can $S$ be partitioned into sets $S_0, \ldots, S_{n-1}$ such that $|S_i| = 3$ and $\sum S_i = T$ for every $i \in \Range{n}$?
\end{description}

Consider an instance $S = \{s_0,\ldots, s_{3n-1}\}$ of the $3$-partition problem with $nT = \sum S$.  For $i \in \Range{n}$, let $l_i = 2(n^2+i)$, $h_i = l_i+2$, $y_\infty = h_n^2$, and:
\begin{itemize}
 \item $X_{3i+j} = (T, s_{3i+j}, 1, 1)$ for every $j \in \Range{3}$,
 \item $L_i = (l_iT, 1, y_\infty-l_iT -1, y_\infty-2)$, $H_i = (1, h_iT, y_\infty-h_iT-1, y_\infty-2)$, and
 \item $X = L_0, X_0, X_1, X_2, H_0, L_1, X_3, X_4, X_5, H_1, \ldots, L_{n-1}, X_{3n-3}, X_{3n-2}, X_{3n-1}, H_{n-1}$.
\end{itemize}
The graph $G$ weighted with $\Wg_X$ is shown in \Figure~\ref{fig:grid graph proof}.  Clearly, $X$ can be computed in polynomial time because the values in $X$ have a polynomial size with respect to those in $S$.  Thus, it suffices to prove that $S$ is a yes instance of the $3$-partition problem if and only if $(X, \Len)$ is a yes instance of MCW, for $\Len = (10n-1)y_\infty + \sum_{j\in\Range{n}} h_jT + n(T+6)$.  

\begin{figure}
 \centering \includegraphics{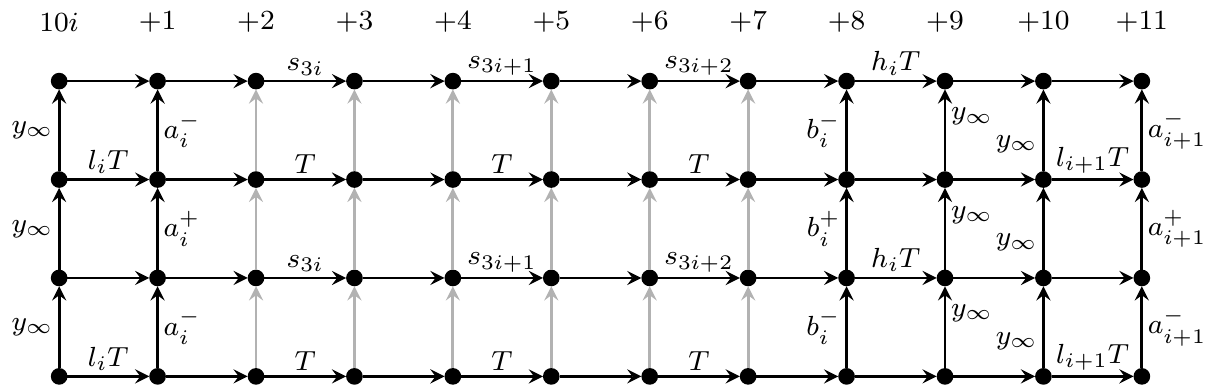}
 \caption{A section of the graph $G$ weighed with $\Wg_X$.  Here $a_i^+ = y_\infty + l_iT - 1$, $a_i^- = y_\infty -l_iT + 1$, $b_i^+ = a_i^+ + 2T$, $b_i^- = a_i^- - 2T$, and $\Wg_X(e) \leq T+2$ for every gray edge.}\label{fig:grid graph proof}
\end{figure}

Suppose first that $S_0, \ldots, S_{n-1}$ is a partition of $S$ with $|S_i| = 3$ and $\sum S_i = T$ where, w.l.o.g., $S_i = \{s_{3i}, s_{3i+1}, s_{3i+2}\}$ for every $i \in \Range{n}$.  To show that $\Wg_X(G) \leq \Len$, let $C$ be a cycle of $G$ with maximum weight that has a maximum number of vertical edges with $\Wg_X = y_\infty$.  We claim that all the vertical edges of $C$ have $\Wg_X = y_\infty$.  Indeed, observe that $C$ has $10n-1$ vertical edges, $10n-1$ diagonal edges with $\Wg_X \leq T$, and one horizontal edge (with $\Wg_X = 1$).  Therefore, $C$ cannot have vertical edges with weight at most $T+2$, as if it has $m > 0$ of such edges, then
\begin{displaymath}
 \Wg_X(C) < 10(n-m-1)(y_\infty+h_n) + m(T+2) + 10nT < (10n-1)y_\infty < \Wg_X(C'),
\end{displaymath}
where $C'$ is any cycle whose vertical edges all have $\Wg_X = y_\infty$.  Similarly, if $v_q^r \to v_q^{r+1}$ is the first (resp.\ last) vertical edge with $\Wg_X = y_\infty \pm (l_i+1)$ (resp.\ $\Wg_X = y_\infty \pm (h_i+1)$) for $i\in\Range{n}$, then we can replace the subpath $v_q^r, v_q^{r+1}, v_{q+1}^r$ (resp.\ $v_{q-1}^{r+1}, v_q^r, v_q^{r+1}$) of $C$ with the path $v_q^r, v_{q+1}^{r-1}, v_{q+1}^{r}$ (resp.\ $v_{q-1}^{r+1}, v_{q-1}^{r+2}, v_{q+1}^{r+1}$), obtaining a new cycle $C'$ with $\Wg_X(C') = \Wg_X(C)$ that has one more vertical edge with $\Wg_X = y_\infty$ (\Figure~\ref{fig:grid graph proof}).  

Since all the vertical edges of $C$ have $\Wg_X = y_\infty$, it follows that $C$ has a subpath $P_i$ from $v_{10i}^r$ to a vertex at column $10i+9$ that contains only diagonal edges, for every $i \in \Range{n}$ and some $r \in \Range{20n}$ (\Figure~\ref{fig:grid graph proof}). By construction, $\Wg_X(P_i) = l_iT + 3T + 5 = h_iT + T + 5$ when $r$ is even, while $\Wg_X(P_i) = h_iT + \sum S_i + 5 = h_iT + T + 5$ when $r$ is odd.  Therefore, taking into account those edges that belong to no $P_i$, we obtain
\begin{displaymath}
  \Wg_X(C) = (10n-1)y_\infty+\sum_{i \in \Range{n}} h_iT + nT + 6n = \Len.
\end{displaymath}
Summing up, $(X, \Len)$ is a yes instance of the MCW.

For the converse, suppose $(X, \Len)$ is a yes instance of MCW, i.e.,  there exist $\chi,\gamma \in \Range{5n} \to \{0,1\}$ and a permutation $\pi$ of $\Range{5n}$ such that $\Wg_Y(G) \leq \Len$, for $Y = \pi(\chi(\gamma(X)))$.  In the following, for any walk $W$ of $G$, we write $W^+$ to denote the walk of $G$ whose $i$-th vertex is $v^{r+1}_q$ if and only if the $i$-th vertex of $W$ is $v^r_q$.  Of course $W^+$ is well defined if and only if $W$ contains no vertices at row $20n$.  

Consider the family $\C$ of cycles whose vertical edges all have $\Wg_Y = y_\infty$.  It is not hard to see that, for every $C \in \C$, either $C' = C^+$ is well defined or $C = (C')^+$ for some cycle $C'$.  Whichever the case, $C' \in \C$ and, moreover, $C \cup C'$ has:
\begin{itemize}
 \item $20n-2$ vertical edges with $\Wg_Y = y_\infty$,
 \item one diagonal edge with $\Wg_Y = l_iT$ (resp.\ $\Wg_Y = h_iT$) for every $i \in \Range{n}$,
 \item $3n$ diagonal edges with $\Wg_Y = T$ and one diagonal edge with $\Wg_Y = s_i$ for $i \in \Range{n}$, and
 \item $12n$ edges with $\Wg_Y = 1$.
\end{itemize}
That is, $\Wg_Y(C) + \Wg_Y(C') = 2\Len$.  Since $(X,\Len)$ is a yes instance of MCW, this means that $\Wg_Y(C) = \Len$ for every $C \in \C$.

\begin{figure}
  \mbox{}\hfill \includegraphics{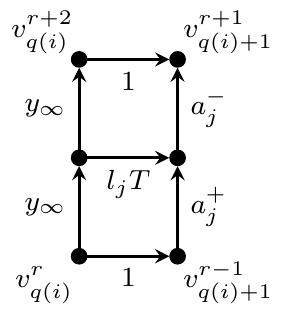} \hfill \includegraphics{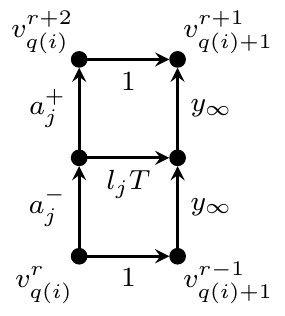} \hfill \includegraphics{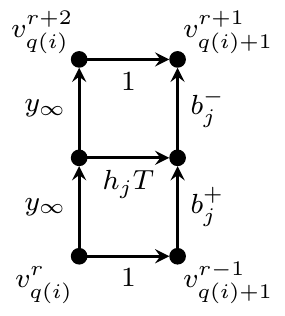} \hfill \includegraphics{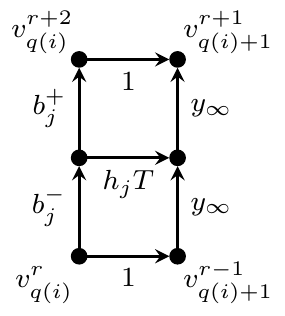} \hfill\mbox{}
  \caption{Section of $G$ for a column $q(i)$ whose vertical edges have $\Wg_Y = y_\infty$.  The figures on the left/right correspond to the cases in which $q(i)$ is even/odd.}\label{fig:heavy columns}
\end{figure}

Let $Q = q(0) < \ldots < q(2n-1) \subseteq \Range{10n}$ be the columns in which the vertical edges of $G$ have $\Wg_Y = y_\infty$ (\Figure~\ref{fig:heavy columns}).  Without loss of generality, we may assume that $L_0$ is the first tuple in $Y$, thus $q(0) = 0$.  Fix $i \in \Range{2n}$ and let $\alpha = q(i)$, $\beta = q(i+1)$ and $\delta = \beta - \alpha$.  Clearly, any path $D_i$ from $v_{\alpha}^{r+\delta}$ to $v_{\beta}^{r}$ has only diagonal edges.  It is easy to see that $P = D_i, v_{\beta}^{r+1}$ is a subpath of some cycle $C \in \C$.  Moreover, if we replace $P$ with $P' = v_{\alpha}^{r+\delta}, D_i^+$ in $C$, then we obtain another cycle $C' \in \C$.  Therefore, as the vertical edges at columns $\alpha$ and $\beta$ have $\Wg_Y = \infty$, it follows that $\Wg_Y(D_i) = \Wg_Y(D_i^+)$.  

Write $D_i[j]$ and $D^+_i[j]$ to denote the $j$-th edges of $D_i$ and $D^+_i$, respectively, for $j \in \Range{\delta}$.  By construction, either $\Wg_Y(D_i[0]) \leq T$ or $\Wg_Y(D^+_i[0]) \leq T$.  Suppose the former without loss of generality.  Then, if $\alpha$ is even, it follows that $\Wg_Y(D^+_i[0]) = l_jT + aT$ for some $j \in \Range{n}$ and $a \in \{0,2\}$, while $\Wg_Y(D_i[0]) = 1$ (\Figure~\ref{fig:heavy columns}).  Otherwise, if $\alpha$ is odd, then $\Wg_Y(D^+_i[0]) \leq T$ (\Figure~\ref{fig:heavy columns}).  Similarly, if $\beta$ is odd, then one of $D_i[\delta-1]$ and $D^+_i[\delta-1]$ has $\Wg_Y = 1$ while the other has $\Wg_Y = l_jT + aT$ for some $j \in \Range{n}$ and $a \in \{0,2\}$; otherwise, both $D_i[\delta-1]$ and $D^+_i[\delta-1]$ have $\Wg_Y \leq T$ (\Figure~\ref{fig:heavy columns}).  Finally, the $O(n)$ other edges of $D_i$ and $D^+_i$ have $\Wg_Y \leq T$.  Therefore, as $\Wg_Y(D_i) = \Wg_Y(D^+_i)$, it follows that $\alpha$ is even if and only if $\beta$ is odd.  Moreover, if $\alpha$ is even, then $\Wg_Y(D^+_i[0]) = l_jT + aT$, $\Wg_Y(D_i[\delta-1]) = l_jT + (2-a)T$, and $\Wg_Y(D^+_i[\delta - 1]) = \Wg_Y(D_i[0]) = 0$, for some $j \in \Range{n}$ and $a \in \{0,2\}$.  

The above facts imply that $\delta \geq 9$.  Then, taking into account that $G$ has $10n$ columns and $q(0) = 0$, it follows that $q(2i) = 10i$ and $q(2i+1) = 10i +9$, for every $i \in \Range{n}$.  Moreover, one between 
\begin{displaymath}
\left\{\vphantom{\Wg_Y(D^+[])}\Wg_Y(D_i[2]), \Wg_Y(D_i[4]), \Wg_Y(D_i[6])\right\} \text{ and } \left\{\Wg_Y(D^+_i[2]), \Wg_Y(D^+_i[4]), \Wg_Y(D^+_i[6])\right\}
\end{displaymath}
is a subset of $S$, called $S_i$, with $\sum S_i = T$.  Summing up, $S_0, \ldots, S_{n-1}$ certifies that $S$ is a yes instance of the $3$-partition problem.
\end{proof}

\begin{theorem}\label{thm:minimum NP}
 The minimum representation problem is \NP-complete.
\end{theorem}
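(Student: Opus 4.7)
The plan is to show membership in \NP{} first and then reduce from the \PROBLEMFORMAT{MCW} problem proved \NP-complete in Theorem~\ref{thm:MCW is NPC}. For membership, phrase the minimum representation problem as: given a UCA model $\M$, rationals $\Dist,\BegDist$, and values $\Circ,\Len$, decide whether $\M$ is isomorphic to a $(\Circ',\Len',\Dist,\BegDist)$-CA model $\Unit$ with $\Circ'\leq\Circ$ and $\Len'\leq\Len$. A natural certificate is the model $\Unit$ itself. By Theorem~\ref{thm:len and circ are integer} we may assume that $\Circ'$ and $\Len'$ are integer combinations of $\Dist$ and $\BegDist$ of polynomial size, and the extremes of $\Unit$ can be located via the shortest/longest-path computations associated to $\Syn(\Unit)$ discussed after Theorem~\ref{thm:separation constraints}. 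Thus the certificate has polynomial size, and checking that $\Unit$ is a $(\Circ',\Len',\Dist,\BegDist)$-CA model isomorphic to $\M$ is straightforward.

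For hardness, the strategy is to build from an \PROBLEMFORMAT{MCW} instance $(X,\Len_0)$ with $X=\{(x_q^0,x_q^1,y_q^0,y_q^1)\mid q\in\Range{k}\}$ an aligned universal-free UCA model $\M_X$ whose $k$ co-components $\M_0<\ldots<\M_{k-1}$ each encode one tuple of $X$. The co-component $\M_q$ will be a suitably crafted singular submodel whose boundary gadget has the property that the two alignments $\M_q|0,\M_q|1$ and the two orientations $\M_q^{1},\M_q^{-1}$ produce the four diagonal weights $x_q^0,x_q^1$ (with their alternative companions) along the two rows of $\Syn(\M_q)$. In this way, after unrolling a sufficient number of times, $\Copies\Unroll\Syn(\M_X)$ looks, up to negligible additive constants, like the pseudo-grid digraph $G$ of the \PROBLEMFORMAT{MCW} problem: columns of $G$ correspond to the two rows of each co-component's synthetic graph, diagonal edges to the noses/hollows that cross between the rows, vertical edges to the steps (with weight $y_q^p+x_q^p+1$ reflecting the $\Sep_{\Desc}$ contribution of one full column-traversal), and the horizontal edge $v_{2k-1}^r\to v_0^r$ to the ``wrap-around'' step that closes a cycle through all co-components. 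By Theorem~\ref{thm:all models}, every PCA model isomorphic to $\M_X$ is shift-equivalent to some $\pi(\psi(\varphi(\M_X)))$, and this operation on co-components is designed to mirror exactly the MCW operation $Y=\pi(\chi(\gamma(X)))$.

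To translate cycle weights into arc lengths, combine Lemma~\ref{lemma:minimal conditions}, Lemma~\ref{lem:rolled up circuit}, and Lemma~\ref{lem:paths in co-components} with formula~\eqref{eq:minimal len}: for a $(\Dist,\BegDist)$-minimal model, $\Len$ equals the maximum $\Dist|\T|+\BegDist\sigma(\T)$ over internal cycles $\T$ of $\Copies\Unroll\Syn$, which under the encoding above equals (up to the fixed scaling we choose for $\Dist,\BegDist$) precisely $\Wg_Y(G)$ for $Y=\pi(\chi(\gamma(X)))$. A $(\Dist,\BegDist)$-minimum isomorphic model of $\M_X$ therefore attains arc length $\min_{Y}\Wg_Y(G)$, and setting a threshold $\Len$ on the minimum arc length of isomorphic $\Desc$-CA models corresponds to the threshold $\Len_0$ on $\Wg_Y(G)$. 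Since $X\mapsto \M_X$ can clearly be computed in polynomial time (the values in $X$ are polynomially bounded because \PROBLEMFORMAT{MCW} is strongly \NP-complete), this completes the reduction.

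The hard part is the explicit construction of the gadget co-components $\M_q$ so that the four parameters $(x_q^0,x_q^1,y_q^0,y_q^1)$ appear exactly as the corresponding edge weights in the unrolled synthetic graph, \emph{and} so that the two alignments $\M_q|0,\M_q|1$ together with the reversal $\M_q^{-1}$ realise precisely the four MCW swap patterns depicted in \Figure~\ref{fig:swaps}. Verifying that no spurious cycle of larger weight arises in $\Copies\Unroll\Syn(\M_X)$, and that the wrap-around edge indeed forces every maximum internal cycle to traverse all $k$ co-components, will require a careful case analysis analogous to the one performed in the proof of Theorem~\ref{thm:MCW is NPC}.
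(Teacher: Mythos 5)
Your high-level plan matches the paper's: membership in \NP{} via the model itself as certificate, then a reduction from \PROBLEMFORMAT{MCW} by building an aligned universal-free model $\M_X$ whose co-components encode the tuples of $X$, so that the three MCW operations $\pi,\chi,\gamma$ correspond via Theorem~\ref{thm:all models} to the three co-component operations $\pi,\varphi,\psi$, and equation~\eqref{eq:minimal len} translates the weight of a maximum cycle into the minimal arc length.

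However, the proposal has a genuine gap that you yourself flag in the final paragraph: the explicit gadget construction is never given, and this construction is the entire content of the reduction. The paper fixes a concrete base model $\M$ on six arcs $A_0,\ldots,A_5$ (shown in \Figure~\ref{fig:np-models}) that is singular, equivalent to both $\M|1$ and $\M^{-1}$, and whose synthetic graph has the two-row shape of a pseudo-grid column; $\M_q$ is then obtained by inserting $(y_q^0-1)$ copies of $A_1$, $(x_q^1-1)$ copies of $A_2$, $(y_q^1-1)$ copies of $A_4$, and $(x_q^0-1)$ copies of $A_5$. The key verification — that the operations $\chi(q),\gamma(q)$ applied to $X$ correspond precisely to reversing and realigning $\M_q$ (via $\M_q^\chi=(\M_q^{-1})|1$ and $\M_q^{\chi,\gamma}=(\M_q^\chi)^{-1}$), and that the longest path from $A_q^r$ to $B_q^h$ in $(2k)\Unroll\Syn_q$ equals $\Wg_Y(v_{2p}^r,v_{2p+1}^h)$ — depends on that particular model and cannot be waved away as a ``careful case analysis analogous to Theorem~\ref{thm:MCW is NPC}''; it is the load-bearing step. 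Without it the reduction is a plan, not a proof.

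Two smaller issues. First, you cite Lemmas~\ref{lemma:minimal conditions}, \ref{lem:rolled up circuit}, and \ref{lem:paths in co-components} as ingredients, but the paper's proof of this theorem uses only Theorem~\ref{thm:all models} and~\eqref{eq:minimal len}; those lemmas underlie the existence of minimum models, not this reduction. Second, the reduction must specify the input quadruple for the decision problem: the paper uses $(\M_X,\infty,\Len,1,0)$, setting the circumference bound to $\infty$ so that only the arc-length threshold matters, and you should state this explicitly rather than speak only of ``a threshold on the minimum arc length.''
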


\begin{proof}
 Viewed as a decision problem, the goal in the minimum representation problem is to find a $(\Circ^*,\Len^*,\Dist,\BegDist)$-CA model $\Unit$ isomorphic to a $(\Circ, \Len, \Dist, \BegDist)$-CA model $\M$ such that $\Circ^* \leq \Circ$ and $\Len^* \leq \Len$, when $\M$, $\Circ$, $\Len$, $\Dist$, and $\BegDist$ are given as input.  Clearly, this problem belongs to \NP, as we can take $\Unit$ and an isomorphism $f$ between $\M$ and $\Unit$ as the certificate.  To prove its hardness, we show a polynomial time reduction from MCW that is strongly \NP-complete by Theorem~\ref{thm:MCW is NPC}.  
 
 Let $(X, \Len)$ be an input of MCW, $k = |X|$, and $X_q = (x_q^0, x_q^1, y_q^0, y_q^1)$ be the $q$-th tuple in $X$ for $q \in \Range{k}$.  Call $\M$ to the PCA model depicted in \Figure~\ref{fig:np-models}, and let $\M_q$ be the PCA model obtained from $\M$ after inserting $(y_q^0-1)$ copies of $A_1$, $(x_q^1-1)$ copies of $A_2$, $(y_q^1-1)$ copies of $A_4$, and $(x_q^0-1)$ copies of $A_5$ ($\Syn(\M_q)$ is depicted in \Figure~\ref{fig:np-models}). Clearly, the PCA model $\M_X = \M_0 + \ldots + \M_{k-1}$ can be computed in polynomial time, provided that the numbers of $X$ are encoded in the unary system.  In the following we show that $(X,\Len)$ is a yes instance of MCW if and only if $(\M_X, \infty, \Len, 1, 0)$ is a yes instance of the (decision version of the) minimum representation problem.  For this, is enough to prove that:
 \begin{enumerate}[(i)]
  \item for every $Y = \pi(\chi(\gamma(X)))$ there exists a UCA model $\M(Y)$ isomorphic to $\M_X$,
  \item for every PCA model $\M$ isomorphic to $\M_X$ there exist $Y = \pi(\chi(\gamma(X)))$ such that $\M$ is shift equivalent to $\M(Y)$, and 
  \item the minimal $(\Circ,\Len)$-CA model equivalent to $\M(Y)$ has $\Len = \Wg_Y(G)$, for $Y = \pi(\chi(\gamma(X)))$.
 \end{enumerate} 
 It is important to remark that $\M(Y)$ has nothing to do with $\M_Y$.  The former is a model isomorphic to $\M_X$ that depends on $\pi$, $\chi$, and $\gamma$, while the latter denotes the reduction when the input is $Y$.  Thus, $\M_X$ and $\M_Y$ need not be isomorphic.
 
 To define $\M(Y)$ we transform every co-component $\M_q$ of $\M_X$, for $q \in \Range{k}$.  Specifically, let:
 \begin{itemize}
  \item $\M_q^\chi = (\M_q^{-1})|1$ if $\chi(q) = 1$ and $\M_q^\chi = \M_q$ otherwise, and
  \item $\M_q^{\chi,\gamma} = (\M_q^\chi)^{-1}$ if $\gamma(q) = 1$ and $\M_q^{\chi,\gamma} = \M_q^{\chi}$ otherwise.
 \end{itemize}
 Then, $\M(Y) = \pi(\M_0^{\chi,\gamma} + \ldots + \M_{k-1}^{\chi,\gamma})$.  Clearly, by taking different values for $\chi(q)$ and $\gamma(q)$, we can generate $\M_q^{j}|i$ for every $i \in \{0,1\}$ and $j \in \{-1,1\}$.  Therefore, (i)~and~(ii) follow by Theorem~\ref{thm:all models}.
 
 To prove (iii), note that $\M$ is equivalent to both $\M|1$ and $\M^{-1}$ (\Figure~\ref{fig:np-models}).  Then, as inserting copies of arcs is commutative with reverse and alignment operations, it follows that the synthetic graph $\S_q$ of $\M_q^{\chi,\gamma}$ is one of those depicted in \Figure~\ref{fig:np-models}, perhaps after a $1$-alignment is applied.  Therefore, if $\pi(q) = p$ and we write:
 \begin{itemize} 
  \item $A_q^r$ and $B_q^r$ as the leftmost and rightmost vertices at row $r$ of $(2k)\Unroll\Syn_q$, $r \in \Range{4k}$, and
  \item $\Wg_Y(v_{2p}^r, v_{2p+1}^h)$ to denote the maximum among the weights of the paths that go from $v_{2p}^r$ to $v_{2p+1}^h$ in $G$ for $r,h \in \Range{4k}$, 
 \end{itemize}
 then we immediately obtain that the length of the longest path from $A_q^r$ to $B_q^h$ in $(2k)\Unroll\Syn_q$ is precisely $\Wg_Y(v_{2p}^r, v_{2p+1}^h)$ (\Figures\ \ref{fig:swaps}~and~\ref{fig:np-models}).  Consequently, $\Wg_Y(G)$ is equal to the length of the longest cycle in $(2k)\Unroll\Syn(Y)$, where $\Syn(Y) = \Syn_{\pi^{-1}(0)} + \ldots + \Syn_{\pi^{-1}(k-1)}$ is the synthetic graph of $\M(Y)$.  Moreover, as $r \geq p+1$ for every forward path of $\Syn_q$ from $A_q^r$ to $B_q^p$, then any cycle of $(3n)\Unroll\Syn(Y)$ has a copy in $(2k)\Unroll\Syn(Y)$, where $n$ is the number of arcs in $\M(Y)$. Therefore (iii) follows by~\eqref{eq:minimal len}, as $\M$ is equivalent to a minimal $(\Circ, \Wg_Y(G))$-CA model.
 \end{proof}

\begin{figure}[t!]
  \centering
  \includegraphics{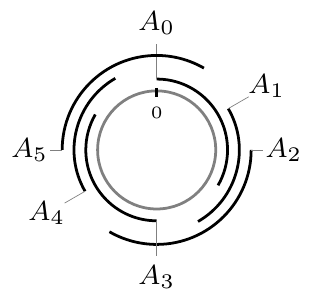} \includegraphics{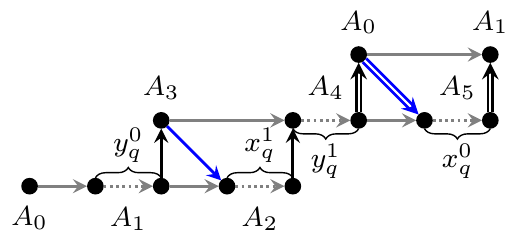} \includegraphics{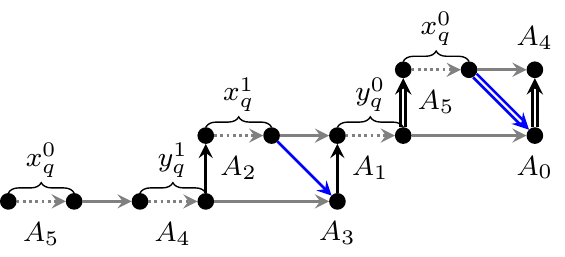} 
\caption{PCA model $\M$ of Theorem~\ref{thm:minimum NP} and Mitas' drawings, with external edges, of $\Syn(\M_q)$ and $\Syn(\M_q^{-1})$ for the PCA model $\M_q$ obtained after inserting copies of $A_1$, $A_2$, $A_4$, and $A_5$.}\label{fig:np-models}
\end{figure}

\section{Conclusions}

In this article we provided an improved algorithm for the minimal representation problem, while we proved that the minimum representation problem is \NP-complete.  A key contribution was to observe that the minimal length of a UCA model is determined by length of a maximum cycle in the synthetic graph obtained after a loop unrolling.  Loop unrolling is an old and simple technique born to improve the speed of computer programs.  It was already applied on the study coloring problems over circular-arc models, mainly for compiler design.  Here, instead, we use loop unrolling to understand the global structure of PCA and UCA models. We believe that the combination of loop unrolling with synthetic graphs provides a promising framework for further research.

We remark that even though many properties of UIG models extended naturally to UCA models, this is not always the case, as UCA models have a much richer structure than UIG models.  Indeed, most of the representation algorithms that generate a UIG model of an input graph do not extend to the circular case because Robert's PIG=UIG theorem does not hold in the circular case.  The fact that synthetic graphs behave so well in the circular case is a plus for this tool.  But, what is more surprising for us, is that we can translate the information in the circular structure into a linear one by unrolling $O(n)$ times the model.  

Finally, we mention that Pirlot's original definition of minimality is stronger than the one we discuss in this article.  Say that a $(\Circ,\Len,\Dist,\BegDist)$-CA model $\Unit$ with arcs $A_0 < \ldots < A_{n-1}$ is \emph{strongly} minimal when $\Unit$ is minimal and \textbf{globally} \emph{left justified}.  The latter means that $s(A_i) < s(A_i')$ for every $i \in \Range{n}$ and every $(\Circ',\Len',\Dist,\BegDist)$-CA model equivalent to $\Unit$ whose arcs are $A_1' < \ldots < A_n'$.  The minimal $\Desc$-CA model computed by Theorem~\ref{thm:algorithm} is \textbf{locally} left justified, as it satisfies the previous condition for those $\Desc$-CA models equivalent to $\Unit$~\cite{SoulignacJGAA2017,SoulignacJGAA2017a}.  It remains as an open problem to characterize when a UCA model is equivalent to a strongly UCA model.

\small

\end{document}